\newtheorem{thm}{Theorem}[section]
\newtheorem{cor}{Corollary}[section]
\newtheorem{prop}{Proposition}[section]
\newtheorem{rem}{Remark}[section]
\begin{document}
\numberwithin{equation}{section}

 \title[On   $3$-gauge   transformations, $3$-curvatures and $\mathbf{Gray}$-categories]{
On   $3$-gauge   transformations, $3$-curvatures and $\mathbf{Gray}$-categories}
\author {Wei Wang}
\begin{abstract} In the $3$-gauge theory, a   $3$-connection  is given by  a $1$-form $A$ valued in the Lie algebra $ \mathfrak g$, a $2$-form $B$ valued in the Lie algebra $\mathfrak h $ and  a $3$-form $C$ valued in the Lie algebra $ \mathfrak l $, where $(\mathfrak g,\mathfrak h, \mathfrak l)$ constitutes a differential $ 2$-crossed module. We give the $3$-gauge transformations from a $3$-connection     to another, and show the transformation formulae of   the $1$-curvature $2$-form,  the $2$-curvature $3$-form and the $3$-curvature $4$-form.  The gauge configurations can be interpreted as smooth $\mathbf{Gray}$-functors    between two $\mathbf{Gray}$ $3$-groupoids: the path $3$-groupoid $\mathcal{P}_3(X)$ and the  $3$-gauge group  $ \mathcal{G}^{\mathscr L}$  associated to the $ 2$-crossed module $\mathscr L$, whose differential is $(\mathfrak g,\mathfrak h, \mathfrak l)$.  The derivatives of $\mathbf{Gray}$-functors are $3$-connections, and the derivatives  of
lax-natural transformations between two such $\mathbf{Gray}$-functors are $3$-gauge transformations.    We   give the $3$-dimensional  holonomy, the lattice version of the $3$-curvature, whose derivative
gives the $3$-curvature $4$-form. The covariance   of   $3$-curvatures  easily follows from this construction. This $\mathbf{ Gray}$-categorical construction explains why $3$-gauge transformations and $3$-curvatures have the given forms. The interchanging $3$-arrows
are responsible for the appearance of terms concerning the Peiffer commutator $\{,\}$.
\end{abstract}

\thanks{Supported by National Nature Science
Foundation
  in China (No. 11171298)}
  \thanks{Department of Mathematics,
Zhejiang University, Zhejiang 310027,
 P. R. China, Email:   wwang@zju.edu.cn}
 \maketitle

\section{Introduction}

String theory and $M$-theory involve various higher gauge fields, such as the $B$-field  in string theory and the $C$-field  in $11$-dimensional $M$-theory.   They are locally given
by differential form fields of higher degree and   are globally modeled by higher
bundles with connections (higher gerbes with connections, higher differential characters)
(cf. \cite{SSS09} \cite{SSS} and references therein). In general, the   extended $n$-dimensional relativistic objects appearing in string theory
are usually coupled to background fields, which can naturally be
 the $n$-categorical version of fiber bundles with connections.

For nonabelian bundle gerbes  \cite{ACJ} or more generally principal $2$-bundles \cite{W}, there exists a
 framework of differential geometry:  $2$-connections and $2$-curvatures (cf. \cite{ACJ} \cite{BS} \cite{BH11}
  and references therein). There are lattice and differential formulations of the $2$-gauge theory \cite{GP04} \cite {P03}, which can be applied to some $M$-brane models \cite{PS},  BF theory \cite{GPP} \cite{MM11} and non-Abelian
self-dual tensor field theories \cite{SW12}, etc.. The next step is to develop the $3$-gauge theory, $3$-connections and $3$-curvatures for $3$-bundles or bundles $2$-gerbes
 \cite{Br94}  \cite{Br10} \cite{St} \cite{Ju}. The $3$-form
gauge potentials have already appeared in physics (cf. \cite{FSS} \cite{SW13} and references therein).

Recall that the lattice gauge theory can be formulated in   language of categories. Let $(V,E)$ be a directed graph, given by  a set $V$ of vertices and a set $E$ of
  edges.  Let $\mathcal{C}^{V ;E }$ be the associated category: the vertices as objects and the edges as arrows. Then   configurations of  lattice gauge theory are the  functors from the category $\mathcal{C}^{V ;E }$
to the   gauge group $ \mathcal{G}^{G }$,  the groupoid associated to the Lie group $G$  with one object. A gauge transformation  is a natural transformation  from one functor to another.

This construction was generalized to the  $2$-lattice gauge theory by Pfeiffer in \cite{P03}.  Consider a {\it  simplicial $2$-complex} $(V ; E; F )$   consisting of sets $V$, $E$ as above
 and a set $F$ of faces.
 There is an associated small $2$-category $\mathcal{C}^{V ;E;F }$: the vertices as objects, the edges as arrows and the faces as
    $2$-arrows. The Lie group is replaced by a crossed module $ \mathscr H= (\alpha: H \rightarrow G,\rhd)$, from which we can construct  a strict Lie $2$-group $\mathcal{G}^{\mathscr H }$. Then
  the configurations of the   $2$-lattice gauge theory are   $2$-functors from the generalized lattice $\mathcal{C}^{V ;E;F }$
to the $2$-gauge group $ \mathcal{G}^{\mathscr H}$, i.e., the edges are coloured by group elements of $G$  and the   faces are
coloured by elements of $G\times H$. A gauge transformation  is a pseudonatural transformation  from one $2$-functor to another. If we take the length of lattice tending to zero, we get the differential $2$-gauge theory \cite{GP04} .

To define the $3$-lattice gauge theory, we need to replace  a crossed module by a
  $2$-crossed module $\mathscr L$, which  is given by a complex of Lie groups:
\begin{equation}\label{eq:2-crossed-module}
     L \xrightarrow{\delta}H \xrightarrow{\alpha} G,
\end{equation}
with smooth left actions $\rhd$ of $G$  on $L$ and $H$ by automorphisms
 and a $G$-equivariant
smooth function (the Peiffer lifting) $\{,\}: H \times H \rightarrow L$,
and to construct the associated  $\mathbf{Gray}$ $3$-groupoid  $\mathcal{G}^{\mathscr L }$.
A {\it  simplicial $3$-complex} $(V ; E; F;T)$   consists of sets $V$, $E$, $F$  as above and a set $T$ of tetrahedrons.
 There is an associated small tricategory $\mathcal{C}^{V ;E;F;T}$:   objects,  $1$-arrows and $2$-arrows  as above,  and
 the tetrahedrons  as
    $3$-arrows.
  The configurations of our   $3$-lattice gauge theory will be    functors from the generalized lattice $\mathcal{C}^{V ;E;F;T}$
to the $3$-gauge group $ \mathcal{G}^{\mathscr L}$. Namely,  the edges and faces are
coloured as before and the  tetrahedrons are
coloured by elements of $G\times H\times L$. A gauge transformation  is a  lax-natural transformation  from one functor to another.

There is another similar, but more mathematical approach to this construction. There exists a bijection between connections and functors (play the role of holonomies)  \cite{SW08}:
\begin{equation*}\label{eq:bijection}
     \Lambda^1(X,\mathfrak g)\cong \left\{{\rm smooth \hskip 1mm functors\hskip 1mm} \mathcal{P}_1(X)\rightarrow \mathcal{G}^{G  }\right\},
\end{equation*}
where $\mathcal{P}_1(X)$ is the path groupoid of the manifold $X$, and $\Lambda^k(X,\mathfrak g)$ is
the set of $\mathfrak g$-valued $k$-forms on $X$. This is generalized to the $2$-gauge theory by Schreiber  and Waldorf \cite{SW11}: there exists a bijection  between $2$-connections and $2$-functors (play the role of $2$-dimensional holonomies):
\begin{equation*}\label{eq:bijection-2}
   \left \{{\rm smooth} \hskip 1mm  2{\rm -functors\hskip 1mm} \mathcal{P}_2(X)\rightarrow \mathcal{G}^{\mathscr H  }\right\}\cong\{ A\in   \Lambda^1(X,\mathfrak g), B\in\Lambda^1(X,\mathfrak h);dA+A\wedge A=\alpha(B) \} ,
\end{equation*}
where $\mathcal{P}_2(X)$ is the path $2$-groupoid  of a  manifold $X$ by adding $2$-arrows to the path groupoid $\mathcal{P}_1(X)$. See also \cite{MP02} \cite{MP10} \cite{MP11-2} for $1$- and $2$-dimensional   holonomies.

To construct the $3$-gauge theory, we will consider the path $3$-groupoid $\mathcal{P}_3(X)$   by adding $3$-arrows to the path $2$-groupoid $\mathcal{P}_2(X)$, and smooth  $\mathbf{Gray}$-functors (play the role of $3$-dimensional holonomies) from $\mathcal{P}_3(X)$  to $ \mathcal{G}^{\mathscr L}$ \cite{MP11}. Locally, a {\it $3$-connection} on an open set $U$ of $\mathbb{R}^n$   is a triple $(A,B,C)$ with  $A\in   \Lambda^1(U,\mathfrak g), B\in\Lambda^2(U,\mathfrak h)$ and  $C\in   \Lambda^3(U,\mathfrak l)$. A $3$-gauge transformation from a $3$-connection   $(A,B,C)$ to another one $(A',B',C')$ is given by
     \begin{equation}\label{eq:gauge-transformations}
     \begin{split}
       &          {A} '=Ad_{g^{-1} } A+g^{-1}dg+\alpha(\varphi )
      \\&
          {B}' = g^{-1} \rhd B+d\varphi+A'\wedge^{\rhd}\varphi-\varphi\wedge\varphi-\delta (\psi)
      \\&
         {C }'= g^{-1}  \rhd C - d\psi-A'\wedge^{\rhd }\psi +\varphi\wedge^{\rhd'}\psi- {B}'\wedge^{\{,\}}\varphi-\varphi\wedge^{\{,\}} (g^{-1} \rhd{B}),
     \end{split}\end{equation}
     for some $g\in \Lambda^0(U,G)$, $\varphi\in   \Lambda^1(U,\mathfrak h), \psi\in\Lambda^2(U,\mathfrak l)$.
Here $\mathfrak l \xrightarrow{\delta}\mathfrak h \xrightarrow{\alpha}\mathfrak g$ is  a   differential $ 2$-crossed module with smooth left actions $\rhd$ of $\mathfrak g$ on $\mathfrak h $  and    $\mathfrak l$  by automorphisms   and a left action $  \rhd'  $ of $\mathfrak h $  on $\mathfrak l$ (cf. \S 2  for notations).

The {\it $1$-curvature $2$-form,  $2$-curvature $3$-form}  and {\it $3$-curvature $4$-form} are defined as
\begin{equation}\label{eq:3-curvature}\begin{split}
   & \Omega_1:=dA+A\wedge A,\\&
    \Omega_2:=dB+A\wedge^{\rhd} B, \\&
     \Omega_3:=dC+A\wedge^{\rhd} C+B\wedge^{\{\cdot\}} B,
\end{split}\end{equation} respectively.
Under the $3$-gauge transformation (\ref{eq:gauge-transformations}), these curvatures transform as follows:
\begin{equation}\label{eq:gauge-transformations-curvature}\begin{split}
&\Omega_1'   =g^{-1} \rhd\Omega_1+  \alpha(  {B}')-\alpha(g^{-1} \rhd B),
  \\ &\Omega_2'   =g^{-1} \rhd\Omega_2+ [\Omega_1' - \alpha(     B')] \wedge^{\rhd}\varphi+ \delta(  {C}')-\delta(g^{-1} \rhd C),\\&
   \Omega_3'=g^{-1} \rhd
  \Omega_3   -[\Omega_2'-\delta( C')]\wedge^{\{,\}}\varphi+\varphi\wedge^{\{,\}}[g^{-1} \rhd(\Omega_2-\delta(   C ))]-[\Omega_1' - \alpha(    B')] \wedge^{\rhd }\psi.
\end{split}\end{equation}

 We define the {\it fake $1$-curvature} to be $\mathcal{F}_1=\Omega_1 - \alpha( B)$ and the {\it fake $2$-curvature} to be $\mathcal{F}_2=\Omega_2 - \delta( C)$. Then the $3$-curvature $4$-form is covariant under the gauge transformations (\ref{eq:gauge-transformations}) if the fake $1$-  and fake $2$-curvatures vanish.

In section 3, we give an elementary proof of the transformation formulae
(\ref{eq:gauge-transformations-curvature}) of curvatures. This proof, having nothing to do with $\mathbf{Gray}$-categories,
is based on some properties of actions of $\rhd$ and $\{,\}$ on Lie algebra valued differential forms. which are established in section 2.

$\mathbf{Gray}$-categories are
  semi-strict tricategories.
In a $\mathbf{Gray}$-category  there are   two possible
ways of composing two $2$-arrows horizontally
\begin{equation}\label{eq:interchanging-def}\xy
(-8,0)*+{C}="4";
(8,0)*+{C'}="6";
(24,0)*+{C}="8";
{\ar@{->}|-{f'} "4";"6"};
{\ar@/^1.85pc/^{f } "4";"6"};
{\ar@{->}|-{g } "6";"8"};
{\ar@/_1.85pc/_{g' } "6";"8"};{\ar@{=>}^<<{\scriptstyle \gamma} (0,6)*{};(0,1)*{}} ;
{\ar@{=>}^<<{\scriptstyle \delta} (16,-1)*{};(16,-6)*{}} ;
\endxy\Longrightarrow
\xy
(-8,0)*+{C}="4";
(8,0)*+{C'}="6";
(24,0)*+{C}="8";
{\ar@{->}|-{f} "4";"6"};
{\ar@/_1.85pc/_{f' } "4";"6"};
{\ar@{->}|-{g' } "6";"8"};
{\ar@/^1.85pc/^{g } "6";"8"};{\ar@{=>}^<<{\scriptstyle \gamma} (0,-1)*{};(0,-6)*{}} ;
{\ar@{=>}^<<{\scriptstyle \delta} (16,6)*{};(16,1)*{}} ;
\endxy
\end{equation}($ (\gamma \#_0 g) \#_1( f'\#_0 \delta)$ and $(f \#_0\delta) \#_1 (\gamma\#_0  g')$, cf. \S \ref{sub:Gray-3-category}),
which are the $2$-source and the $2$-target of an interchanging $3$-arrow, while in a $2$-category two such horizontal compositions are
identical   (cf. \S 2.13 in \cite{KV} and references therein for the pasting theorem for $2$-categories). This is an essential difference between $2$-categories and $\mathbf{Gray}$-categories. For the $2$-crossed module $\mathscr L$ in (\ref{eq:2-crossed-module}), $H \xrightarrow{\alpha} G$ is no longer a crossed module in general. The Peiffer lifting $\{,\}:H\times H\rightarrow L$ measures its failure to be a crossed module.
The interchanging $3$-arrows in $ \mathcal{G}^{\mathscr L}$ are given by $\{,\}$, and
are responsible for the appearance of terms $\{,\}$ in our formulae of gauge transformations and curvatures.

In section 4, we recall the definitions of  a $\mathbf{ Gray}$-category, the Gray $3$-groupoid constructed from the $ 2$-crossed module $\mathscr L$ and
lax-natural transformations between two $\mathbf{Gray}$-functors.

In section 5, for a given   lax-natural transformation between two $\mathbf{Gray}$-functors $F$ and $\widetilde{F}$ from the path $3$-groupoid $\mathcal{P}_3(X)$  to  the $3$-gauge group   $ \mathcal{G}^{\mathscr L}$,    the naturality of the lax-natural transformation gives us  an equation with $3$-parameters. We write down explicitly each side of the equation
as the composition of several $3$-arrows in the $\mathbf{Gray}$ $3$-groupoid $ \mathcal{G}^{\mathscr L}$. Then take the derivatives with respect to the parameters at the origin to get the   gauge  transformation formula for the $C$ field in (\ref{eq:gauge-transformations}). The  same is done for the $A$ and $B$ fields.
In this construction, we must have  $\mathcal{F}_1=\mathcal{F}_2=0$.

Similarly in section 6, we consider a $4$-path   $ {\Theta}$, whose boundary $\partial {\Theta}$ can be viewed
as the composition  of several $3$-paths. For a $\mathbf{Gray}$-functor  $F$ from   the path $3$-groupoid $\mathcal{P}_3(X)$  to  the $3$-gauge group   $ \mathcal{G}^{\mathscr L}$, $F(\partial {\Theta})$ is its $3$-dimensional  holonomy, the lattice version of the $3$-curvature. We write it explicitly
as a composition of several $3$-arrows in the  $3$-gauge group   $ \mathcal{G}^{\mathscr L}$, and take the derivatives with respect to the parameters at the origin to get the expression of the $3$-curvature in (\ref{eq:3-curvature}). The covariance   of the $3$-holonomies under   $3$-gauge transformations is easily found from this construction.

The correct definition of the $3$-curvature on a manifold $X$ was first appeared in \cite{MP11} as the $2$-curvature of a $2$-connection on the loop space of $ X $. The authors of \cite{MP11} used it to define the $3$-dimensional  holonomy, but did not discuss the $3$-gauge transformations.
When this paper was almost finished, we found the preprint \cite{SW13}, where the authors studied the Penrose-Ward transformation between solutions to self-dual $3$-gauge fields on the flat $6$-dimensional space $M$ and $M$-trivial holomorphic principal $3$-bundles over the
twistor space $\mathbb{CP}^6$ (they consider the supersymmetric version). By writing down  gauge  transformations of relatively flat  $3$-gauge fields on the correspondence space, they found the  gauge  transformations of $3$-gauge fields and  transformation formulae of curvatures on the $6$-dimensional space-time $M$. Then they claimed the  gauge  transformations of $3$-gauge fields and the     transformation formulae of curvatures in any dimension   (cf. (5.17), (5.22) in \cite{SW13}). The authors informed me that their method, based on the equivalence of the
Cech and Dolbeault pictures and solving  the Riemann-Hilbert problem, actually works  in any dimension  and   the proofs of (5.17) and (5.22) in \cite{SW13} is similar to the $6$-dimensional case.  In this paper we give a detailed proof of the gauge  transformations of curvatures. Moreover, with the help of   the Gray $3$-groupoid constructed from a $ 2$-crossed module, we see why   gauge  transformations   and     $3$-curvatures are given by (\ref{eq:gauge-transformations}) and (\ref{eq:3-curvature}), respectively.

In this paper, we only consider the local $3$-gauge theory. See \cite{SW13} for a discussion of $3$-connections on principal $3$-bundles and  their transformations under coordinate   transformations.

I would like to thank   anonymous referees for their careful reading and many valuable suggestions.
\section{$2$-crossed modules and Lie algebra valued differential  forms}
\subsection{$2$-crossed modules and differential $2$-crossed modules}
A {\it pre-crossed module} $ \mathcal{G} = (\alpha:H \rightarrow G,\rhd)$  of Lie groups  is given by a homomorphism
$\alpha : H  \rightarrow G $ between Lie groups, together with a smooth left action $\rhd$ of $G$ on $H $ by automorphisms such that
  $\alpha(g \rhd e) = g\alpha(e)g^{-1}$, for each $ g \in G$ and $e \in H $.
The {\it  Peiffer commutators} in a pre-crossed module are defined as
\begin{equation*}
     \langle e,f\rangle=ef e^{-1}\left(\alpha(e)\rhd f^{-1}\right),
\end{equation*}
 for any $e, f \in H $. A pre-crossed module is said to be a {\it crossed module} if all of its Peiffer commutators are trivial, i.e.
\begin{equation*}
\alpha(e)\rhd f  =ef e^{-1}.
\end{equation*}

A {\it $2$-crossed module}  of Lie groups  is given by a complex (\ref{eq:2-crossed-module}) of Lie groups
together with smooth left actions $\rhd$ of $G$  on $L$ and $H$  by automorphisms  (and on $G$ by conjugations), i.e.,
\begin{equation}\label{eq:automorphisms}
g\rhd(e_1e_2)=g\rhd e_1\cdot  g\rhd e_2 ,\qquad (g_1g_2)\rhd e= g_1\rhd (g_2 \rhd e),
\end{equation}
for any $g,g_1,g_2\in G, e, e_1,e_2\in H$ or $ L$,
 and a $G$-equivariant
smooth function $\{,\}: H \times H \rightarrow L$, the {\it Peiffer lifting}, such that
\begin{equation}\label{eq:G-Peiffer}
     g\rhd \{e, f \} = \{g \rhd e,g \rhd f \},
\end{equation}
for any $g \in G$ and $e, f \in H$.
They    satisfy:

1. $L \xrightarrow{\delta}H \xrightarrow{\alpha} G $ is a complex of $G$-modules, namely, $\delta$ and $\alpha$ are $G$-equivariant and $\alpha\circ \delta$ maps $L$ to $1_G$, the identity of $G$;

2. For each $e, f \in H$, we have $\delta\{e, f \} =\langle e, f \rangle$;

3. For each $l,k \in L$, we have  $ [l,k] = \{\delta (l), \delta(k)\}$, where $[l,k] =lkl^{-1}k^{-1}$;

4. For each $e, f , g\in H$, we have $ \{ef , g\} = \{e, f gf^{-1}\}\alpha(e) \rhd \{ f , g\}$;

5. For $e, f , g\in H$, we have $\{e, f g\} = \{e, f \}\{e, g\}\{\langle e, g\rangle^{-1}, \alpha(e) \rhd f \}$;

6. For each $  e\in H$  and $ l \in L$, we have  $ \{\delta(l), e\}\{e, \delta(l)\} =l(\alpha(e)\rhd l^{-1})$.

Define
\begin{equation*}
     e\rhd' l =l
\{
\delta(l)^{-1}, e\},
\end{equation*}
  where $l \in L $ and $e \in H$.
It is known   that $\rhd'$ is a left action of $H$ on $L$ by automorphisms.  This  together with the homomorphism $\delta : L\rightarrow H$  defines a crossed module \cite{MP11}. In particular, for any $h\in H$,
  \begin{equation}\label{eq:brackt-1}
  h\rhd'  1_L= \{1_H , h \}= \{h,1_H\} =1_L
  \end{equation} (Lemma 1.4 in \cite{MP11}),
  where $1_H$ and $1_L$ are the identity of $H$ and $L$, respectively.

A {\it differential  crossed module} is given by a homomorphism of Lie algebras
\begin{equation*}
     \mathfrak e \xrightarrow{\partial}\mathfrak g,
\end{equation*}
together with the smooth left action  $\rhd$ of $\mathfrak g$ on $\mathfrak e $   by automorphisms   (and on $\mathfrak g$ by the adjoint representation),   such that

1. For each $x \in   \mathfrak g $ and each $u,v \in   \mathfrak e $, we have $x\rhd [u ,v]=[x\rhd u ,v]+[u ,x\rhd v]$;

2.  For each $x ,y\in   \mathfrak g $ and each $u \in   \mathfrak e $, we have $[x,y]\rhd u  = x \rhd (y \rhd u )- y \rhd (x \rhd u ) $;

3. For each $x \in   \mathfrak g $ and each $u \in   \mathfrak e $, we have $\partial (x\rhd u )= [x ,\partial( u )]$;

4. For each     $u ,v\in   \mathfrak e $, we have $\partial (u)\rhd v= [u ,v]$.

A {\it differential $ 2$-crossed module} is given by a complex of Lie algebras
\begin{equation*}
    \mathfrak l \xrightarrow{\delta}\mathfrak h \xrightarrow{\alpha}\mathfrak g,
\end{equation*}
together with smooth left actions $\rhd$ of $\mathfrak g$ on $\mathfrak h $  and    $\mathfrak l$  by automorphisms   (and on $\mathfrak g$ by the adjoint representation), and a $\mathfrak g$-equivariant
smooth function $\{,\}: \mathfrak h \times\mathfrak h\rightarrow \mathfrak l$, the {\it Peiffer lifting},  such that

1. $\mathfrak l \xrightarrow{\delta}\mathfrak h \xrightarrow{\alpha}\mathfrak g$ is a complex of $\mathfrak g$-modules;

2. For each $u, v \in \mathfrak h$, we have $\delta\{u, v \} =\langle u, v \rangle$, where $\langle u, v \rangle=[u,v]-\alpha(u)\rhd v$;

3. For each $x,y\in   \mathfrak l $, we have $ [x,y] = \{\delta (x), \delta(y)\}$;

4. For each $u,v,w  \in \mathfrak h$, we have $ \{[u,v] , w\} = \alpha(u)\rhd\{ v  , w\} + \{ u,[v , w]\}-\alpha(v)\rhd\{ u  , w\} - \{ v,[u , w]\}$;

5. For each $u,v,w  \in \mathfrak h$, we have $\{ u,[v , w]\}=\{\delta\{u, v \}, w \}- \{\delta\{u, w \},v \}$;

6. For each $  x\in \mathfrak l$  and $ v \in \mathfrak h$, we have $ \{\delta(x), v\}+\{v,\delta(x) \}=-\alpha(v) \rhd  x $.

 \begin{prop}\label{prop:rhd'}{\rm(cf., e.g., Lemma 1.9 in \cite{MP11})}
     $v \rhd' x =-\{\delta(x), v\}$, for $  v \in \mathfrak h$ and $ x\in \mathfrak l$, defines a left action of $\mathfrak h $  on $\mathfrak l$,
  which together with the homomorphism  $ \delta : \mathfrak l\rightarrow \mathfrak h $ defines a differential crossed module.
 \end{prop}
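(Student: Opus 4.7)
The plan is to verify, one at a time, the four defining properties of a differential crossed module for the pair $(\delta,\rhd')$: that $\rhd'$ is a Lie algebra action of $\mathfrak{h}$ on $\mathfrak{l}$, that this action is by derivations, that $\delta$ is $\mathfrak{h}$-equivariant, and that the Peiffer identity $\delta(x)\rhd' y=[x,y]$ holds. A useful preliminary remark, immediate from axiom 3 of a differential $2$-crossed module together with $\alpha\circ\delta=0$, is that $\delta$ is a Lie algebra homomorphism from $\mathfrak{l}$ to $\mathfrak{h}$; I will use this freely.

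The two axioms involving $\delta$ are the easiest to handle. For equivariance $\delta(v\rhd' x)=[v,\delta(x)]$, I would compute $\delta(v\rhd' x)=-\delta\{\delta(x),v\}=-\langle\delta(x),v\rangle$ via axiom 2 and observe that the $\alpha(\delta(x))\rhd v$ term vanishes. For the Peiffer identity $\delta(x)\rhd' y=[x,y]$, apply axiom 6 with the $\delta$-argument equal to $\delta(x)$ and the $\mathfrak{h}$-argument equal to $\delta(y)$: the right-hand side $-\alpha(\delta(y))\rhd x$ vanishes, producing $\{\delta(y),\delta(x)\}=-\{\delta(x),\delta(y)\}=-[x,y]$ by axiom 3, and the sign in the definition of $\rhd'$ then gives $[x,y]$. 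For the Lie algebra action property, after using $\delta$-equivariance both iterated terms $v\rhd'(w\rhd' x)$ and $w\rhd'(v\rhd' x)$ become expressions of the form $\{\delta\{\delta(x),\cdot\},\cdot\}$, and their difference is precisely the right-hand side of axiom 5 applied with $u=\delta(x)$, matching $[v,w]\rhd' x$ up to a sign.

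The main obstacle is the derivation identity $v\rhd'[x,y]=[v\rhd' x,y]+[x,v\rhd' y]$. I would begin by expanding $v\rhd'[x,y]=-\{[\delta(x),\delta(y)],v\}$ via axiom 4, where two of the four terms vanish thanks to $\alpha\circ\delta=0$. On the other side, apply axiom 3 and the already-proved $\delta$-equivariance to rewrite $[v\rhd' x,y]+[x,v\rhd' y]$ as a sum of two Peiffer-lifted brackets. After cancelling an obviously matching pair, the residual identity reduces to $\{\delta(y),[\delta(x),v]\}+\{[\delta(x),v],\delta(y)\}=0$, which is exactly axiom 6 applied to the element $[\delta(x),v]=-\delta(v\rhd' x)$ lying in $\delta(\mathfrak{l})$ (again using $\alpha\circ\delta=0$ to kill the right-hand side of axiom 6). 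The delicate step throughout is the sign-tracking needed to align the outputs of axioms 4 and 6; once this bookkeeping is carried out correctly, no further ingredient is required.
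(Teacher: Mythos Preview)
Your proof is correct. The paper does not actually supply a proof of this proposition: it is stated with the parenthetical ``(cf., e.g., Lemma 1.9 in \cite{MP11})'' and nothing more, so there is no argument in the paper to compare against. Your verification of the four crossed-module axioms is sound; the only minor remark is that for the Peiffer identity $\delta(x)\rhd' y=[x,y]$ you can bypass axiom~6 entirely and use axiom~3 directly, since $\delta(x)\rhd' y=-\{\delta(y),\delta(x)\}=-[y,x]=[x,y]$.
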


\subsection{Lie algebra valued differential  forms}

Given a Lie algebra $\mathfrak k$, we denote by $\Lambda^k(U, \mathfrak k)$ the vector space of $\mathfrak k$-valued differential $k$-forms on $U$. For $K \in \Lambda^{r }(U, \mathfrak k) $, we can write $K=\sum_a K^aX_a $ for some scalar differential $k$-forms $K^a $  and elements $ X_a\in \mathfrak k$ (here $\{X_a\}$ need not to be a basis). We will choose $\mathfrak k$ to be $\mathfrak g $ or $\mathfrak h$ or $\mathfrak l$. Here we assume $\mathfrak k$  to be a   matrix Lie algebra. Thus we have $[X,X']=XX'-X'X$ for $ X,X'\in \mathfrak k$.

For $K=\sum_a K^aX_a\in \Lambda^{r }(U, \mathfrak k),M=\sum_b M^bX_b\in \Lambda^{t }(U, \mathfrak k)$, define
\begin{equation*}
     K\wedge M: =\sum_{a,b} K^a\wedge M^b X_a X_b,\qquad
     K\wedge^{[,]} M: =\sum_{a,b} K^a\wedge M^b [X_a, X_b],
\end{equation*}and define
\begin{equation*}
     dK=\sum dK^a X_a.
\end{equation*}
   For   $\Psi_j =\sum_b \Psi_j^b Y_b \in \Lambda^{k_j} (U, \mathfrak h)$, $j=1,2,$ where $ Y_b\in \mathfrak h$, define
\begin{equation}\label{eq:rhd}
     K \wedge^\rhd \Psi_j: =\sum_{a,b} K^a\wedge \Psi_j^b X_a\rhd Y_b,
\qquad
     \Psi_1\wedge^{\{,\}} \Psi_2:=\sum_{a,b} \Psi_1^a \wedge\Psi_2^b {\{Y_a,Y_b\}},
\end{equation}where $K$ is valued in $\mathfrak g$.
In a similar way   define $\Psi_1\wedge^{\langle,\rangle} \Psi_2,  K \wedge^\rhd \psi$ and $ \Psi_j  \wedge^{\rhd '}\psi$ for $\psi \in \Lambda^{* }(U, \mathfrak l)$. It is easy to see that the definitions above are independent of the choice of the expressions $K=\sum K^aX_a, $ etc., by linearity.
Here we use the notations in \cite{MP11}. In \cite{SW13}, $ \Psi_1\wedge^{\{,\}} \Psi_2$ is written as $ \{\Psi_1 ,\Psi_2\}  $.

The following proposition gives   properties for Lie algebra valued differential  forms corresponding to identities in the definition of a
differential $ 2$-crossed module.
    \begin{prop}\label{prop:2cross-form}  For  $\Psi\in \Lambda^k(U,\mathfrak h)$, $ {\Psi}'\in \Lambda^{  k'}(U,\mathfrak h)$, $\Phi\in \Lambda^1(U,\mathfrak h)$ and $\psi\in \Lambda^s(U,\mathfrak l)$, we have

    (1)
     $
        \alpha (  {\Psi}')\wedge^{\rhd}  \Psi
        =  {\Psi}'\wedge^{[,]}\Psi - {\Psi}' \wedge^{\langle,\rangle}  \Psi.
   $

(2)
       $-\alpha (\Psi)\wedge^{\rhd }\psi=
         \Psi \wedge^{\{\cdot\}} \delta (\psi) +(-1)^{ks}\delta (\psi) \wedge^{\{\cdot\}}\Psi.
     $

(3)
        $
         \delta ( \psi)\wedge^{\{,\}}\Psi=-(-1)^{ks} \Psi\wedge^{\rhd' }\psi.
     $

(4)
  $
     \Psi\wedge^{\{,\}}(\Phi\wedge\Phi)  = (\Psi \wedge^{ \langle,\rangle}   \Phi)\wedge^{\{,\}}\Phi.
  $

(5)
  $
 (\Phi\wedge\Phi)  \wedge^{\{,\}}  \Psi =\Phi\wedge^{\{,\}} ( \Phi\wedge^{[,]}  \Psi)+\alpha (\Phi)\wedge^\rhd( \Phi \wedge^{\{,\}}\Psi).
  $    \end{prop}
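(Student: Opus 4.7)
The plan is to reduce each of the five identities to the corresponding identity in the underlying differential $2$-crossed module by expanding each $\mathfrak g$-, $\mathfrak h$- or $\mathfrak l$-valued form as $\sum_a K^a Y_a$ and exploiting the bilinearity and equivariance of $\rhd$, $\{,\}$, $[,]$, $\delta$ and $\alpha$.

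For (1), I will expand $\alpha(\Psi')\wedge^\rhd \Psi$ and substitute $\alpha(Y_b)\rhd Y_a = [Y_b,Y_a] - \langle Y_b, Y_a\rangle$, which is just the definition of the Peiffer commutator in a pre-crossed module. For (2), after expanding both sides, I will apply identity $6$ of a differential $2$-crossed module, namely $\{\delta(x),v\}+\{v,\delta(x)\} = -\alpha(v)\rhd x$, with $x = Z_d$ and $v = Y_a$; the sign $(-1)^{ks}$ arises from passing $\Psi^a$ past $\psi^d$ in $\psi^d\wedge\Psi^a = (-1)^{ks}\Psi^a\wedge\psi^d$. Part (3) is in the same spirit: the defining relation $v\rhd' x = -\{\delta(x),v\}$ from Proposition~\ref{prop:rhd'} substitutes directly, and the same swap of wedge order produces the sign.

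For (4) and (5) the extra ingredient is that $\Phi$ is a $1$-form, so $\Phi^c\wedge\Phi^{c'}$ is antisymmetric in $(c,c')$; consequently $\Phi\wedge\Phi = \tfrac12\sum_{c,c'}\Phi^c\wedge\Phi^{c'}[Y_c,Y_{c'}]$ is $\mathfrak h$-valued and any summand symmetric in $(c,c')$ drops out. In (4), the left side expands to $\tfrac12 \sum \Psi^a\wedge\Phi^c\wedge\Phi^{c'}\{Y_a,[Y_c,Y_{c'}]\}$ and the right side to $\sum\Psi^a\wedge\Phi^c\wedge\Phi^{c'}\{\delta\{Y_a,Y_c\}, Y_{c'}\}$; antisymmetrizing the latter in $c,c'$ and applying identity $5$ closes the calculation. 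In (5), I apply identity $4$ to $\{[Y_c,Y_{c'}],Y_a\}$ to get four summands; the two of the form $\alpha(\cdot)\rhd\{\cdot,\cdot\}$ combine, via antisymmetry of $\Phi^c\wedge\Phi^{c'}$, into $\alpha(\Phi)\wedge^\rhd(\Phi\wedge^{\{,\}}\Psi)$, while the two of the form $\{\cdot,[\cdot,\cdot]\}$ combine into $\Phi\wedge^{\{,\}}(\Phi\wedge^{[,]}\Psi)$.

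I expect the main obstacle to be the bookkeeping in (5): one must verify that after invoking identity $4$ and antisymmetrizing only in the two $\Phi$-indices (leaving the $\Psi$-index untouched, since $\Psi$ need not be a $1$-form), the factor $\tfrac12$ coming from $\Phi\wedge\Phi$ is exactly absorbed by the pairing of terms, with no leftover coefficients. The same kind of manipulation appears already in (4), but the asymmetry between $\Phi$ and $\Psi$ in (5) makes signs and factors of $2$ easier to misplace.
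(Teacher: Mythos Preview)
Your proposal is correct and follows essentially the same approach as the paper: expand each form in components, invoke the corresponding axiom of the differential $2$-crossed module (axiom~2 for~(1), axiom~6 for~(2), the definition of $\rhd'$ for~(3), axiom~5 for~(4), axiom~4 for~(5)), and use the antisymmetry of $\Phi^c\wedge\Phi^{c'}$ in~(4) and~(5) to convert products into commutators. The only cosmetic difference is that the paper writes $\Phi\wedge\Phi$ via the matrix product $\sum \Phi^b\wedge\Phi^c\,Y_bY_c$ and then restricts to $b<c$, whereas you use $\tfrac12\sum\Phi^c\wedge\Phi^{c'}[Y_c,Y_{c'}]$; these are equivalent and your concern about the factor of $\tfrac12$ being absorbed is handled exactly as you describe.
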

    \begin{proof}
(1) Write  $\Psi =\sum_a \Psi^a Y_a, {\Psi}' =\sum_b  (\Psi')^b Y_b  \in \Lambda^k (U, \mathfrak h)$. Then we have
        \begin{equation*}\begin{split}
        \alpha ( \Psi')\wedge^{\rhd}  \Psi &=\sum_{a,b} (\Psi')^b\wedge\Psi^a  \alpha(   Y_b) \rhd  Y_a =\sum_{a,b} (\Psi')^b\wedge\Psi^a   ([Y_b, Y_a ]-\langle Y_b, Y_a \rangle)\\&
        =  {\Psi}'\wedge^{[,]}\Psi - {\Psi}' \wedge^{\langle,\rangle}  \Psi
  \end{split}\end{equation*}
  by using 2. in the definition of a differential $ 2$-crossed module.

(2) Write $   \psi  =\sum_c \psi^c Z_c \in \Lambda^k (U, \mathfrak l)$. Then,
\begin{equation*}\begin{split}
 -\alpha (\Psi)\wedge^{\rhd}\psi&=-\sum_{a,c} \Psi^a\wedge\psi^c  \alpha (Y_a)\rhd Z_c
  =\sum_{a,c} \Psi^a\wedge\psi^c (\{Y_a, \delta(Z_c)\}+\{\delta(Z_c), Y_a\})
  \\&=       \Psi \wedge^{\{\cdot\}} \delta (\psi) +(-1)^{ks}\delta (\psi) \wedge^{\{\cdot\}}\Psi
   \end{split}  \end{equation*}
by 6. in the definition of a differential $ 2$-crossed module.

(3)
   Let $\Psi$ and $\psi$ be as above. \begin{equation*}\begin{split}
        \delta( \psi)\wedge^{\{,\}}\Psi&=\sum_{a,c}  \psi^c\wedge\Psi^a\{\delta(Z_c), Y_a\}
        =-\sum_{a,c} \psi^c\wedge\Psi^a Y_a\rhd' Z_c\\&=-(-1)^{ks}  \sum_{a,c}  \Psi^a\wedge\psi^c Y_a\rhd' Z_c=-(-1)^{ks} \Psi\wedge^{\rhd' }\psi
     \end{split} \end{equation*}
    by  the definition of $\rhd'$ in Proposition \ref{prop:rhd'}.

(4)  Write $   \Phi =\sum_b\Phi^b Y_b \in \Lambda^k (U, \mathfrak h)$. Then,
 \begin{equation*}\begin{split}
      \Psi\wedge^{\{,\}}(\Phi\wedge\Phi)&=\sum_{a,b,c } \Psi^a \wedge\Phi^b\wedge\Phi^c  {\{Y_a,Y_bY_c\}} = \sum_{a, b<c} \Psi^a \wedge \Phi^b\wedge\Phi^c  {\{Y_a,[Y_b,Y_c]\}}\\&
      =\sum_{a,b<c} \Psi^a\wedge \Phi^b\wedge\Phi^c  ( \{\delta\{Y_a, Y_b\},Y_c \} - \{\delta\{Y_a, Y_c\},Y_b \} )\\&
      =\sum_{a,b,c } \Psi^a\wedge \Phi^b\wedge\Phi^c   \{\langle Y_a, Y_b\rangle,Y_c \} =(\Psi \wedge^{ \langle,\rangle}   \Phi)\wedge^{\{,\}}\Phi,
   \end{split}\end{equation*}
by 5. in the definition of a differential $ 2$-crossed module. Here $\Phi^b\wedge\Phi^c =-\Phi^c\wedge \Phi^b$ since $\Phi^*$'s are $1$-forms.

(5) Let $\Psi$ and $\Phi$ be as above. Then,
\begin{equation*}\begin{split}
      (\Phi&\wedge\Phi)\wedge^{\{,\}}\Psi=\sum_{a,b,c } \Phi^b\wedge\Phi^c\wedge \Psi^a {\{Y_bY_c ,Y_a\}} = \sum_{a  ,b<c} \Phi^b\wedge\Phi^c\wedge \Psi^a {\{[Y_b,Y_c],Y_a\}}\\&
      =\sum_{a ,b<c} \Phi^b\wedge\Phi^c\wedge \Psi^a \left( \{Y_b ,[ Y_c,Y_a] \}-\{Y_c ,[ Y_b,Y_a] \}+\alpha (Y_b) \rhd  \{  Y_c,Y_a  \}-\alpha (Y_c) \rhd  \{  Y_b,Y_a  \}\right) \\&
      =\sum_{a,b,c } \Phi^b\wedge\Phi^c\wedge \Psi^a ( \{Y_b ,[ Y_c,Y_a] \} +\alpha (Y_b) \rhd  \{  Y_c,Y_a  \} ) \\&=\Phi\wedge^{\{,\}} ( \Phi\wedge^{[,]}  \Psi)+\alpha (\Phi)\wedge^\rhd( \Phi \wedge^{\{,\}}\Psi),
   \end{split}\end{equation*}
by 4. in the definition of a differential $ 2$-crossed module.
    \end{proof}

 \begin{prop}\label{prop:2cross-form-2} (1) For $  K  \in \Lambda^{t}(U,\mathfrak g)$, $  M  \in \Lambda^{s}(U,\mathfrak g)$  and $  \psi  \in \Lambda^{*}(U,\mathfrak l)$,
\begin{equation} \label{eq:wedge-rhd}\begin{split}
    \delta (K \wedge^{\rhd}\psi)&= K \wedge^{\rhd}  \delta (\psi),
\\
     K\wedge^{\rhd} M&=K\wedge M-(-1)^{ts} M \wedge K.\end{split}
\end{equation}
   (2)   For $  K  \in \Lambda^{t}(U,\mathfrak g)$, $  \Psi_j\in \Lambda^{k_j}(U,\mathfrak h)$, $j=1,2$, and $ \gamma\in \Lambda^{*}(U,\mathfrak k)$ ($\mathfrak k=\mathfrak g$, $\mathfrak h$, $\mathfrak l$),
      \begin{equation*}\begin{split}
     d ( K\wedge^{\rhd}  \gamma) &=   d K \wedge^{\rhd}  \gamma+(-1)^{t} K\wedge^{\rhd}   d \gamma ,\\ d ( \Psi_1\wedge^{\{\cdot\}}  \Psi_2)&=  (d \Psi_1)\wedge^{\{\cdot\}}  \Psi_2+(-1)^{k_1} \Psi_1\wedge^{\{\cdot\}}   d \Psi_2 ,\\
    K \wedge^{\rhd }( \Psi_1\wedge^{\{\cdot\}}  \Psi_2)&=  ( K \wedge^{\rhd }\Psi_1)\wedge^{\{\cdot\}}  \Psi_2+(-1)^{tk_1} \Psi_1\wedge^{\{\cdot\}} ( K\wedge^{\rhd } \Psi_2),\\
    K \wedge^{\rhd }( \Psi_1\wedge   \Psi_2)&=  ( K \wedge^{\rhd }\Psi_1)\wedge   \Psi_2+(-1)^{tk_1} \Psi_1\wedge  ( K\wedge^{\rhd } \Psi_2).
   \end{split}\end{equation*}
(3)    For $K,M\in   \Lambda^1(U,\mathfrak g)$, $\psi\in \Lambda^2(U,\mathfrak l)$,
      \begin{equation*}
         K\wedge^{\rhd }(M \wedge^{\rhd }\psi)+ M\wedge^{\rhd }( K\wedge^{\rhd }\psi)= (K\wedge M + M \wedge K )\wedge^{\rhd }\psi.
      \end{equation*}
(4)
For $  \psi \in \Lambda^2(U,\mathfrak l)$ and $  \Psi  \in \Lambda^{1}(U,\mathfrak h)$,
\begin{equation}\label{eq:vanish}
       \delta (\psi) \wedge^{\{\cdot\}} \delta (\psi)=  0,\qquad
     (\Psi\wedge\Psi )\wedge^{ [,] } \Psi=  0.
     \end{equation}
    (5) (Equivariance) For $K \in   \Lambda^*(U,\mathfrak g)$, $\Psi \in  \Lambda^*(U,\mathfrak h)$ and $\Upsilon\in \Lambda^*(U,\mathfrak k)$ ($\mathfrak k=\mathfrak g,\mathfrak h,\mathfrak l$),
      \begin{equation}\label{eq:equivariance1} \begin{split}
       ( g  \rhd \Psi) \wedge^{\{\cdot\}}( g \rhd \Psi)& = g \rhd \left(\Psi \wedge^{\{\cdot\}}   \Psi\right),
      \\
     Ad_g K \wedge^\rhd (g \rhd \Upsilon)& =   g \rhd (K \wedge^\rhd \Upsilon).
      \end{split}\end{equation}
    \end{prop}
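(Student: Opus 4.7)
The plan is to handle all five parts by expanding every Lie-algebra-valued form in components $K=\sum_a K^aX_a$, $\Psi=\sum_a\Psi^aY_a$, $\psi=\sum_c\psi^cZ_c$, etc., so that each identity reduces to a purely algebraic statement in the differential $2$-crossed module together with elementary sign/commutation properties of scalar wedge products. This follows exactly the pattern already used in Proposition \ref{prop:2cross-form}, so each item amounts to identifying which axiom of the differential $2$-crossed module supplies the Lie-algebraic step.

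For part (1), the first formula is immediate from the $\mathfrak g$-equivariance of $\delta$, built into the definition of the complex of $\mathfrak g$-modules. The second is the standard identity $X\rhd Y=[X,Y]=XY-YX$ for the adjoint action of $\mathfrak g$ on itself, combined with the Koszul sign rule $M\wedge K=(-1)^{ts}\sum_{a,b}K^a\wedge M^b\,X_bX_a$. Part (2) consists of graded Leibniz identities: each follows from the usual Leibniz rule for scalar forms together with bilinearity of $\rhd$ and $\{,\}$ on the Lie-algebra factors; the only care needed is reassociating the triple wedge so that the differential form of degree $k_1$ is moved past $d$ or past the second form factor with the correct sign $(-1)^{t}$ or $(-1)^{tk_1}$.

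Part (3) is driven by axiom 2.\ in the definition of a differential crossed module, $[X,Y]\rhd Z=X\rhd(Y\rhd Z)-Y\rhd(X\rhd Z)$. Since $K,M$ are $1$-forms, relabeling $a\leftrightarrow b$ in $M\wedge^{\rhd}(K\wedge^{\rhd}\psi)$ produces a minus sign, so the left-hand side collapses to $\sum K^a\wedge M^b\wedge\psi^c\,[X_a,X_b]\rhd Z_c$, and the right-hand side expands via $(X_aX_b-X_bX_a)\rhd Z_c$ (using the second line of (\ref{eq:wedge-rhd})) to the same expression. For part (4), the first identity uses axiom 3., $[Z_c,Z_d]=\{\delta(Z_c),\delta(Z_d)\}$, to rewrite the expression as $\sum\psi^c\wedge\psi^d[Z_c,Z_d]$, which vanishes because $\psi^c\wedge\psi^d$ is symmetric in $c,d$ (the $\psi$'s are $2$-forms) while $[\cdot,\cdot]$ is antisymmetric. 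The second identity is the familiar Maurer--Cartan vanishing: after expanding $[Y_aY_b,Y_c]=Y_aY_bY_c-Y_cY_aY_b$ and cyclically relabeling indices in the second sum, the total antisymmetry of $\Psi^a\wedge\Psi^b\wedge\Psi^c$ (all factors being $1$-forms) makes the two terms cancel.

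Part (5) follows from the $G$-equivariance (\ref{eq:G-Peiffer}) of the Peiffer lifting for the first line, and from the infinitesimal compatibility $(\mathrm{Ad}_g X)\rhd(g\rhd v)=g\rhd(X\rhd v)$ for the second, which holds because the $\mathfrak g$-action on $\mathfrak h$ (or $\mathfrak l$) is the derivative at the identity of the smooth $G$-action by automorphisms. I do not anticipate any genuine obstacle here; the only real bookkeeping lies in parts (3) and (4), where Koszul signs and the re-indexing of the summations have to be tracked consistently with the parities of the scalar form factors.
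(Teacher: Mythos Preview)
Your proposal is correct and follows essentially the same component-expansion strategy as the paper's proof, invoking the same axioms at each step. One small sharpening: in part~(2), the third and fourth identities need more than bilinearity of $\rhd$ and $\{,\}$ --- you must use the infinitesimal $\mathfrak g$-equivariance $X\rhd\{Y,Z\}=\{X\rhd Y,Z\}+\{Y,X\rhd Z\}$ (the derivative of (\ref{eq:G-Peiffer})) and the derivation property $X\rhd(YZ)=(X\rhd Y)Z+Y(X\rhd Z)$, exactly as the paper does; once you make that explicit, your argument matches the paper's line by line.
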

  \begin{proof} (1) Write $   K  =\sum_a K^a X_a$, $   M  =\sum_b M^b X_b $ and $   \psi  =\sum_c \psi^c Z_c$. By $\delta$ being a $\mathfrak g$-homomorphism,
   \begin{equation*}   K \wedge^{\rhd}  \delta (\psi)=\sum_{a,b} K^a\wedge \psi^b  X_a \rhd  \delta ( Z_b)=\sum_{a,b} K^a\wedge \psi^b \delta ( X_a \rhd Z_b)=\delta (K \wedge^{\rhd}\psi).
   \end{equation*}
 Since    $X_a\rhd X_b =[X_a,X_b ]$ ($\rhd$ acting on $\mathfrak g$ by the adjoint action), we have
\begin{equation*}
     K \wedge^{\rhd}  M=\sum_{a,b} K^a\wedge M^b  X_a \rhd X_b  =\sum_{a,b} K^a\wedge M^b  (X_a X_b - X_b  X_a)=K\wedge M-(-1)^{ts} M \wedge K.
     \end{equation*}

(2) Write $   \Psi_j =\sum_b\Psi_j^b Y_b \in \Lambda^k (U, \mathfrak h)$. Then
\begin{equation*}\begin{split}
         d ( \Psi_1\wedge^{\{\cdot\}}  \Psi_2)&=d \sum_{a,b}\Psi_1^a\wedge\Psi_2^b  \{ Y_a, Y_b   \}
         =\sum_{a,b}d\Psi_1^a\wedge\Psi_2^b  \{ Y_a, Y_b   \}+(-1)^{k_1} \sum_{a,b}\Psi_1^a\wedge d\Psi_2^b  \{ Y_a, Y_b   \}\\&=   d \Psi_1 \wedge^{\{\cdot\}}  \Psi_2+(-1)^{k_1} \Psi_1\wedge^{\{\cdot\}}  d \Psi_2 ,
      \end{split} \end{equation*}and
\begin{equation*}\begin{split}
         K \wedge^{\rhd }\left( \Psi_1\wedge^{\{\cdot\}}  \Psi_2\right)&=\sum_{a,b,c}K^c\wedge \Psi_1^a\wedge\Psi_2^b  X_c \rhd\{ Y_a, Y_b   \}\\&=\sum_{a,b,c}K^c\wedge \Psi_1^a\wedge\Psi_2^b(\{X_c \rhd Y_a, Y_b   \}+\{ Y_a,X_c \rhd Y_b   \}),
                     \end{split} \end{equation*}
    by   $\mathfrak g$-equivariance of $\{,\}$ from (\ref{eq:G-Peiffer}). The proofs of the other identities are similar.

     (3) Since $\rhd$ is a left action of $\mathfrak g$ on $\mathfrak l$ from (\ref{eq:automorphisms}), we have
\begin{equation*}\begin{split}
K\wedge^{\rhd }(M \wedge^{\rhd }\psi)+ M\wedge^{\rhd }( K\wedge^{\rhd }\psi)&= \sum_{a,b,c}K^a\wedge M^b\wedge \psi^c
(X_a\rhd(X_b\rhd Z_c)-X_b\rhd(X_a\rhd Z_c))\\&
=\sum_{a,b,c}K^a\wedge M^b \wedge\psi^c
[X_a,X_b]\rhd Z_c.
      \end{split} \end{equation*}

(4) \begin{equation*}
       \delta (\psi) \wedge^{\{\cdot\}} \delta (\psi)=  \sum_{a,b} \psi^a\wedge \psi^b\{\delta(Z_a), \delta(Z_b)\}=\sum_{a,b} \psi^a\wedge \psi^b[Z_a ,  Z_b ]
     \end{equation*}
   by 3. in the definition of a differential $ 2$-crossed module. Here $\psi^a\wedge\psi^b = \psi^b\wedge \psi^a$ since they are $2$-forms. It must vanish. And
\begin{equation*}\begin{split}
        (\Psi\wedge\Psi ) \wedge^{ [,] } \Psi &=  \sum_{a,b,c} \Psi^a \wedge \Psi^b\wedge \Psi^c [  Y_a Y_b , Y_c] =  \sum_{a,b,c}  \Psi^a \wedge \Psi^b\wedge \Psi^c(   Y_a Y_bY_c  -Y_c Y_a  Y_b )=  0,
      \end{split} \end{equation*}
     by  $  \Psi^a \wedge \Psi^b\wedge \Psi^c = \Psi^c \wedge\Psi^a \wedge \Psi^b  $, since $\Psi^*$'s are $1$-forms.

 (5)
       \begin{equation*}  \begin{split}
       ( g \rhd \Psi) \wedge^{\{\cdot\}}( g \rhd \Psi)&=\sum_{a,b} \Psi^a\wedge\Psi^b \{g \rhd Y_a,g \rhd Y_b\} =\sum_{a,b}\Psi^a\wedge \Psi^b g \rhd\{   Y_a,   Y_b\}= g \rhd \left(\Psi \wedge^{\{\cdot\}}   \Psi\right)
      \end{split}  \end{equation*}
      by $G$-equivariance of $ {\{\cdot\}} $   in (\ref{eq:G-Peiffer}). And
\begin{equation*}  \begin{split}
     Ad_g K \wedge^\rhd (g \rhd  \Upsilon)&=\sum_{a,b}  K^a\wedge \Upsilon^b (  {g }X_ag^{-1})\rhd (g \rhd  Z_b)=\sum_{a,b} K^a\wedge \Upsilon^b  ( g  X_a)\rhd   Z_b \\&=\sum_{a,b} K^a\wedge \Upsilon^b   g  \rhd(X_a \rhd   Z_b )=   g \rhd (K \wedge^\rhd \Upsilon),
      \end{split}\end{equation*}
by $     T \rhd (g  \rhd S)=(  T g )\rhd   S $, which follows from
$
     (g_1g_2)\rhd S= g_1\rhd(g_2 \rhd S).
$
         \end{proof}
    \begin{cor}\label{cor:2cross-form} For $  \Psi \in \Lambda^1(U,\mathfrak h)$   and $ A  \in \Lambda^{1}(U,\mathfrak g)$, we have
\begin{equation*}\begin{split}
     (\alpha(\Psi)\wedge {A} + {A} \wedge\alpha(\Psi))\wedge^{\rhd }\Psi
     &=(  {A} \wedge^{\rhd }
     \Psi)\wedge\Psi-\Psi\wedge(  {A} \wedge^{\rhd }
     \Psi)-(  {A} \wedge^{\rhd }
     \Psi)\wedge^{\langle,\rangle}\Psi  .
\end{split} \end{equation*}
    \end{cor}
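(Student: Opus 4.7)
The plan is to apply Proposition \ref{prop:2cross-form}(1) with the choice $\Psi'=A\wedge^\rhd\Psi\in\Lambda^2(U,\mathfrak h)$. That proposition, applied with this $\Psi'$ and with $\Psi$ of degree $k=1$, yields
\begin{equation*}
\alpha(A\wedge^\rhd\Psi)\wedge^{\rhd}\Psi \;=\; (A\wedge^\rhd\Psi)\wedge^{[,]}\Psi \;-\; (A\wedge^\rhd\Psi)\wedge^{\langle,\rangle}\Psi.
\end{equation*}
The last term already matches the corresponding term on the right-hand side of the claim, so the task reduces to identifying the left-hand side of this identity with $(\alpha(\Psi)\wedge A+A\wedge\alpha(\Psi))\wedge^{\rhd}\Psi$, and rewriting $(A\wedge^\rhd\Psi)\wedge^{[,]}\Psi$ as the difference $(A\wedge^\rhd\Psi)\wedge\Psi-\Psi\wedge(A\wedge^\rhd\Psi)$.

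First I would deal with the $\alpha$-term. Since $\alpha$ is $\mathfrak g$-equivariant, exactly the same local computation used to prove $\delta(K\wedge^\rhd\psi)=K\wedge^\rhd\delta(\psi)$ in (\ref{eq:wedge-rhd}) shows that $\alpha(A\wedge^\rhd\Psi)=A\wedge^\rhd\alpha(\Psi)$. Then, because $\alpha(\Psi)\in\Lambda^1(U,\mathfrak g)$ and $A\in\Lambda^1(U,\mathfrak g)$, the second identity of (\ref{eq:wedge-rhd}) (with $t=s=1$) gives
\begin{equation*}
A\wedge^\rhd\alpha(\Psi)\;=\;A\wedge\alpha(\Psi)-(-1)^{1}\alpha(\Psi)\wedge A\;=\;A\wedge\alpha(\Psi)+\alpha(\Psi)\wedge A,
\end{equation*}
which is exactly the $\mathfrak g$-valued $2$-form appearing on the left-hand side of the corollary.

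For the bracket term I would just note that, since $\mathfrak h$ is assumed to be a matrix Lie algebra, $[Y_a',Y_b]=Y_a'Y_b-Y_bY_a'$ and therefore $\Psi'\wedge^{[,]}\Psi=\Psi'\wedge\Psi-(-1)^{k'k}\Psi\wedge\Psi'$ by the same reordering-of-scalar-forms computation used throughout \S 2.2. With $k'=2, k=1$ the sign is $+1$, so
\begin{equation*}
(A\wedge^\rhd\Psi)\wedge^{[,]}\Psi\;=\;(A\wedge^\rhd\Psi)\wedge\Psi-\Psi\wedge(A\wedge^\rhd\Psi),
\end{equation*}
which combined with the previous step yields precisely the right-hand side of the corollary.

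There is essentially no obstacle; the only point that requires a line of justification is the identity $\alpha(A\wedge^\rhd\Psi)=A\wedge^\rhd\alpha(\Psi)$, which is not literally stated in Proposition \ref{prop:2cross-form-2} but is proved verbatim as the $\alpha$-analogue of the $\delta$-identity in (\ref{eq:wedge-rhd}). The corollary is thus a direct specialization of Proposition \ref{prop:2cross-form}(1) obtained by plugging the $\mathfrak h$-valued $2$-form $A\wedge^\rhd\Psi$ into the slot $\Psi'$, and all remaining manipulations are applications of identities already established in Propositions \ref{prop:2cross-form} and \ref{prop:2cross-form-2}.
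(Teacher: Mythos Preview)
Your proof is correct and in fact cleaner than the paper's. The paper proves the corollary by a direct local computation: it writes $\alpha(\Psi)\wedge A + A\wedge\alpha(\Psi)=\sum_{a,b}\Psi^b\wedge A^a\,[\alpha(Y_b),X_a]$, expands $[\alpha(Y_b),X_a]\rhd Y_c$ via the action axiom $[x,y]\rhd u = x\rhd(y\rhd u)-y\rhd(x\rhd u)$, substitutes $\alpha(Y_b)\rhd Y_c=[Y_b,Y_c]-\langle Y_b,Y_c\rangle$, and then uses the derivation properties $X_a\rhd[Y_b,Y_c]=[X_a\rhd Y_b,Y_c]+[Y_b,X_a\rhd Y_c]$ and $X_a\rhd\langle Y_b,Y_c\rangle=\langle X_a\rhd Y_b,Y_c\rangle+\langle Y_b,X_a\rhd Y_c\rangle$ together with the antisymmetry of $\Psi^b\wedge\Psi^c$ to collapse four terms down to two. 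Your route bypasses all of this by first packaging the left-hand side as $\alpha(A\wedge^\rhd\Psi)\wedge^\rhd\Psi$ (via $\mathfrak g$-equivariance of $\alpha$ and the second identity of (\ref{eq:wedge-rhd})) and then invoking Proposition~\ref{prop:2cross-form}(1) once. What you gain is brevity and conceptual clarity; what the paper's approach buys is that it never needs the auxiliary identity $\alpha(A\wedge^\rhd\Psi)=A\wedge^\rhd\alpha(\Psi)$, which, as you note, is the $\alpha$-analogue of (\ref{eq:wedge-rhd}) and is immediate but not literally recorded.
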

  \begin{proof}  Write $A =\sum_a   A^a X_a$ and $ {\Psi} =\sum_b   \Psi^b Y_b $. Then,
    \begin{equation*}\begin{split}
     (\alpha(\Psi)\wedge {A} & + {A} \wedge\alpha(\Psi))\wedge^{\rhd }\Psi
     =\sum_{a,b,c}  \Psi^b\wedge A^a \wedge \Psi^c ([\alpha(   Y_b),  X_a] \rhd Y_c)\\&=\sum_{a,b,c}  \Psi^b\wedge A^a \wedge \Psi^c ( \alpha(   Y_b) \rhd(  X_a  \rhd Y_c)-X_a  \rhd  ( \alpha(   Y_b) \rhd Y_c))\\&
     =\sum_{a,b,c}  \Psi^b\wedge A^a \wedge \Psi^c ( [  Y_b , X_a  {\rhd } Y_c)]-\langle Y_b , X_a  {\rhd } Y_c \rangle- X_a   {\rhd } [ Y_b ,Y_c]+X_a   {\rhd } \langle Y_b ,Y_c\rangle)
     \\&= \sum_{a,b,c}  \Psi^b\wedge A^a \wedge \Psi^c (-[ X_a {\rhd } Y_b , Y_c ]+\langle X_a  {\rhd }Y_b , Y_c \rangle) \\&
     =(  {A} \wedge^{\rhd }
     \Psi)\wedge\Psi-\Psi\wedge(  {A} \wedge^{\rhd }
     \Psi)-(  {A} \wedge^{\rhd }
     \Psi)\wedge^{\langle,\rangle}\Psi.
\end{split}\end{equation*}
Here  $X_a   {\rhd } [ Y_b ,Y_c]= [X_a   {\rhd } Y_b ,Y_c] +[ Y_b ,X_a   {\rhd }Y_c]$ by $\rhd$ acting on $\mathfrak h$   as automorphisms, and
\begin{equation*}\begin{split}
X_a  {\rhd }\langle Y_b , Y_c \rangle&=X_a  {\rhd }\delta\{ Y_b , Y_c \}=\delta (X_a  {\rhd }\{ Y_b , Y_c \})\\&=\delta (\{ X_a  {\rhd }Y_b , Y_c \} +\{ Y_b ,    X_a{\rhd } Y_c \})=\langle X_a  {\rhd }Y_b , Y_c \rangle + \langle Y_b ,  X_a{\rhd } Y_c \rangle .
\end{split}\end{equation*}
The corollary is proved.
\end{proof}

\section{Covariance of curvatures under the  $3$-gauge transformations}
\subsection{ Three kinds of  gauge transformations}

There are three kinds of $3$-gauge transformations. The  {\it  $3$-gauge transformation  of the first kind}:
  \begin{equation} \label{eq:trans1} \begin{split}
        & A'=Ad_{g^{-1}} A+g^{-1}dg, \\&
         B'= g^{-1} \rhd B , \\&
       C'= g^{-1} \rhd C  ,
      \end{split}\end{equation}
      the  {\it $3$-gauge transformation  of the  second kind}:
  \begin{equation} \label{eq:trans2} \begin{split}
        & A'=A  + \alpha(\varphi ), \\&
         B'=  B+d\varphi+A'\wedge^{\rhd}\varphi-\varphi\wedge\varphi , \\&
       C'=   C  - {B}'\wedge^{\{,\}}\varphi-\varphi\wedge^{\{,\}} {B},
      \end{split}\end{equation}
and the {\it $3$-gauge transformation  of the  third kind}:
    \begin{equation}\label{eq:trans3} \begin{split}&  A'=A , \\&
B'=  {   {B}}  -\delta(\psi),
     \\&  C'=  { {C }}- d\psi-A'\wedge^{\rhd }\psi.
    \end{split} \end{equation}

If we write a  $3$-gauge transformation  of the  second kind as
 \begin{equation} \label{eq:trans2'} \begin{split}
        & A''=A'  + \alpha(\varphi ), \\&
         B''= B'+d\varphi+A''\wedge^{\rhd}\varphi-\varphi\wedge\varphi , \\&
       C''=   C'  - {B}''\wedge^{\{,\}}\varphi-\varphi\wedge^{\{,\}} {B}',
      \end{split}\end{equation}
and a $3$-gauge transformation  of the  third kind as
 \begin{equation}\label{eq:tilde'}\begin{split}&
     \widetilde{A}=A'' ,\\&
\widetilde{B}=  {   {B}}''  -\delta(\psi),\\&
        \widetilde{C}=  { {C }}''- d\psi-\widetilde{A} \wedge^{\rhd }\psi ,
\end{split}\end{equation}
then the composition of (\ref{eq:trans1}), (\ref{eq:trans2'}) and (\ref{eq:tilde'}) gives
\begin{equation*} \begin{split}
    & \widetilde{A}=A'  + \alpha(\varphi )=Ad_{g^{-1}} A+g^{-1}dg + \alpha(\varphi ),\\&
   \widetilde{B}=   {   {B}}''  -\delta(\psi)=g^{-1} \rhd B +d\varphi+\widetilde{A} \wedge^{\rhd}\varphi-\varphi\wedge\varphi  -\delta(\psi),
    \end{split}\end{equation*}
and
     \begin{equation*}\begin{split}
         \widetilde{C} &=  C'  - {B}''\wedge^{\{,\}}\varphi-\varphi\wedge^{\{,\}} {B}' - d\psi-\widetilde{A} \wedge^{\rhd }\psi  \\&= g^{-1} \rhd   C- \left(\widetilde{{B}}+\delta(\psi)\right) \wedge^{\{,\}}\varphi-\varphi\wedge^{\{,\}} (g^{-1} \rhd  {B}) - d\psi-\widetilde{A} \wedge^{\rhd }\psi\\&
                  =  g^{-1} \rhd  C - d\psi-\widetilde{A} \wedge^{\rhd }\psi -  \widetilde{{B}}  \wedge^{\{,\}}\varphi-\varphi\wedge^{\{,\}} (g^{-1} \rhd  {B}) +\varphi\wedge^{\rhd'}  \psi
        \end{split} \end{equation*}
        by using Proposition \ref{prop:2cross-form} (3).  Namely, the composition of (\ref{eq:trans1}), (\ref{eq:trans2'}) and (\ref{eq:tilde'}) gives the general $3$-gauge transformation  (\ref{eq:gauge-transformations}).

\subsection{Covariance of   $1$-curvature $2$-forms and   $2$-curvature $3$-forms}
For the $1$-curvature, under the $3$-gauge transformation (\ref{eq:gauge-transformations}),
\begin{equation}\label{eq:Omega1'}\begin{split}
     dA'+A'\wedge A'=& dg^{-1}\wedge A g+g^{-1}dA g-g^{-1}A \wedge dg+dg^{-1} \wedge d g+ \alpha(d\varphi)\\&
     +(g^{-1} A g+g^{-1} dg+  \alpha( \varphi))\wedge (g^{-1} A g+g^{-1} dg+  \alpha( \varphi))\\=&
g^{-1}( dA +A \wedge A)g+ \alpha(d\varphi)+A'\wedge \alpha( \varphi)+\alpha( \varphi)\wedge  A'- \alpha( \varphi\wedge \varphi)\\=&
 g^{-1} \rhd\Omega_1+  \alpha(  {B}')-\alpha(g^{-1} \rhd B),
\end{split} \end{equation}
by $\alpha( \delta(\psi))=0$. Here $A'\wedge \alpha( \varphi)+\alpha( \varphi)\wedge  A'= A'\wedge^{\rhd}\alpha( \varphi)$ by the second identity of (\ref{eq:wedge-rhd}) in Proposition \ref{prop:2cross-form-2}.

For the $2$-curvature, under the $3$-gauge transformation (\ref{eq:gauge-transformations}), we find that
\begin{equation}\label{eq:Omega2'}\begin{split}
     dB'+A'\wedge^{\rhd} B'=& dg^{-1} \wedge^{\rhd} B+g^{-1}\rhd dB+dA'  \wedge^{\rhd}  \varphi-A'  \wedge^{\rhd} d \varphi- d \varphi\wedge \varphi+\varphi\wedge d \varphi\\&-\delta(d\psi) +A'\wedge^{\rhd}  \left(g^{-1}\rhd B+d\varphi+A'\wedge^{\rhd}\varphi-\varphi\wedge\varphi-\delta(\psi)\right)
      \\=&g^{-1}\rhd (dB +A \wedge^{\rhd} B )+(dA' +A'\wedge A') \wedge^{\rhd}\varphi+\alpha( \varphi)\wedge^{\rhd} ( g^{-1}\rhd  B) \\&- d \varphi\wedge \varphi+\varphi\wedge d \varphi  -A'\wedge^{\rhd}   (\varphi\wedge\varphi) -\delta(d\psi +A'\wedge^{\rhd}   \psi )\\
      =&g^{-1}\rhd \Omega_2+ \Omega_1' \wedge^{\rhd}\varphi- d\varphi\wedge^{[,] }\varphi -(A'\wedge^{\rhd}   \varphi)\wedge\varphi+\varphi\wedge ( A'\wedge^{\rhd}\varphi)\\&+\alpha( \varphi)\wedge^{\rhd}(g^{-1}\rhd B)-\delta(d\psi +A'\wedge^{\rhd}   \psi)
 \end{split} \end{equation}
 by using Proposition \ref{prop:2cross-form-2} (1)-(3) and (5) in the second identity and using    Proposition \ref{prop:2cross-form-2} (2) in the third identity. Here $d\varphi\wedge^{[,] }\varphi=d \varphi\wedge \varphi-\varphi\wedge d \varphi $ by definition.
  $\Omega_1'$ in (\ref{eq:Omega1'}) can be written as
     \begin{equation}\label{eq:Omega-1'}\begin{split}
    \Omega_1'
     = g^{-1} \rhd\Omega_1 +\alpha(d\varphi)+\alpha(\varphi)\wedge {A}'+ {A}'\wedge\alpha(\varphi)-\alpha(\varphi\wedge\varphi).
\end{split}\end{equation}
Note that
\begin{equation}\label{eq:rhd-1} \begin{split}
 \alpha(\varphi\wedge\varphi)\wedge^{\rhd }\varphi&=(\varphi\wedge\varphi)\wedge^{[,]}\varphi- (\varphi\wedge\varphi)\wedge^{\langle,\rangle}\varphi   =- (\varphi\wedge\varphi)\wedge^{\langle,\rangle}\varphi  ,\end{split}\end{equation} by using Proposition \ref{prop:2cross-form} (1) and Proposition \ref{prop:2cross-form-2} (4), and
\begin{equation}\label{eq:rhd-2}\begin{split}
     (\alpha(\varphi)\wedge {A}'+ {A}'\wedge\alpha(\varphi))\wedge^{\rhd }\varphi
     =(  {A}'\wedge^{\rhd }
     \varphi)\wedge\varphi-\varphi\wedge(  {A}'\wedge^{\rhd }
     \varphi)-(  {A}'\wedge^{\rhd }
     \varphi)\wedge^{\langle,\rangle}\varphi
\end{split}\end{equation}
by using Corollary \ref{cor:2cross-form}.
Substitute (\ref{eq:Omega-1'})-(\ref{eq:rhd-2}) to $\Omega_1'\wedge^{\rhd }\varphi$ to get
\begin{equation}\label{eq:Omega-varphi} \begin{split}
     \Omega_1'\wedge^{\rhd }\varphi&=(g^{-1}\rhd\Omega_1)\wedge^{\rhd }\varphi+d\varphi\wedge^{[,] }\varphi -d\varphi\wedge^{\langle,\rangle}\varphi\\&+( {A}'\wedge^{\rhd }
     \varphi)\wedge\varphi-\varphi\wedge(  {A}'\wedge^{\rhd }
     \varphi)
    -( {A}'\wedge^{\rhd }
     \varphi)\wedge^{\langle,\rangle}\varphi+(\varphi\wedge\varphi)\wedge^{\langle,\rangle}\varphi .
\end{split}\end{equation}
 Here we have   applied  Proposition \ref{prop:2cross-form} (1) to $\alpha(d\varphi)\wedge^{\rhd }\varphi$.
Now substitute (\ref{eq:Omega-varphi}) into (\ref{eq:Omega2'}) to get
\begin{equation}\label{eq:Omega-2-half}\begin{split}
     \Omega_2'=&   g^{-1}\rhd \Omega_2+ (g^{-1}\rhd\Omega_1)  \wedge^{\rhd}\varphi -d\varphi\wedge^{\langle,\rangle}\varphi -( {A}'\wedge^{\rhd }
     \varphi)\wedge^{\langle,\rangle}\varphi +(\varphi\wedge\varphi)\wedge^{\langle,\rangle}\varphi\\&  +\alpha( \varphi)\wedge^{\rhd}(g^{-1}\rhd B) -\delta(d\psi +A'\wedge^{\rhd}   \psi).
 \end{split} \end{equation}
 Note that   \begin{equation}\label{eq:varphi-B}\begin{split}
        \alpha ( \varphi)\wedge^{\rhd} (g^{-1}\rhd  B) = & \varphi\wedge^{[,]}  (g^{-1}\rhd  B)
       -\varphi \wedge^{\langle,\rangle}   (g^{-1}\rhd  B)
       ,
  \end{split}\end{equation}
 by  Proposition \ref{prop:2cross-form} (1) again, and  similarly
   \begin{equation}\label{eq:B-varphi}\begin{split}
  \alpha(g^{-1}\rhd   B)\wedge^{\rhd}  \varphi&= (g^{-1}\rhd   B) \wedge^{[,]} \varphi - (g^{-1}\rhd B ) \wedge^{\langle,\rangle} \varphi\\&
  =- \varphi\wedge^{[,]}(g^{-1}\rhd B) - (g^{-1}\rhd  B) \wedge^{\langle,\rangle} \varphi.
  \end{split}\end{equation}
 Now substitute the summation of (\ref{eq:varphi-B}) and (\ref{eq:B-varphi})  into (\ref{eq:Omega-2-half}) to get
 \begin{equation*}\begin{split}
     \Omega_2'=&   g^{-1}\rhd \Omega_2+ ( g^{-1}\rhd\Omega_1  -\alpha(g^{-1}\rhd  B)) \wedge^{\rhd}\varphi - ( g^{-1}\rhd  B +d\varphi+  {A}'\wedge^{\rhd }
     \varphi - \varphi\wedge\varphi)\wedge^{\langle,\rangle}\varphi \\&-\varphi \wedge^{\langle,\rangle} (g^{-1}\rhd  B) -\delta(d\psi +A'\wedge^{\rhd}   \psi)\\= &
     g^{-1}\rhd \Omega_2+ ( \Omega_1' - \alpha(  B')) \wedge^{\rhd}\varphi -B'\wedge^{\langle,\rangle}\varphi- \delta( \psi)\wedge^{\langle,\rangle}\varphi \\& -\varphi \wedge^{\langle,\rangle}   (g^{-1}\rhd  B)-\delta(d\psi +A'\wedge^{\rhd}   \psi)\\ =&
g^{-1}\rhd \Omega_2+ (\Omega_1' -\alpha(  B')) \wedge^{\rhd}\varphi +\delta( C')-   \delta(g^{-1}\rhd C )
 \end{split} \end{equation*}
by $\delta( \psi)\wedge^{\langle,\rangle}\varphi=-\delta(\varphi\wedge^{\rhd'} \psi)  $ from applying $\delta$ to Proposition \ref{prop:2cross-form} (3) and
\begin{equation*}
        \delta( C')-g^{-1}\rhd\delta( C )=-\delta( d\psi +A'\wedge^{\rhd } \psi) +\delta(\varphi\wedge^{\rhd'}\psi)-B'\wedge^{\langle,\rangle}\varphi -\varphi \wedge^{\langle,\rangle}   (g^{-1}\rhd  B).
     \end{equation*}

\subsection{Covariance of   $3$-curvature $4$-forms }
(1)
Under  the $3$-gauge transformation (\ref{eq:trans1}) of the  first kind,
      \begin{equation}\label{eq:curvature-under1} \begin{split}
      \Omega_3'= &g^{-1}\rhd dC-  g^{-1} dg  g^{-1}\wedge^\rhd C+
     (Ad_{g^{-1}} A+g^{-1}dg)\wedge^\rhd (g^{-1}\rhd C)\\& + (g^{-1}\rhd B )\wedge^{\{\cdot\}}( g^{-1}\rhd B)=
  g^{-1}\rhd( dC+A \wedge^{\rhd} C +B \wedge^{\{\cdot\}} B )=g^{-1}\rhd \Omega_3,
 \end{split}\end{equation}
       by  equivariance (\ref{eq:equivariance1}) in Proposition \ref{prop:2cross-form-2}. We have already seen from the above subsection that under this transformation
\begin{equation}\label{eq:curvature-under1'}
      \Omega_2' =g^{-1}\rhd \Omega_2  ,\qquad  \Omega_1' = g^{-1}\rhd\Omega_1.
\end{equation}

\vskip 5mm
(2) Under the $3$-gauge transformation  (\ref{eq:trans2}) of the   second kind, we have
         \begin{equation}\label{eq:Omega-3-varphi} \begin{split}
     \Omega_3'=&dC'+A'\wedge^{\rhd} C'+B'\wedge^{\{\cdot\}} B'\\=&dC +A'\wedge^{\rhd} C
  -(d {B}'+A'\wedge^{\rhd}  {B}')  \wedge^{\{,\}}\varphi +\varphi\wedge^{\{,\}}(d {B}+A'\wedge^{\rhd}   {B})\\&- {B}'\wedge^{\{,\}}(d\varphi+A'\wedge^{\rhd}\varphi)-(d\varphi+A'\wedge^{\rhd}\varphi)\wedge^{\{,\}} {B}+B'\wedge^{\{\cdot\}} B'\\=&
  \Omega_3+\alpha(\varphi )\wedge^{\rhd} C -\Omega_2'\wedge^{\{,\}}\varphi+\varphi\wedge^{\{,\}}\Omega_2 +\varphi\wedge^{\{,\}}( \alpha (\varphi ) \wedge^{\rhd}   {B})\\ &-{B}'\wedge^{\{,\}}(\varphi\wedge\varphi)  -(\varphi\wedge\varphi)\wedge^{\{,\}} {B},
 \end{split}\end{equation}
by applying Proposition \ref{prop:2cross-form-2} (2) and using the following in the last identity, i.e.,
\begin{equation*}
      - {B}'\wedge^{\{,\}}(d\varphi+A'\wedge^{\rhd}\varphi)+B'\wedge^{\{\cdot\}} B'=B'\wedge^{\{\cdot\}} B -{B}'\wedge^{\{,\}}(\varphi\wedge\varphi), \end{equation*} and
      \begin{equation*}
        -(d\varphi+A'\wedge^{\rhd}\varphi)\wedge^{\{,\}} {B}+B'\wedge^{\{\cdot\}} B =B \wedge^{\{\cdot\}} B -(\varphi\wedge\varphi)\wedge^{\{,\}} {B}  .  \end{equation*}
But
\begin{equation}\label{eq:Bff}
  -   {B}'\wedge^{\{,\}}(\varphi\wedge\varphi)  = -({B}'\wedge^{ \langle,\rangle}   \varphi)\wedge^{\{,\}}\varphi
\end{equation}by applying Proposition \ref{prop:2cross-form} (4)
and
\begin{equation}\label{eq:fC}\begin{split}
     \alpha(\varphi )\wedge^{\rhd} C=&-\varphi  \wedge^{\{,\}}\delta( C)+\delta( C)\wedge^{\{,\}} \varphi
\\=&-\varphi  \wedge^{\{,\}}\delta( C)+\delta( C')\wedge^{\{,\}} \varphi +[ {B}'\wedge^{ \langle,\rangle}  \varphi+\varphi\wedge^{ \langle,\rangle}   {B}]\wedge^{\{,\}}\varphi
 \end{split}\end{equation}by applying  Proposition \ref{prop:2cross-form} (2).
Now substitute the summation of (\ref{eq:Bff}) and (\ref{eq:fC}) into (\ref{eq:Omega-3-varphi})
to get
\begin{equation}\label{eq:curvature-under2}
     \Omega_3'=
  \Omega_3   -[\Omega_2'-\delta( C')]\wedge^{\{,\}}\varphi+\varphi\wedge^{\{,\}}[\Omega_2-\delta( C )] +\mathcal{E}
\end{equation}
with  \begin{equation}\label{eq:E}
    \mathcal{E}= (\varphi \wedge^{ \langle,\rangle}   B) \wedge^{\{,\}} \varphi +\varphi\wedge^{\{,\}}( \alpha (\varphi ) \wedge^{\rhd} {B})-(\varphi\wedge\varphi) \wedge^{\{,\}} {B}=0.
 \end{equation}
This vanishing is because we have that \begin{equation}\label{eq:affB}
     \varphi\wedge^{\{,\}} [ \alpha(\varphi)\wedge^{\rhd} {B}]= \varphi\wedge^{\{,\}} ( \varphi \wedge^{[,]} {B})
     -\varphi\wedge^{\{,\}} ( \varphi \wedge^{\langle,\rangle} {B})
\end{equation}
by Proposition \ref{prop:2cross-form} (1), and
\begin{equation}\label{eq:ffB}\begin{split}
   (\varphi\wedge\varphi) \wedge^{\{,\}} {B}  =& \varphi\wedge^{\{,\}}(\varphi\wedge^{[,]}  {B})+
\alpha(\varphi)\wedge^{\rhd}( \varphi  \wedge^{\{,\}}  {B})
   \end{split}\end{equation}
by Proposition \ref{prop:2cross-form} (5), and by substituting (\ref{eq:affB})-(\ref{eq:ffB}) into (\ref{eq:E})  we have
\begin{equation}\label{eq:fBf}\begin{split}
     \mathcal{E}= & \delta( \varphi \wedge^{\{,\}}   B) \wedge^{\{,\}} \varphi -\varphi\wedge^{\{,\}} \delta( \varphi \wedge^{\{,\}} {B})-\alpha(\varphi)\wedge^{\rhd}( \varphi  \wedge^{\{,\}}  {B})=0
   \end{split}\end{equation}
by applying Proposition \ref{prop:2cross-form} (2) to $\psi=\varphi \wedge^{\{,\}}   B$.

\vskip 5mm
(3) Under the $3$-gauge transformation (\ref{eq:tilde'}) of the  third kind,
     \begin{equation*} \begin{split}
   \widetilde  \Omega_3 =&d\widetilde{C} + \widetilde A \wedge^{\rhd}\widetilde  C +\widetilde B \wedge^{\{\cdot\}}\widetilde  B  = d  { C}'' +A'' \wedge^{\rhd}   {C''} +  {B''} \wedge^{\{\cdot\}}   {B''} +\mathcal{R}
 \end{split}\end{equation*}
 with
 \begin{equation*}\begin{split}
     \mathcal{R}=&-dA'' \wedge^{\rhd }\psi  -A'' \wedge^{\rhd}(A ''\wedge^{\rhd }\psi)- {B''} \wedge^{\{\cdot\}} \delta(\psi) -\delta(\psi) \wedge^{\{\cdot\}} {B}'' +\delta(\psi) \wedge^{\{\cdot\}}\delta(\psi)\\=& -( dA''    +A'' \wedge A''-\alpha(B'') )\wedge^{\rhd}\psi=-(\Omega_1''-\alpha(B''))\wedge^{\rhd}\psi.
  \end{split}\end{equation*}
   since
    $\delta(\psi) \wedge^{\{\cdot\}} \delta(\psi)  =0$ by Proposition \ref{prop:2cross-form-2} (4),  and
          \begin{equation*}\begin{split}
    -  \delta(\psi) \wedge^{\{\cdot\}} B''-B'' \wedge^{\{\cdot\}}\delta(\psi)= \alpha(B'')\wedge^{\rhd}\psi ,
    \end{split}  \end{equation*}
    by Proposition \ref{prop:2cross-form} (2) and $A'' \wedge^{\rhd}(A'' \wedge^{\rhd }\psi)=(A'' \wedge A'' )\wedge^{\rhd }\psi $ by Proposition \ref{prop:2cross-form-2} (3).
In summary we     know that under this transformation,
\begin{equation}\label{eq:curvature-under3}\begin{split} &\widetilde \Omega_1 - \alpha( \widetilde B  )= \Omega_1'' - \alpha( B'' )
,\\& \widetilde \Omega_2 - \delta(\widetilde  C )= \Omega_2'' - \delta( C'' ),\\&\widetilde \Omega_3 = \Omega_3''-(\widetilde{\Omega}_1 -\alpha(\widetilde{B} ))\wedge^{\rhd}\psi.
\end{split}\end{equation}
The last two identities are already proved in  subsection 3.2.

Since any $3$-gauge transformation (\ref{eq:gauge-transformations})
 can be written as a composition of
 (\ref{eq:trans1}), (\ref{eq:trans2'}) and (\ref{eq:tilde'}), identities (\ref{eq:Omega1'}),
(\ref{eq:curvature-under1}), (\ref{eq:curvature-under1'}), (\ref{eq:curvature-under2}) and (\ref{eq:curvature-under3})
imply the transform formula (\ref{eq:gauge-transformations-curvature}) of the $3$-curvature for general $3$-gauge transformations.

\section{The $\mathbf{Gray}$-categories and the lax-natural transformations}

 \subsection{$\mathbf{Gray}$-categories}\label{sub:Gray-3-category}

$\mathbf{Gray}$ is a closed symmetric monoidal category with the Gray tensor product \cite{Gr}. The underlying category  is the category of  $2$-categories and   $2$-functors between them.  A $\mathbf{Gray}$-category is a category enriched over $\mathbf{Gray}$.
Any tricategory is triequivalent to
a  $\mathbf{Gray}$-category \cite{GPS} \cite{G06}. $\mathbf{Gray}$-categories  are the  semi-strictification of   tricategories. It is also well-known that the homotopy category of $\mathbf{Gray}$ $3$-categories is equivalent
to the homotopy category of $3$-types (cf., e.g.,   \cite{Ber}).

The unpacked version of this definition is as follows.
A $\mathbf{Gray}$-category  $  \mathcal{ C }$  (cf. \cite{C99} \cite{RS08} and references therein) consists of collections $ \mathcal C_0$ of objects, $ \mathcal C_1$ of $1$-arrows, $ \mathcal C_2$ of $2$-arrows and $ \mathcal C_3$ of $3$-arrows,
together with

$\bullet$ functions $s_n, t_n :  \mathcal C_i\rightarrow  \mathcal C_n$ for all $0 \leq n < i \leq 3$,  called
{\it $n$-source} and {\it $n$-target},

$\bullet$ functions $\#_n :  \mathcal C_{n+1}\, {}_{t_n}\times_{s_n}  \mathcal C_{n+1}\rightarrow  \mathcal C_{n+1}$ for all $0 \leq n < 3$, called {\it vertical composition},

$\bullet$ functions $\#_n :  \mathcal C_i\,  {}_{t_n}\times_{s_n}  \mathcal C_{n+1}\rightarrow  \mathcal C_i$ and $\#_n :  \mathcal C_{n+1}\, {}_{t_n}\times_{s_n} \mathcal C_i \rightarrow  \mathcal C_i$ for all $  n=0,1$,
$n + 1 <i \leq3$, called {\it whiskering},

$\bullet$ a function  $\#_0 :  \mathcal C_2 \, {}_{t_0}\times_{s_0}  \mathcal C_{2}\rightarrow  \mathcal C_3$, called the {\it interchanging $3$-arrow (horizontal composition)},

$\bullet$ a function  $ id_{*}   :  \mathcal C_i \rightarrow  \mathcal C_{i+1 }$ for all $ 0 \leq i \leq 2$, called {\it identity},

such that

(1) $ \mathcal{C }$  is a $ 3$-skeletal reflexive globular (cf. \cite{Bra})  set;

(2) for every $C,C'\in  \mathcal C_0$, the collection of elements of $ \mathcal{C }$ with $0$-source $C$ and $0$-target
$C'$ forms a $2$-category $\mathcal{C}(C,C')$, with $n$-composition in $\mathcal{C}(C,C')$ given by $\#_{n+1}$ and
identities given by $id_{*}$;

(3) for each  $g : C' \rightarrow C''$ in $ \mathcal C_1$ and every $C, C''' \in  \mathcal C_0$, $ g\#_0$ is a $2$-functor
$\mathcal{C}(C'',C''')\rightarrow \mathcal{C}(C',C''')$,  and  $\#_0 g$ is   a $2$-functor
$\mathcal{C}(C ,C')\rightarrow \mathcal{C}(C,C'')$;

(4) for     every $C,C', C''  \in  \mathcal C_0$, $id_{C'} \#_0$ is equal to the identity $2$-functor
$\mathcal{C}(C' ,C'' )\rightarrow \mathcal{C}(C',C'' )$,  and  $\#_0  id_{C'} $ is equal to the identity $2$-functor
$\mathcal{C}(C ,C')\rightarrow \mathcal{C}(C,C' )$;

(5) for $ \gamma: \xy
(-8,0)*+{C}="4";
(8,0)*+{C'}="6";
{\ar@/^ .85pc/^{f} "4";"6"};
{\ar@/_ .85pc/_{f'} "4";"6"};
{\ar@{=>}^<<<{ } (0,2)*{};(0,-2)*{}} ;
\endxy$ and $ \delta: \xy
(-8,0)*+{C'}="4";
(8,0)*+{C''}="6";
{\ar@/^ .85pc/^{g} "4";"6"};
{\ar@/_ .85pc/_{g'} "4";"6"};
{\ar@{=>}^<<<{ } (0,2)*{};(0,-2)*{}} ;
\endxy$ in $ \mathcal C_2$, we have the {\it interchanging $3$-arrow} $\gamma\#_0 \delta$ (\ref{eq:interchanging-def}) with
\begin{equation*}
            s_2(\gamma\#_0 \delta)=(\gamma \#_0 g) \#_1( f'\#_0 \delta) \qquad {\rm and} \qquad
        t_2(\gamma\#_0 \delta)=(f \#_0\delta) \#_1 (\gamma\#_0  g');
\end{equation*}

(6) for $\xy
(-8,0)*+{C}="4";
(8,0)*+{C'}="6";
{\ar@{->}|-{f'} "4";"6"};
{\ar@/^1.55pc/^{f'} "4";"6"};
{\ar@/_1.55pc/_{f''} "4";"6"};
{\ar@{=>}^<<{\scriptstyle \gamma} (0,6)*{};(0,1)*{}} ;
{\ar@{=>}^<<{\scriptstyle \gamma'} (0,-1)*{};(0,-6)*{}} ;
\endxy  $ and $ \delta: \xy
(-8,0)*+{C'}="4";
(8,0)*+{C''}="6";
{\ar@/^ .85pc/^{g} "4";"6"};
{\ar@/_ .85pc/_{g'} "4";"6"};
{\ar@{=>}^<<<{ } (0,2)*{};(0,-2)*{}} ;
\endxy$ in $ \mathcal C_2$,
\begin{equation*}
        (\gamma\#_1\gamma')\#_0\delta=[(\gamma \#_0 g) \#_1 (\gamma' \#_0 \delta)] \#_2  [(\gamma \#_0\delta) \#_1(\gamma'\#_0 g')],
\end{equation*}
and for $ \gamma: \xy
(-8,0)*+{C }="4";
(8,0)*+{C' }="6";
{\ar@/^ .85pc/^{f} "4";"6"};
{\ar@/_ .85pc/_{f'} "4";"6"};
{\ar@{=>}^<<<{ } (0,2)*{};(0,-2)*{}} ;
\endxy$  and  $\xy
(-8,0)*+{C'}="4";
(8,0)*+{C''}="6";
{\ar@{->}|-{g'} "4";"6"};
{\ar@/^1.55pc/^{g } "4";"6"};
{\ar@/_1.55pc/_{g''} "4";"6"};
{\ar@{=>}^<<{\scriptstyle \delta} (0,6)*{};(0,1)*{}} ;
{\ar@{=>}^<<{\scriptstyle\delta'} (0,-1)*{};(0,-6)*{}} ;
\endxy  $ in $ \mathcal C_2$,
\begin{equation*}
       \gamma\#_0(\delta\#_1\delta')=[(\gamma\#_0 \delta)\#_1 (  f ' \#_1 \delta')] \#_2    [(  f \#_1 \delta) \#_1  (\gamma\#_0 \delta')];
\end{equation*}

(7) for $\varphi: \xy 0;/r.22pc/:
(0,10)*{};
(0,-10)*{};
(-20,0)*+{C}="1";
(0,0)*+{C'}="2";
{\ar@/^1.33pc/^f "1";"2"};
{\ar@/_1.33pc/_{f'} "1";"2"};
(-10,4)*+{}="A";
(-10,-4)*+{}="B";
{\ar@{=>}@/_.5pc/_\gamma "A"+(-1.33,0) ; "B"+(-.66,-.55)};
{\ar@{=}@/_.5pc/ "A"+(-1.33,0) ; "B"+(-1.33,0)};
{\ar@{=>}@/^.5pc/^{\gamma'} "A"+(1.33,0) ; "B"+(.66,-.55)};
{\ar@{=}@/^.5pc/ "A"+(1.33,0) ; "B"+(1.33,0)};
{\ar@3{->} (-12,0)*{}; (-7,0)*{}};
\endxy $  in $ \mathcal C_3$ and $ \delta: \xy
(-8,0)*+{C'}="4";
(8,0)*+{C''}="6";
{\ar@/^ .85pc/^{g} "4";"6"};
{\ar@/_ .85pc/_{g'} "4";"6"};
{\ar@{=>}^<<<{ } (0,2)*{};(0,-2)*{}} ;
\endxy$ in $ \mathcal C_2$,
\begin{equation*}( \gamma \#_0  \delta)\#_2[(f\#_0\delta ) \#_1 ( \varphi\#_0 g  ')]
= [(\varphi \#_0 g)\#_1 ( f'\#_0\delta ) ] \#_2 (\gamma'\#_0\delta ).
\end{equation*}

(8) for $f: C\rightarrow C'$  in $ \mathcal C_1$ and $ \delta: \xy
(-8,0)*+{C'}="4";
(8,0)*+{C''}="6";
{\ar@/^ .85pc/^{g} "4";"6"};
{\ar@/_ .85pc/_{g'} "4";"6"};
{\ar@{=>}^<<<{ } (0,2)*{};(0,-2)*{}} ;
\endxy$ in $ \mathcal C_2$,
$
     {\rm id}_f\#_0 \delta={\rm id}_{f\#_0 \delta},
$
and for $ \gamma: \xy
(-8,0)*+{C }="4";
(8,0)*+{C' }="6";
{\ar@/^ .85pc/^{f} "4";"6"};
{\ar@/_ .85pc/_{f'} "4";"6"};
{\ar@{=>}^<<<{ } (0,2)*{};(0,-2)*{}} ;
\endxy$ in $ \mathcal C_2$ and $f: C'\rightarrow C''$  in $ \mathcal C_1$,
$
 \gamma  \#_0  {\rm id}_f={\rm id}_{\gamma\#_0 f}.
$

(9) For every $c\in\mathcal{C}(C,C')_p$, $c'\in\mathcal{C}(C',C'')_q $ and $c''\in\mathcal{C}(C'',C''')_r$ with $p+q+r\leq 2$,
\begin{equation*}
   c   \#_0 (c'\#_0c'')= (c   \#_0  c')\#_0c''.
\end{equation*}

 Here (5)-(7) are the definition of the interchanging $3$-arrow and its functoriality.
 Following \cite{MP11}, the definition of a $\mathbf{ Gray}$-category is a little bit different from the standard one. We write $\alpha\#_n\beta$ instead of $\beta\#_n\alpha$ when $t_n(\alpha)=s_n(\beta)$.
A (strict) {\it  $\mathbf{Gray}$-functor} $ F : \mathcal{C}\rightarrow\mathcal{C}'$ between $\mathbf{ Gray}$-categories $\mathcal{C}$ and $\mathcal{C}'$  is given by maps $F_i:  \mathcal{C}_i\rightarrow\mathcal{C}_i'$, $i = 0, \ldots , 3$,
preserving all compositions, identities, interchanges, sources and targets, strictly.
A {\it  $\mathbf{Gray}$ $3$-groupoid}
is a $\mathbf{ Gray}$-category whose $k$-arrows are all  equivalences, for all $k=1,2,3$.

 \subsection{The Gray $3$-groupoid $\mathcal{G}^{\mathscr L }$ constructed from the  $ 2$-crossed module $\mathscr L$}
Given a $2$-cross module $\mathscr L$, we can construct a $\mathbf{Gray}$ $3$-groupoid $ \mathcal{ G}^{\mathscr L}$ (cf. \S 1.2.5 in \cite{MP11}) with a single
object by putting $\mathcal{ G}_0^{\mathscr L} = \{ \bullet\}$, $\mathcal{ G}_1^{\mathscr L} = G$, $\mathcal{ G}_2^{\mathscr L} = G \times H$ and $\mathcal{ G}_3^{\mathscr L} = G \times H \times L$. This construction appeared in \cite{K} , with different
conventions, and also in a slightly different language in \cite{BG}   \cite{CCG}.

For a $2$-arrow   $(X, e) \in\mathcal{ G}_2^{\mathscr L} $,
\begin{equation*}
    s_1(X, e) = X ,\qquad t_1 (X, e) = \alpha(e)^{-1} X,
\end{equation*}
and for a $3$-arrow $(X, e,l) \in\mathcal{ G}_3^{\mathscr L} $,
\begin{equation*}\begin{split}
      &s_1 (X, e,l) =   X  ,\qquad\qquad t_1 (X, e,l) =\alpha(e)^{-1}  X ,\\
      & s_2(X, e,l) =  (X, e) ,\qquad t_2 (X, e,l) = (X, \delta(l)^{-1}e) .
\end{split}\end{equation*}

The vertical composition of two $2$-arrows is defined as
\begin{equation*}
  (X, e)  \#_1 (\alpha(e)^{-1} X, f)
= (X, ef ),
\end{equation*}
and the vertical composition of two $3$-arrows is defined as
\begin{equation}\label{eq:2-vertical}
     (X, e, l)\#_2( X, \delta(l)^{-1}e ,k)=(X, e  ,  lk).
\end{equation}
The  $\#_1$-composition of two $3$-arrows  is
\begin{equation}\label{eq:1-vertical}
     (X, e, l)\#_1(\alpha(e)^{-1} X, f ,k)=(X, ef , (e \rhd' k)l),
\end{equation}whose $2$-source is  $(X, ef)$  and the $2$-target is $(X,\delta((e \rhd' k)l)^{-1}ef)=(X,\delta( l)^{-1}e \cdot\delta(  k)^{-1} f)$.

 The interchanging $3$-arrow is defined as\begin{equation}\label{eq:interchanging}
     (X, e)\#_0(Y , f ) =
\left(
XY,  e (\alpha(e)^{-1}X)\rhd f , e\rhd' \{e^{-1},X \rhd f\}^{-1}  \right),
\end{equation}whose $2$-source  and $2$-target are
\begin{equation}\label{eq:interchanging-s-t}\begin{split}&  s_2( (X, e)\#_0(Y , f ))=
 (XY, e (\alpha(e)^{-1}X)\rhd f  )
, \\&
    t_2( (X, e)\#_0(Y , f ))= (XY, (X \rhd f ) e )
,
  \end{split}\end{equation}respectively.
This is because
$
   \delta  ( e\rhd'  \{e^{-1},X \rhd f \}^{-1}  )^{-1} \cdot e (\alpha(e)^{-1}X)\rhd f
   = e  e^{-1}(X \rhd f )  e (\alpha(e)^{-1}X)\rhd f^{-1}  e^{-1}
         e (\alpha(e)^{-1}X)\rhd f  =(X \rhd f ) e .
$

 Whiskering  by a $1$-arrow is defined as
\begin{equation}\label{eq:1-vertical0}\begin{split}
      X   \#_0 (Y, e)
&= (XY,  X \rhd e  ),\qquad\qquad \qquad  (Y, e)\#_0 X
= (Y X,  e  ),
\\  X   \#_0 (Y, e,l)
&= (XY,  X \rhd e  ,  X \rhd l),\qquad (Y, e,l)\#_0 X
= (Y X,  e ,l ).
\end{split}\end{equation}
The $3$-arrow whiskered  by a $ 2$-arrow {\it from above} is defined as
 \begin{equation}\label{eq:whiskering}
     (X, e )\#_1(\alpha(e)^{-1} X, f ,k)=(X, ef ,  e \rhd' k  )
\end{equation}
(i.e.,  the   composition (\ref{eq:1-vertical}) of a $3$-arrow  with a trivial $3$-arrow $(  X,  e ,1_L )$) and the one  {\it from below} is
\begin{equation}\label{eq:whiskering-r}
     (X, e, l)\#_1(\alpha(e)^{-1} X, f  )=(X, ef ,  l).
\end{equation}

\begin{rem}\label{rem:whiskering}
  By (\ref{eq:1-vertical0}) and (\ref{eq:whiskering-r}),  in the Gray $3$-groupoid $ \mathcal{ G}^{\mathscr L}$,    whiskering from right by a $ 1$-arrow or by a $ 2$-arrow from below  is trivial in the sense that the principal element ($e$ or $l$) is unchanged.
\end{rem}

 \subsection{The lax-natural transformations}
(cf. \S 5.1 in \cite{C99})
Let $F, \widetilde{F} : \mathcal{C}  \rightarrow\mathcal{ D}$ be   $\mathbf{Gray}$-functors between Gray  categories $\mathcal{C}$ and $\mathcal{D}$. A    {\it lax-natural transformation}
$\Psi: F \rightarrow {\widetilde{F}} $ consists of the following data:

$\bullet$ for every object $C$ of $\mathcal{C}$ a $1$-arrow $\Psi_C :{\widetilde{F}}(C)\rightarrow F(C) $ in $\mathcal{D}$,

$\bullet$ for every arrow $f : C \rightarrow C'$ in $\mathcal{C}$ a 2-arrow $\Psi_f$ in $ \mathcal{ D}$:
\begin{equation*}
       \xy
(0,0)*+{\scriptstyle F(C)}="1";
(20,0)*+{\scriptstyle F(C')}="2";
(0,20)*+{\scriptstyle {\widetilde{F}}(C)}="3";
(20,20)*+{\scriptstyle {\widetilde{F}}(C')}="4";
{\ar@{->}|-{\scriptscriptstyle F(f) } "1";"2"};
{\ar@{->}|-{ \scriptscriptstyle {\widetilde{F}} (f)} "3";"4"};
{\ar@{->}|-{\Psi_C  } "3";"1"};
{\ar@{->}|-{\Psi_{C'}  } "4";"2"};
{\ar@{=>}|-{ \scriptstyle  \Psi_f } (18,18)*{};(2, 2)*{}} ;
\endxy\end{equation*}

$\bullet$ for every $2$-arrow $ \gamma: \xy
(-8,0)*+{C}="4";
(8,0)*+{C'}="6";
{\ar@/^ .85pc/^{f} "4";"6"};
{\ar@/_ .85pc/_{f'} "4";"6"};
{\ar@{=>}^<<<{ } (0,2)*{};(0,-2)*{}} ;
\endxy$    in $\mathcal{C}$, we have  a $3$-arrow $\Psi_\gamma$  in $ \mathcal{ D}$:
\begin{equation*}
       \xy
(0,0)*+{\scriptstyle F(C)}="1";
(20,0)*+{\scriptstyle F(C')}="2";
(0,20)*+{\scriptstyle {\widetilde{F}}(C)}="3";
(20,20)*+{\scriptstyle {\widetilde{F}}(C')}="4";
{\ar@/^ 1.15pc/_{\scriptscriptstyle F(f) } "1";"2"};
{\ar@/^ 1.15pc/^{ \scriptscriptstyle {\widetilde{F}}(f)} "3";"4"};
{\ar@{->}|-{\Psi_C  } "3";"1"};
{\ar@{->}|-{\Psi_{C'}  } "4";"2"};
{\ar@{=>}|-{  \Psi_f } (17,17)*{};(5, 5)*{}} ;
{\ar@/_ 1.15pc/_{ \scriptscriptstyle F (f')} "1";"2"};
{\ar@{=>}^{ \scriptscriptstyle F(\gamma )} (10,1)*{};(10,-3)*{}} ;
\endxy
   \xy
  (0,0)*+{ }="1";
(10,0)*+{ }="2";
{\ar@3{->}^{\Psi_\gamma}  "1"+(0,10);"2"+(0,10)};
   \endxy
   \xy
(0,0)*+{\scriptstyle F(C)}="1";
(20,0)*+{\scriptstyle F(C')}="2";
(0,20)*+{\scriptstyle {\widetilde{F}}(C)}="3";
(20,20)*+{\scriptstyle {\widetilde{F}}(C')}="4";
{\ar@/_1.15pc/_{\scriptscriptstyle F(f') } "1";"2"};
{\ar@/^ 1.15pc/^{ \scriptscriptstyle {\widetilde{F}}(f)} "3";"4"};
{\ar@{->}|-{\Psi_C  } "3";"1"};
{\ar@{->}|-{\Psi_{C'}  } "4";"2"};
{\ar@{=>}|-{   \Psi_{f'} } (15,13)*{};(5, 0)*{}} ;
{\ar@/_ 1.15pc/_{ \scriptscriptstyle {\widetilde{F}} (f')} "3";"4"};
{\ar@{=>}^{ \scriptscriptstyle {\widetilde{F}}(\gamma) } (10,21)*{};(10,17)*{}} ;
\endxy
\end{equation*}

satisfying the following conditions:

(1) (naturality) for every $3$-arrow $\varphi: \xy 0;/r.22pc/:
(0,10)*{};
(0,-10)*{};
(-20,0)*+{C}="1";
(0,0)*+{C'}="2";
{\ar@/^1.33pc/^f "1";"2"};
{\ar@/_1.33pc/_{f'} "1";"2"};
(-10,4)*+{}="A";
(-10,-4)*+{}="B";
{\ar@{=>}@/_.5pc/_\gamma "A"+(-1.33,0) ; "B"+(-.66,-.55)};
{\ar@{=}@/_.5pc/ "A"+(-1.33,0) ; "B"+(-1.33,0)};
{\ar@{=>}@/^.5pc/^{\gamma'} "A"+(1.33,0) ; "B"+(.66,-.55)};
{\ar@{=}@/^.5pc/ "A"+(1.33,0) ; "B"+(1.33,0)};
{\ar@3{->} (-12,0)*{}; (-7,0)*{}};
\endxy $  in $\mathcal{C }$,
\begin{equation}\label{eq:naturality1} \Psi_{\gamma  }\#_2[
   (  {\widetilde{F}}( \varphi )\#_0\Psi_{C'})\#_1\Psi_{f'  }]=
   \left  [\Psi_f\#_1(\Psi_C\#_0 F(\varphi))\right]\#_2 \Psi_{\gamma'  }:
\end{equation}
\begin{equation}\label{eq:naturality0}
   \xy
(0,0)*+{\scriptstyle F(C)}="1";
(30,0)*+{\scriptstyle F(C')}="2";
(0,30)*+{\scriptstyle {\widetilde{F}}(C)}="3";
(30,30)*+{\scriptstyle {\widetilde{F}}(C')}="4";
{\ar@{-->}@/^ 1.15pc/^{\scriptscriptstyle F(f) } "1";"2"};
{\ar@/^ 1.15pc/^{ } "3";"4"};
 {\ar@/_1.15pc/_{ \scriptscriptstyle {\widetilde{F}} (f')} "3";"4"};
{\ar@{->}|-{\Psi_C  } "3";"1"};
{\ar@{->}|-{\Psi_{C'}  } "4";"2"};
{\ar@3{->}|-{  \Psi_{\gamma  } } (10,10)*{};(20, 20)*{}} ;
{\ar@/_ 1.15pc/_{ \scriptscriptstyle { {F}} (f')} "1";"2"};
{\ar@{=>}^{ \scriptscriptstyle F(\gamma  )} (15,1)*{};(20,-3)*{}} ;
{\ar@{=>}_{ \scriptscriptstyle {\widetilde{F}}(\gamma  )} (15,31)*{};(20,27)*{}} ;
 ( 7,35)*+{}="7";
(20,30)*+{}="8";
{\ar@{=>}@/^1.5pc/^{\scriptscriptstyle {\widetilde{F}}(\gamma')} "7" ; "8" };
{\ar@3{<-}|-{\scriptscriptstyle {\widetilde{F}}( \varphi )} (15,38)*{};(15, 31)*{}} ;
\endxy\xy
  (0,0)*+{ }="1";
(10,0)*+{ }="2";
{\ar@{=}   "1"+(0,15);"2"+(0,15)};
   \endxy
       \xy
(0,0)*+{\scriptstyle F(C)}="1";
(30,0)*+{\scriptstyle F(C')}="2";
(0,30)*+{\scriptstyle {\widetilde{F}}(C)}="3";
(30,30)*+{\scriptstyle {\widetilde{F}}(C')}="4";
{\ar@{-->}@/^ 1.15pc/^{\scriptscriptstyle F(f) } "1";"2"};
{\ar@/^ 1.15pc/^{ \scriptscriptstyle {\widetilde{F}}(f)} "3";"4"};
{\ar@/_1.15pc/_{ \scriptscriptstyle {\widetilde{F}} (f')} "3";"4"};
{\ar@{->}|-{\Psi_C  } "3";"1"};
{\ar@{->}|-{\Psi_{C'}  } "4";"2"};
{\ar@3{->}|-{  \Psi_{\gamma'} } (10,10)*{};(20, 20)*{}} ;
{\ar@/_ 1.15pc/_{  } "1";"2"};
{\ar@{=>}^{ \scriptscriptstyle F(\gamma' )} (15,1)*{};(20,-3)*{}} ;
{\ar@{=>}^{ \scriptscriptstyle {\widetilde{F}}(\gamma' )} (15,31)*{};(20,27)*{}} ;
 ( 7,-5)*+{}="7";
(20,-4)*+{}="8";
{\ar@{=>}@/_1.5pc/_{\scriptscriptstyle F(\gamma)} "7" ; "8" };
{\ar@3{->}^{\scriptscriptstyle F( \varphi) } (10,-5)*{};(16, -1)*{}} ;
\endxy
   \end{equation}

(2) (functoriality    with respect to $0$-composition of $1$-arrows) for every $C
\xrightarrow{f} C'
\xrightarrow{f'}C''
 $
in  $\mathcal{C}$, $
     \Psi_{f\#_0 f'}=[{\widetilde{F}} (f)\#_0\Psi_{f'}]\#_1[\Psi_{f }\#_0F (f')]$:
\begin{equation*}
       \xy
(0,0)*+{\scriptstyle F(C)}="1";
(20,0)*+{\scriptstyle F(C'')}="2";
(0,20)*+{\scriptstyle {\widetilde{F}}(C)}="3";
(20,20)*+{\scriptstyle {\widetilde{F}}(C'')}="4";
{\ar@{->}"1";"2"_{\scriptscriptstyle F(f\#_0 f') } };
{\ar@{->}  "3";"4"^{ \scriptscriptstyle {\widetilde{F}} (f\#_0 f')}};
{\ar@{->}|-{\Psi_C  } "3";"1"};
{\ar@{->}|-{\Psi_{C''}  } "4";"2"};
{\ar@{=>}|-{    \Psi_{f\#_0 f'} } (18,18)*{};(2, 2)*{}} ;
\endxy\xy
  (0,0)*+{ }="1";
(7,0)*+{ }="2";
{\ar@{=}   "1"+(0,10);"2"+(0,10)};
   \endxy\xy
(0,0)*+{\scriptstyle F(C)}="1";
(20,0)*+{\scriptstyle F(C')}="2";
(0,20)*+{\scriptstyle {\widetilde{F}}(C)}="3";
(20,20)*+{\scriptstyle {\widetilde{F}}(C')}="4";
(40, 0)*+{\scriptstyle F(C'')}="5";
(40,20)*+{\scriptstyle {\widetilde{F}}(C'')}="6";
{\ar@{->}|-{\scriptscriptstyle F(f) } "1";"2"};
{\ar@{->}|-{ \scriptscriptstyle {\widetilde{F}} (f)} "3";"4"};
{\ar@{->}|-{\Psi_C  } "3";"1"};
{\ar@{->}|-{\Psi_{C'}  } "4";"2"};
{\ar@{=>}|-{   \Psi_f } (18,18)*{};(2, 2)*{}} ;
{\ar@{->}|-{\scriptscriptstyle {\widetilde{F}}(f') } "4";"6"};
{\ar@{->}|-{ \scriptscriptstyle F (f')} "2";"5"};
{\ar@{->}|-{\Psi_{C''}  } "6";"5"};
{\ar@{=>}|-{  \Psi_{f'} } (38,18)*{};(22, 2)*{}} ;
\endxy\end{equation*}

(3) (functoriality   with respect to $1$-composition of $2$-arrows) for every  $\xy
(-8,0)*+{C}="4";
(8,0)*+{C'}="6";
{\ar@{->}|-{f'} "4";"6"};
{\ar@/^1.55pc/^{f } "4";"6"};
{\ar@/_1.55pc/_{f''} "4";"6"};
{\ar@{=>}^<<{\scriptstyle \gamma} (0,6)*{};(0,1)*{}} ;
{\ar@{=>}^<<{\scriptstyle \gamma'} (0,-1)*{};(0,-6)*{}} ;
\endxy  $   in $\mathcal{C }$, $
 \Psi_{\gamma\#_1\gamma'}   =\left[\Psi_\gamma\#_1(\Psi_C\#_0F(\gamma' ))\right]\#_2\left[\left({\widetilde{F}}(\gamma  )\#_0\Psi_{C'}\right)\#_1\Psi_{\gamma'}\right]:
$
\begin{equation*}
       \xy
(0,0)*+{\scriptstyle F(C)}="1";
(20,0)*+{\scriptstyle F(C')}="2";
(0,20)*+{\scriptstyle {\widetilde{F}}(C)}="3";
(20,20)*+{\scriptstyle {\widetilde{F}}(C')}="4";
{\ar@/^ 1.15pc/^{\scriptscriptstyle F(f) } "1";"2"};
{\ar@/^ 1.15pc/^{ \scriptscriptstyle {\widetilde{F}}(f)} "3";"4"};
{\ar@{->}|-{\Psi_C  } "3";"1"};
{\ar@{->}|-{\Psi_{C'}  } "4";"2"};
{\ar@{=>}|-{  \Psi_f } (17,17)*{};(5, 8)*{}} ;
{\ar@/_ 1.15pc/_{  } "1";"2"};
{\ar@{=>}^{ \scriptscriptstyle F(\gamma )} (10,1)*{};(10,-3)*{}} ;
{\ar@/_ 3.15pc/_{ \scriptscriptstyle F(f'')} "1";"2"};
{\ar@{=>}^{ \scriptscriptstyle F(\gamma' )} (10,-6)*{};(10,-11)*{}} ;
\endxy
   \xy
  (0,0)*+{ }="1";
(20,0)*+{ }="2";
{\ar@3{->}^{\scriptscriptstyle \Psi_\gamma\#_1[\Psi_C\#_0F(\gamma' )]}  "1"+(0,10);"2"+(0,10)};
   \endxy
   \xy
(0,0)*+{\scriptstyle F(C)}="1";
(20,0)*+{\scriptstyle F(C')}="2";
(0,20)*+{\scriptstyle {\widetilde{F}}(C)}="3";
(20,20)*+{\scriptstyle {\widetilde{F}}(C')}="4";
{\ar@/_1.15pc/^{\scriptscriptstyle F(f') } "1";"2"};
{\ar@/^ 1.15pc/^{ \scriptscriptstyle {\widetilde{F}}(f)} "3";"4"};
{\ar@{->}|-{\Psi_C  } "3";"1"};
{\ar@{->}|-{\Psi_{C'}  } "4";"2"};
{\ar@{=>}|-{   \Psi_{f'} } (15,12)*{};(4, -1)*{}} ;
{\ar@/_ 1.15pc/_{ \scriptscriptstyle {\widetilde{F}} (f')} "3";"4"};
{\ar@{=>}^{ \scriptscriptstyle {\widetilde{F}}(\gamma) } (10,23)*{};(10,17)*{}} ;
 {\ar@/_ 3.15pc/_{ \scriptscriptstyle F(f'')} "1";"2"};
{\ar@{=>}^{ \scriptscriptstyle F(\gamma' )} (10,-6)*{};(10,-11)*{}} ;
\endxy
    \xy
  (0,0)*+{ }="1";
(20,0)*+{ }="2";
{\ar@3{->}^{\scriptscriptstyle[  \widetilde F (\gamma  )\#_0 \Psi_{C'}]\#_1\Psi_{\gamma'}}  "1"+(0,10);"2"+(0,10)};
   \endxy
       \xy
(0,0)*+{\scriptstyle F(C)}="1";
(20,0)*+{\scriptstyle F(C')}="2";
(0,20)*+{\scriptstyle {\widetilde{F}}(C)}="3";
(20,20)*+{\scriptstyle {\widetilde{F}}(C')}="4";
{\ar@/^ 1.15pc/^{ \scriptscriptstyle {\widetilde{F}}(f)} "3";"4"};
{\ar@/_ 3.15pc/_{ \scriptscriptstyle {\widetilde{F}}(f'')} "3";"4"};
{\ar@{=>}^{ \scriptscriptstyle {\widetilde{F}}(\gamma' )} (10,14)*{};(10,9)*{}} ;
{\ar@{->}|-{\Psi_C  } "3";"1"};
{\ar@{->}|-{\Psi_{C'}  } "4";"2"};
{\ar@{=>}|-{   \Psi_{f''} } (15, 3)*{};(5, -7)*{}} ;
{\ar@/_ 1.15pc/_{  } "3";"4"};
{\ar@{=>}^{ \scriptscriptstyle {\widetilde{F}}(\gamma) } (10,21)*{};(10,17)*{}} ;
 {\ar@/_ 3.15pc/_{ \scriptscriptstyle F(f'')} "1";"2"};
\endxy
\end{equation*}

(4) (functoriality   with respect to $0$-composition of a $2$-arrow with a  $1$-arrow) for every

 $  \xy
(-8,0)*+{C}="4";
(8,0)*+{C'}="6";
{\ar@/^ .85pc/^{f} "4";"6"};
{\ar@/_ .85pc/_{f'} "4";"6"};
{\ar@{=>}^{\gamma } (0,2)*{};(0,-2)*{}} ;
\endxy   \xrightarrow{f''}C''$    in $\mathcal{C}$, \begin{equation*}
    \Psi_{\gamma\#_0 f''}=\left[\left(({\widetilde{F}}(f)\#_0 \Psi_{f''}\right)\#_1\left(\Psi_{\gamma } \#_0 F(f'')\right)\right]\#_2\left[\left({\widetilde{F}}(\gamma)\#_0 \Psi_{f''}\right)^{-1}\#_1(\Psi_{f' }\#_0F(f''))\right]:
\end{equation*}
\begin{equation*}
       \xy
(0,0)*+{\scriptstyle F(C)}="1";
(20,0)*+{\scriptstyle F(C')}="2";
(0,20)*+{\scriptstyle {\widetilde{F}}(C)}="3";
(20,20)*+{\scriptstyle {\widetilde{F}}(C')}="4";
(40,0)*+{\scriptstyle {{F}}(C'')}="5";
(40,20)*+{\scriptstyle\widetilde F(C'')}="6";
(30,10)*+{(1)}="11";
(10,15)*+{(2)}="12";(10,0)*+{(3)}="13";
{\ar@/^ 1.15pc/^{\scriptscriptstyle F(f) } "1";"2"};
{\ar@/^ 1.15pc/^{ \scriptscriptstyle {\widetilde{F}}(f)} "3";"4"};
{\ar@{->}|-{\Psi_C  } "3";"1"};
{\ar@{->}|-{\Psi_{C'}  } "4";"2"};
{\ar@/_ 1.15pc/_{ \scriptscriptstyle F (f')} "1";"2"};
{\ar@{->} "2";"5"_{\scriptscriptstyle F (f'')  }};
{\ar@{->} "4";"6"^{\scriptscriptstyle {\widetilde{F}} (f'')  }};
{\ar@{->}|-{\Psi_{C''}  } "6";"5"};
\endxy
   \xy
  (0,0)*+{ }="1";
(8,0)*+{ }="2";
{\ar@3{->}^{ }  "1"+(0,10);"2"+(0,10)};
   \endxy
   \xy
(0,0)*+{\scriptstyle F(C)}="1";
(20,0)*+{\scriptstyle F(C')}="2";
(0,20)*+{\scriptstyle {\widetilde{F}}(C)}="3";
(20,20)*+{\scriptstyle {\widetilde{F}}(C')}="4";
(40,0)*+{\scriptstyle {{F}}(C'')}="5";
(40,20)*+{\scriptstyle\widetilde F(C'')}="6";
(30,10)*+{(2)}="11";
(10,5)*+{(3)}="12";(10,20)*+{(1)}="13";
{\ar@/_1.15pc/_{\scriptscriptstyle F(f') } "1";"2"};
{\ar@/^ 1.15pc/^{ \scriptscriptstyle {\widetilde{F}}(f)} "3";"4"};
{\ar@{->}|-{\Psi_C  } "3";"1"};
{\ar@{->}|-{\Psi_{C'}  } "4";"2"};
{\ar@/_ 1.15pc/_{ \scriptscriptstyle {\widetilde{F}} (f')} "3";"4"};
{\ar@{->} "2";"5"_{\scriptscriptstyle F (f'')  }};
{\ar@{->} "4";"6"^{\scriptscriptstyle {\widetilde{F}} (f'')  }};
{\ar@{->}|-{\Psi_{C''}  } "6";"5"};
\endxy
\end{equation*}

(5) (functoriality   with respect to $0$-composition of a  $1$-arrow with a $2$-arrow) for every  $ C
  \xrightarrow{f }\xy
(-8,0)*+{C'}="4";
(8,0)*+{C''}="6";
{\ar@/^ .85pc/^{f'} "4";"6"};
{\ar@/_ .85pc/_{f''} "4";"6"};
{\ar@{=>}^{\gamma' } (0,2)*{};(0,-2)*{}} ;
\endxy$    in $\mathcal{C}$,\begin{equation*}
    \Psi_{f\#_0\gamma '}=\left[\left({\widetilde{F}}(f )\#_0 \Psi_{f' }\right)\#_1(\Psi_{f }\#_0 F(\gamma '))\right]\#_2 \left  [\left( {\widetilde{F}}(f)\#_0 \Psi_{\gamma' }\right)\#_1(\Psi_{f } \#_0 F(f''))\right]
 :\end{equation*}
\begin{equation*}
       \xy
(0,0)*+{\scriptstyle F(C')}="1";
(20,0)*+{\scriptstyle F(C'')}="2";
(0,20)*+{\scriptstyle {\widetilde{F}}(C')}="3";
(20,20)*+{\scriptstyle {\widetilde{F}}(C'')}="4";
(-20,0)*+{\scriptstyle {{F}}(C )}="5";
(-20,20)*+{\scriptstyle\widetilde F(C )}="6";
(10,15)*+{(1)}="11";
(10,0)*+{(3)}="12";(-10,10)*+{(2)}="13";
{\ar@/^1.15pc/^{\scriptscriptstyle F(f') } "1";"2"};
{\ar@/^ 1.15pc/^{ \scriptscriptstyle {\widetilde{F}}(f')} "3";"4"};
{\ar@{->}|-{\Psi_{C'}  } "3";"1"};
{\ar@{->}|-{\Psi_{C''}  } "4";"2"};
{\ar@/_ 1.15pc/_{ \scriptscriptstyle F (f'')} "1";"2"};
{\ar@{->}|-{\scriptscriptstyle F (f )  } "5";"1"};
{\ar@{->}|-{\scriptscriptstyle {\widetilde{F}} (f )  } "6";"3"};
{\ar@{->}|-{\Psi_{C  }  } "6";"5"};
\endxy
   \xy
  (0,0)*+{ }="1";
(8,0)*+{ }="2";
{\ar@3{->}^{ }  "1"+(0,10);"2"+(0,10)};
   \endxy
   \xy
(0,0)*+{\scriptstyle F(C')}="1";
(20,0)*+{\scriptstyle F(C'')}="2";
(0,20)*+{\scriptstyle {\widetilde{F}}(C')}="3";
(20,20)*+{\scriptstyle {\widetilde{F}}(C'')}="4";
(-20,0)*+{\scriptstyle {{F}}(C )}="5";
(-20,20)*+{\scriptstyle\widetilde F(C )}="6";
(-10,10)*+{(3)}="11";
(10,5)*+{(2)}="12";(10,20)*+{(1)}="13";
{\ar@/_1.15pc/_{\scriptscriptstyle F(f'') } "1";"2"};
{\ar@/^ 1.15pc/^{ \scriptscriptstyle {\widetilde{F}}(f')} "3";"4"};
{\ar@{->}|-{\Psi_{C'}  } "3";"1"};
{\ar@{->}|-{\Psi_{C'' }  } "4";"2"};
{\ar@/_ 1.15pc/_{ \scriptscriptstyle {\widetilde{F}} (f'')} "3";"4"};
{\ar@{->}|-{\scriptscriptstyle F (f )  } "5";"1"};
{\ar@{->}|-{\scriptscriptstyle {\widetilde{F}} (f )  } "6";"3"};
{\ar@{->}|-{\Psi_{C }  } "6";"5"};
\endxy
\end{equation*}

(6) (functoriality   with respect to identities) for every $C$ in $ \mathcal{C}$, $\Psi_{id_C} = id_{\Psi_C}$, and for every
$f : C \rightarrow C'$ in $ \mathcal{C}$, $\Psi_{id_f} = id_{\Psi_f }$.

\vskip 4mm
In the definition of $\Psi_{\gamma\#_0 f''}$ in (4) and $\Psi_{f\#_0\gamma' }$ in (5), the interchanging $3$-arrows  are   used to interchange the order of $2$-arrows.

\section{The $3$-connections and the $3$-gauge transformations}

 \subsection{$1$-path, $2$-path and $3$-path groupoids}

For  a positive integer $n$,  an  {\it  $n$-path}  is   a smooth map $\alpha : [0,1]^n = [0,1] \times [0,1]^{n-1}\rightarrow X$,  for which
there exists an $\epsilon > 0 $ such that $\alpha(t_1, \ldots , t_n) = \alpha(0, t_2, \ldots , t_n)$ for $t_1\leq \epsilon$, and analogously for any other face of $[0,1]^n$, of any
dimension. We will abbreviate this property as saying that $\alpha$ has a  {\it  product structure} close to the boundary of the $n$-cube.
We also require that $ \alpha(\{0\}\times [0,1]^{n-1})$ and $\alpha(\{1\}\times [0,1]^{n-1})$  both consist of just a single point.

 Given an $n$-path $\alpha$ and an $i\in\{1,  \ldots , n\}$,  we   define $(n-1)$-paths  $\partial_i^- (\alpha)$ and  $\partial_i^+ (\alpha)$  by restricting it to $[0,1]^{i-1}\times\{0\}\times [0,1]^{n-i}$
and $[0,1]^{i-1}\times\{1\}\times [0,1]^{n-i}$ . By definition,  $\partial_1^\pm (\alpha)$ must be constant $(n-1)$-paths. Given two $n$-paths $\alpha$ and $\beta$ with $\partial_i^+ (\alpha)=\partial_i^- (\beta)$,
 we have obviously the composition $\alpha\#_i\beta$ (we only consider $n=1,2,3$ in this paper). The product structure   guarantees $\alpha\#_i\beta$ also to be a
$n$-path. This is why the condition of product structure is imposed to an $n$-path.

For example,
a $1$-path is a smooth map
$\gamma : [0,1] \rightarrow X$ with sitting instants, i.e.,
there exists $ 0 <\epsilon < \frac 12$
 such that $\gamma
(t) =
\gamma(0)$ for $ 0 \leq t < \epsilon$ and
$\gamma (t) =\gamma
(1)$ for $1 -\epsilon< t \leq 1$.
Two $1$-paths $\gamma_1, \gamma_2: [0; 1] \rightarrow X$
  are called {\it rank-$1$ homotopic} (cf. \cite{CP})
  if there exists a $2$-path $\Gamma$ such that
(1) $\partial_2^- (\Gamma)= \gamma_1$, $\partial_2^+ (\Gamma)= \gamma_2$;
(2) the differential of $\Gamma$ at each point of $[0,1]^2$ has at most rank $1$.
The quotient of  the set of $1$-paths of $X$, by the relation of the rank-$1$ homotopy,  is denoted by $\mathcal{S}_1
(X)$. We call the elements of $\mathcal{S}_1
(X)$ $1$-tracks.
The category with objects $X$ and arrows    $\mathcal{S}_1
(X)$ is
  a groupoid,  called the {\it (thin) path groupoid} $\mathcal{P}_1(X)$
of $X$.

The quotient  of the set of $2$-paths of $X$, by the relation of the laminated rank-$2$ homotopy,  is denoted by $S^l_2
(X)$. We call the elements of $\mathcal S^l_2
(X)$ laminated $2$-tracks.
A $3$-path $(t_1, t_2, t_3)
\rightarrow J (t_1, t_2, t_3)$ is called {\it good} if the restrictions $\partial_2^\pm ( J )$ each are independent of $t_3$.
Denote by $\mathcal{S}_3(X)$  the set of all good $3$-paths up to the  rank-$3$ homotopy (with the laminated boundary).  (cf. \cite{MP11}  for the laminated rank-$2$ homotopy and rank-$3$ homotopy  with the laminated boundaries. We will not use these concepts precisely).
Vertical and horizontal compositions of laminated $2$-tracks, whiskering $2$- and $3$-tracks by $1$-tracks, the interchange $3$-tracks,   vertical   compositions of $3$-tracks, etc., are all well defined. Boundaries  $\partial_1^\pm  $ of good $3$-paths are   $0$-sources and   $0$-targets $\mathcal{S}_3(X)\rightarrow X$, boundaries  $\partial_2^\pm  $ of good $3$-paths are   $1$-sources and   $1$-targets $\mathcal{S}_3(X)\rightarrow \mathcal{S}_1(X)$,
and boundaries  $\partial_3^\pm  $ of good $3$-paths are   $2$-sources and  $2$-targets $\mathcal{S}_3(X)\rightarrow \mathcal{S}_2^l(X)$.

\begin{thm} {\it (Theorem 2.4 in \cite{MP11} )}
     Let $X$ be a smooth manifold. The sets of $1$-tracks, laminated $2$-tracks and $3$-tracks can be arranged into a Gray $3$-groupoid
$\mathcal{P}_3(X) =(X,\mathcal{S}_1(X),\mathcal{S}^l_2
(X)$, $\mathcal{S}_3(X))
$. In particular, $\mathcal{P}_2(X) =(X,\mathcal{S}_1(X),\mathcal{S}^l_2
(X) )
$ is automatically a $2$-groupoid.
 \end{thm}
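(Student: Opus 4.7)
The plan is to verify the axioms of a $\mathbf{Gray}$ $3$-groupoid listed in Section 4.1 directly for the concrete data $(X,\mathcal{S}_1(X),\mathcal{S}_2^l(X),\mathcal{S}_3(X))$. The product-structure condition built into the definition of an $n$-path is precisely what makes each composition $\#_i$ yield an $n$-path again (the sitting instants guarantee smoothness at the junction), so the constructions at the level of representatives are straightforward. First I would fix, for representatives, the standard formulas: source/target maps $\partial_i^\pm$, the identity $n$-path $\mathrm{id}_\alpha(t_1,\ldots,t_n)=\alpha(t_2,\ldots,t_n)$, vertical compositions $\#_i$ by concatenation in the $i$-th coordinate (rescaled so as to preserve the product-structure condition), and whiskering by lower-dimensional paths by extending constantly in the new coordinates.

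Next I would show that all these operations descend to the quotients by the appropriate laminated rank-$n$ homotopy. This is the first place routine care is needed: one must produce explicit homotopies realising (i) independence of the chosen reparametrisation used in concatenation, (ii) independence of the representative in $\mathcal{S}_1$, $\mathcal{S}_2^l$, $\mathcal{S}_3$, and (iii) that the resulting associators and unitors are trivial on tracks (so that $\mathcal{P}_1(X)$ is a genuine groupoid, each hom-$2$-category $\mathcal{P}_2(X)(x,y)$ is strict, and $(\mathcal{C}(C,C'),\#_1,\#_2)$ is a strict $2$-category). These are the standard reparametrisation homotopies; they are of rank $\leq n$ because they only shuffle the $i$-th coordinate, hence are killed in the quotient. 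Inverses are obtained by reversing the appropriate coordinate, and the cancellation laws follow from a collapsing homotopy, giving the groupoid property at each level.

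The genuinely interesting point is axiom (5), the interchange $3$-arrow. For two laminated $2$-tracks $\gamma:f\Rightarrow f'$ and $\delta:g\Rightarrow g'$ that are horizontally composable, the two natural horizontal composites $(\gamma\#_0 g)\#_1(f'\#_0\delta)$ and $(f\#_0\delta)\#_1(\gamma\#_0 g')$ correspond to performing $\gamma$ before or after $\delta$ in the $2$-cube $[0,1]^2$. These are in general distinct laminated $2$-tracks (laminating forbids the obvious rank-$2$ homotopy that swaps their order), but they are connected by a canonical $3$-path $\gamma\#_0\delta$, defined by linearly interpolating in a third coordinate the time at which $\gamma$ and $\delta$ are executed; this $3$-path is good because on the faces $t_3\in\{0,1\}$ it reduces to the two composites, and its $\partial_2^\pm$ are independent of $t_3$. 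The main technical work, and the step I expect to be the principal obstacle, is checking the functoriality axioms (6) and (7) for this interchanger together with axioms (8)--(9), i.e.\ that the canonical interchanger behaves correctly under further vertical and whiskering compositions, and that it is strictly natural in $1$-arrows and identities. This amounts to writing down explicit rank-$3$ homotopies between the two sides of each identity, exploiting laminated boundaries so that the boundary restrictions of these homotopies stay on the nose equal.

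Finally, having all operations and axioms in place, the Gray groupoid property is checked by producing for each $n$-track an explicit inverse (reverse the corresponding coordinate) and using the collapse homotopy to exhibit the required equivalences with identities; this verifies that every $k$-arrow is invertible for $k=1,2,3$. In particular, setting aside the $3$-arrows gives $\mathcal{P}_2(X)=(X,\mathcal{S}_1(X),\mathcal{S}_2^l(X))$, in which horizontal composition is already strictly well-defined on laminated $2$-tracks (the interchanger collapses to an identity after taking the quotient by rank-$3$ homotopy with laminated boundary), so $\mathcal{P}_2(X)$ is automatically a strict $2$-groupoid, as claimed.
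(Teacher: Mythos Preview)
The paper does not prove this theorem at all: it is stated with an explicit attribution ``Theorem 2.4 in \cite{MP11}'' and is simply quoted from Martins--Picken as a black box, so there is no proof in the paper to compare your proposal against. Your outline is a reasonable sketch of how the argument in \cite{MP11} actually goes---verifying the $\mathbf{Gray}$-groupoid axioms directly on representatives and showing everything descends to the laminated/rank-$n$ quotients---and the identification of the interchanger construction and axioms (6)--(7) as the genuinely delicate part is accurate.

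That said, your final paragraph contains a misstep. You argue that $\mathcal{P}_2(X)$ is a strict $2$-groupoid because ``the interchanger collapses to an identity after taking the quotient by rank-$3$ homotopy with laminated boundary.'' But $\mathcal{P}_2(X)=(X,\mathcal{S}_1(X),\mathcal{S}_2^l(X))$ contains no $3$-cells, so rank-$3$ homotopy plays no role in it; you cannot use the $3$-track quotient to identify the two horizontal composites at the level of $2$-tracks. The correct reason $\mathcal{P}_2(X)$ is a $2$-groupoid is that the \emph{laminated} rank-$2$ homotopy relation on $2$-paths is designed precisely so that the two whiskered horizontal composites $(\gamma\#_0 g)\#_1(f'\#_0\delta)$ and $(f\#_0\delta)\#_1(\gamma\#_0 g')$ already represent the same laminated $2$-track (this is the content of the ``laminated'' condition in \cite{MP11}), so strict interchange holds on $\mathcal{S}_2^l(X)$ without invoking any $3$-cells. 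The Gray interchanger in $\mathcal{P}_3(X)$ is then a genuine $3$-track whose $2$-source and $2$-target happen to coincide as laminated $2$-tracks, but which is itself not an identity $3$-track; this is what makes the Gray structure nontrivial while still allowing $\mathcal{P}_2(X)$ to be strict.
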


  \subsection{Vanishing of  fake $1$-  and    $2$-curvatures} Given  a $3$-connection   $(A,B,C)$, we can
  construct $1$-, $2$- and $3$-dimensional holonomies, which constitute a smooth $\mathbf{Gray}$-functor from
 the $3$-groupoid $\mathcal{P}_3(X)$ to the $\mathbf{Gray}$ $3$-groupoid
$ \mathcal{G}^{\mathscr L}$ (cf. \cite{MP11}).  Conversely, let us   derive  a $3$-connection   $(A,B,C)$   as derivatives of a smooth $\mathbf{Gray}$-functor from the
 $3$-groupoid $\mathcal{P}_3(X)$ to the $\mathbf{Gray}$ $3$-groupoid
$ \mathcal{G}^{\mathscr L}$. Its fake $1$-  and    $2$-curvatures vanish in this case.

For $(x_1, x_2)\in \mathbb{R}^2$, choose
a  $2$-path $\dot{\Sigma}_{x_1, x_2}$ in $ \mathcal{S}^l_2(\mathbb{R}^2)$  to be
   \begin{equation}\label{eq:Sigma}\xy
  (0,0)*+{  }="1";
(10,0)*+{ }="2";(0,20)*+{[0,1]^2  }="3";(7,20)*+{ }="5";
(25,20)*+{ }="4";
{\ar@{->} "5";"4"^{\dot{\Sigma}_{x_1, x_2}  } };
   \endxy
  \xy 0;/r.17pc/:
( 15,15)*+{\scriptstyle   (0, x_2)}="3";
(55,15)*+{\scriptstyle ( x_1,x_2 )}="4";
(  15,55)*+{\scriptstyle(0,0)}="7";
(55,55)*+{\scriptstyle( x_1,0)}="8";
{\ar@{-->}|-{  } "7";"3" };
{\ar@{-->}|-{  } "3";"4" };
{\ar@{~>}|-{  } "8";"4" };
{\ar@{~>}|-{ } "7";"8" };
{\ar@{=>}|-{ \scriptscriptstyle \dot{\Sigma}_{x_1,x_2 }  } (50,50)*{};(20,20)*{}} ;
 \endxy \xy
  (0,0)*+{  }="1";
(10,0)*+{ }="2";(0,20)*+{   }="3";
(20,20)*+{ }="4";
{\ar@{->} "3";"4"^{\Gamma } };
   \endxy
  \xy 0;/r.17pc/:
( 15,15)*+{\bullet}="3";
(55,15)*+{\bullet}="4";
(  15,55)*+{\bullet}="7";
(55,55)*+{\bullet}="8";
{\ar@{-->}|-{\gamma^{x_2 } } "7";"3" };
{\ar@{-->}|-{\gamma^{x_1;x_2 } } "3";"4" };
{\ar@{~>}|-{ \gamma^{x_2;x_1 } } "8";"4" };
{\ar@{~>}|-{ \gamma^{x_1 }} "7";"8" };
{\ar@{=>}|-{ \Sigma_{x_1,x_2 } } (50,50)*{};(20,20)*{}} ;
 \endxy
      \end{equation}In the $2$-path $\dot{\Sigma}_{x_1, x_2}$,
  the wavy line is the $1$-source and  the dotted  line is the $1$-target.
      $\dot{\Sigma}_{x_1, x_2}$ can be constructed by dilation from one fixed $2$-path $\dot{\Sigma}_{1, 1}$. So it is a smooth family of $2$-paths $\dot{\Sigma}$ in $ \mathcal{S}^l_2(\mathbb{R}^2)$. It is important to see that the wavy and  the dotted lines are smooth $1$-path by the product structure, although their images in $\mathbb{R}^2$ are not smooth.

For fixed $x\in X$ and tangential vectors $  v_1, v_2\in T_xX$, choose a smooth mapping $\Gamma:\mathbb{R}^2\rightarrow   X$ such that $\Gamma(0)=x$ and
\begin{equation}\label{eq:Gamma}
     v_j= \frac {\partial\Gamma}{\partial x_j}(0,0) ,\qquad j=1,2.
\end{equation}
Then $\Sigma_{x_1, x_2}:=\Gamma\circ\dot{\Sigma}_{x_1, x_2}$ is a $2$-path in $\mathcal{S}^l_2(X)$.   We use notations $\gamma^{x_i;x_j }$ for the $1$-path of $ \Sigma_{x_1, x_2}$ corresponding to line    $  [0,x_i]\times \{x_j\}$ in   $\dot{\Sigma}_{x_1, x_2}$, and  $\gamma^{x_i  }$ for the $1$-path corresponding to line    $  [0,x_i]\times \{0\}$.

A smooth $\mathbf{Gray}$-functor $F:\mathcal{P}_3(X)\rightarrow \mathcal{G}^{\mathscr L}$ is given  by smooth mappings $F_0 :\mathcal{S}_0(X)\rightarrow  \{\bullet\}$, $ F_1:\mathcal{S}_1(X)\rightarrow  {G}$, $F_2 :\mathcal{S}^l_2(X)\rightarrow G\times {H}$ and $F_3 :\mathcal{S}_3(X)\rightarrow G\times {H}\times L$. Denote by $\pi_H:G\times {H}\rightarrow H$   the projection.
Then
\begin{equation}\label{eq:f-F}
     F_1(\gamma^{x_2 })F_1( \gamma^{x_1;x_2 }  ) =\alpha( \pi_H \circ F_2(\Sigma_{x_1, x_2}))^{-1} F_1( \gamma^{x_1 })F_1( \gamma^{x_2;x_1 } ).
\end{equation}
 Define
 \begin{equation}\label{eq:A-A-B}
     A_x(v_j)=\left.\frac {\partial F_1(\gamma^{x_j })}{\partial x_j}\right|_{x_j=0},
  \quad
    B_x(v_1,v_2)=  \left.\frac {\partial^2 \pi_H \circ F_2( {\Sigma}_{x_1, x_2})}{\partial x_1 \partial x_2 }\right|_{x_1=0,x_2=0},
 \end{equation}$j=1,2$. We claim that $B_x$ is a $2$-form, i.e., $B_x(v_1,v_2)=-B_x(v_2 ,v_1 )$ (cf. Lemma 3.7 in \cite{SW11}).
 Set $\overline{\Gamma}(s, t) := \Gamma(t, s)$.   Note that $\overline{{\Sigma}}_{x_1, x_2}=\overline{\Gamma}\circ \dot {\Sigma}_{x_1, x_2} = {\Gamma}\circ  \dot{{\Sigma}}_{x_1, x_2}^{-1} ={{\Sigma}}_{x_1, x_2}^{-1}$, where
$ {\Sigma}_{x_1, x_2}^{-1} $ is the $2$-arrow inverse to the $2$-arrow $\Sigma_{x_1, x_2}  $ under vertical composition.
Since the $2$-functor $F$ sends the inverse $2$-arrow  to the inverse group element, we have
 $\pi_H \circ F_2(\overline{\Sigma}_{x_1, x_2})=\pi_H \circ F_2( {\Sigma}_{x_1, x_2})^{-1} $. Hence, by taking derivatives, we get $B_x(v_2 ,v_1 )=-B_x(v_1,v_2)$. Moreover, $B$ is independent of the choice of the mapping $\Gamma$ in (\ref{eq:Gamma}) (cf. Lemma 3.6 in \cite{SW11}).

 Take derivatives $\frac {\partial^2}{\partial x_1 \partial x_2} $ at $(0,0)$ on both sides of (\ref{eq:f-F}) to get
 \begin{equation*}A_x(v_2)A_x(v_1)+v_2 A_x(v_1)=-\alpha (B_x(v_1,v_2))+A_x(v_1)A_x(v_2)+
     v_1 A_x(v_2),
 \end{equation*}
by using
\begin{equation}\label{eq:der-F=0}
   \left. \frac {\partial \pi_H \circ F_2( {\Sigma}_{x_1, x_2})}{\partial x_j}\right |_{(0,0)}=0,
\end{equation}
which follows from  $ \pi_H \circ F_2( {\Sigma}_{0, x_2})   =\pi_H \circ F_2( {\Sigma}_{x_1, 0})=1_H$. Thus,
  $
     \mathcal{ F}_1=   dA+A\wedge A-\alpha(B)=0.
   $

 Now for fixed $(x_1,x_2,x_3)\in \mathbb{R}^3$,   choose
a  good $3$-path $\dot{{\Omega}}_{x_1,x_2,x_3}  $ in $ \mathcal{{S}}_3(\mathbb{R}^3)$  to be
  \begin{equation}\label{eq:Omega}\xy
  (0,0)*+{  }="1";
(10,0)*+{ }="2";(0,20)*+{[0,1]^3  }="3";(7,20)*+{ }="5";
(25,20)*+{ }="4";
{\ar@{->} "5";"4"^{\dot{{\Omega}}_{x_1,x_2,x_3}   } };
   \endxy
   \xy 0;/r.17pc/:
   (0,0 )*+{\scriptscriptstyle(0,x_2,x_3) }="1";
(40, 0)*+{\scriptscriptstyle(x_1,x_2,x_3) }="2";
( 15,15)*+{\scriptscriptstyle(0,0,x_3)}="3";
(55,15)*+{\scriptscriptstyle(x_1,0,x_3) }="4";
(0,40 )*+{\scriptscriptstyle(0,x_2,0) }="5";
(40, 40)*+{\scriptscriptstyle(x_1,x_2,0) }="6";
(  15,55)*+{\scriptscriptstyle(0,0,0) }="7";
(55,55)*+{\scriptscriptstyle (x_1,0,0)}="8";
 {\ar@{-->}|-{ }  "1";"2" };
{\ar@{-->}|-{ } "3";"1" };
{\ar@{->}|-{ } "5";"1" };
{\ar@{->}|-{ } "4";"2" };
{\ar@{~>}|-{ } "6";"2" };
{\ar@{-->}|-{ } "7";"3" };
{\ar@{->}|-{ } "3";"4" };
{\ar@{->}|-{ } "8";"4" };
{\ar@{->}|-{ } "5";"6" };
{\ar@{->}|-{ } "7";"5" };
{\ar@{~>}|-{ } "8";"6" };
{\ar@{~>}|-{ } "7";"8" };
 \endxy
\xy
  (0,0)*+{  }="1";
(10,0)*+{ }="2";(0,20)*+{   }="3";
(10,20)*+{ }="4";
{\ar@{->} "3";"4"^{\Gamma } };
   \endxy
     \xy 0;/r.17pc/:
(0,0 )*+{\bullet}="1";
(40, 0)*+{\bullet}="2";
( 15,15)*+{\bullet}="3";
(55,15)*+{\bullet}="4";
(0,40 )*+{\bullet}="5";
(40, 40)*+{\bullet}="6";
(  15,55)*+{\bullet}="7";
(55,55)*+{\bullet}="8";
{\ar@{-->}|-{\gamma^{x_1;x_2,x_3}} "1";"2" };
{\ar@{-->}|-{\gamma^{x_2 ;x_3}}"3";"1"};
{\ar@{->}|-{\gamma^{x_3; x_2}} "5";"1" };
{\ar@{->}|-{\gamma^{x_2;x_1,x_3}} "4";"2" };
{\ar@{~>}|-{\gamma^{x_3;x_1,x_2}} "6";"2" };
{\ar@{-->}|-{\gamma^{x_3}} "7";"3" };
{\ar@{->}|-{\gamma^{x_1;x_3}} "3";"4" };
{\ar@{->}|-{\gamma^{x_3;x_1}} "8";"4" };
{\ar@{->}|-{\gamma^{x_1;x_2 }} "5";"6" };
{\ar@{->}|-{\gamma^{x_2}} "7";"5" };
{\ar@{~>}|-{\gamma^{x_2;x_1 }} "8";"6" };
{\ar@{~>}|-{\gamma^{x_1}} "7";"8" };
 \endxy
\end{equation}In the $3$-path $\dot{\Omega}_{x_1,x_2,x_3}$,
  the wavy line is its $1$-source and  the dotted  line is its $1$-target.
      $\dot{\Omega}_{x_1,x_2,x_3}$ can be constructed by dilation from one fixed  good $3$-path $\dot{\Omega}_{ 1,1,1}$. So it is a smooth family of  good $3$-paths $\dot{\Omega}$ in $ \mathcal{S}_3(\mathbb{R}^3)$.

For fixed $x\in X$ and tangential vectors $  v_1, v_2, v_3\in T_xX$, choose a smooth mapping $\Gamma:\mathbb{R}^3\rightarrow   X$ such that $\Gamma(0,0,0)=x$ and
\begin{equation*}
     v_j= \frac {\partial\Gamma}{\partial x_j} (0,0,0) ,\qquad j=1,2,3.
\end{equation*}Then $\Omega_{x_1,x_2,x_3}:=\Gamma\circ\dot{\Omega}_{x_1,x_2,x_3}$ is a good $3$-path in $ \mathcal{S}_3(X)$.

We use notations $\gamma^{x_i;x_j,x_k}$   for the $1$-path of $ \Omega_{x_1,x_2,x_3}$ corresponding to line    $  [0,x_i]\times \{x_j\} \times \{x_k\}$ in   $\dot{\Omega}_{x_1,x_2,x_3}$,   $\gamma^{x_i  }$ for the $1$-path corresponding to line    $  [0,x_i]\times \{0\}\times \{0\}$, etc.. Similarly, we denote by $\Sigma_{x_i, x_j;x_k}$ the $2$-path of $ \Omega_{x_1,x_2,x_3}$ corresponding  to the $2$-cell $[0,x_i]\times [0,x_j]\times \{x_k\}$ in   $\dot{\Omega}_{x_1,x_2,x_3}$, and by $\Sigma_{x_i, x_j }$  the $2$-path of $ \Omega_{x_1,x_2,x_3}$ corresponding  to the $2$-cell $[0,x_i]\times [0,x_j]\times \{0\}$, etc..

$ \Omega_{x_1,x_2,x_3}$ is a good $3$-path in $ \mathcal{S}_3(X)$ with the $2$-source $\Sigma_{- }$ and     the $2$-target $\Sigma_{+}$ as follows:
\begin{equation}\label{eq:Omega-boundary}
  \xy 0;/r.17pc/:
(0,0 )*+{\bullet}="1";
(40, 0)*+{\bullet}="2";
( 15,15)*+{\bullet}="3";
(55,15)*+{\bullet}="4";
(40, 40)*+{\bullet}="6";
(  15,55)*+{\bullet}="7";
(55,55)*+{\bullet}="8";
{\ar@{-->}|-{ } "1";"2" };
{\ar@{-->}|-{ } "3";"1" };
{\ar@{->} "4";"2"^{\gamma^{x_2;x_1,x_3} } };
{\ar@{~>}|-{ } "6";"2" };
{\ar@{-->}|-{\gamma^{x_3} } "7";"3" };
{\ar@{->}|-{ } "3";"4" };
{\ar@{->}|-{ } "8";"4" };
{\ar@{~>}|-{ } "8";"6" };
{\ar@{~>}|-{\gamma^{x_1} } "7";"8" };
{\ar@{=>}|-{\scriptscriptstyle \Sigma_{x_2,x_3 ; x_1} } (42,38)*{};(53,17)*{}} ;
{\ar@{=>}|-{\scriptscriptstyle \Sigma_{x_1,x_3 } } (40,53)*{};(18,18)*{}} ;
{\ar@{=>}|-{\scriptscriptstyle \Sigma_{x_1,x_2;x_3 } } (38,13)*{};(4,2)*{}} ;
 \endxy \xy
(0,0)*+{}="A";
(15,0)*+{}="B";
{\ar@3{->}"A" +(0,20); "B"+(0,20)^{ \Omega_{x_1,x_2,x_3} }};
\endxy
  \xy 0;/r.17pc/:
(0,0 )*+{\bullet}="1";
(40, 0)*+{\bullet}="2";
( 15,15)*+{\bullet}="3";
(0,40 )*+{\bullet}="5";
(40, 40)*+{\bullet}="6";
(  15,55)*+{\bullet}="7";
(55,55)*+{\bullet}="8";
{\ar@{-->}|-{ } "1";"2"_{\gamma^{x_1;x_2,x_3} } };
{\ar@{-->}|-{ } "3";"1" };
{\ar@{->}|-{ } "5";"1" };
{\ar@{~>}|-{ } "6";"2"^{\gamma^{x_3;x_1,x_2} } };
{\ar@{-->}|-{ } "7";"3" };
{\ar@{->}|-{ } "5";"6" };
{\ar@{->}|-{\gamma^{ x_2} } "7";"5" };
{\ar@{~>}|-{ } "8";"6" };
{\ar@{~>}|-{ } "7";"8" };
{\ar@{=>}|-{\scriptscriptstyle \Sigma_{x_1,x_2 } } (51,53)*{};(17,42)*{}} ;
{\ar@{=>}|-{\scriptscriptstyle \Sigma_{x_1,x_3;x_2 } } (38,38)*{};(12,3)*{}} ;
{\ar@{=>}|-{\scriptscriptstyle \Sigma_{x_2,x_3  } } (2,38)*{};(13,17)*{}} ;
 \endxy
    \end{equation}
More precisely, we have a $3$-arrow  $(*,\Sigma_-,\Omega_{x_1,x_2,x_3})$ in the $3$-groupoid
$\mathcal{P}_3(X)$ with
\begin{equation}\label{eq:Sigma+-}\begin{split}  &\Sigma_-:=\left[\gamma^{x_1 }\#_0\Sigma_{x_2,x_3 ; x_1} \right]\#_1[\Sigma_{x_1,x_3 }\#_0\gamma^{x_2;x_1,x_3 }]\#_1[\gamma^{ x_3 }\#_0 \Sigma_{x_1,x_2;x_3 }]
\\
 \xy
(0,0)*+{}="A";
(10,0)*+{}="B";
{\ar@3{->}"A"+(0,1)  ; "B"+(0,1) };
\endxy  &\Sigma_+:=
[ \Sigma_{x_1,x_2 }\#_0\gamma^{x_3;x_1,x_2}]\#_1
  [\gamma^{x_2}\#_0\Sigma_{x_1,x_3; x_2}]\#_1[\Sigma_{x_2,x_3  } \#_0 \gamma^{x_1;x_2,x_3}].\end{split}\end{equation}
  The $2$-source $\Sigma_{- }$ and     the $2$-target $\Sigma_{+}$ are compositions of 3 whiskered $2$-arrows, respectively,  as follows:
  \begin{equation}\begin{split}  &
  \xy 0;/r.17pc/:
(0,0 )*+{\bullet}="1";
(40, 0)*+{\bullet}="2";
( 20,20)*+{\bullet}="3";
(20,-20)*+{\bullet}="4";
(-20,  0)*+{\bullet}="5";(20,0)*+{}="A";
 {\ar@{~>}|-{ } "1";"3" };
{\ar@{->}|-{ } "1";"4" };
{\ar@{~>}|-{  } "3";"2" };
{\ar@{->}|-{  } "4";"2"_{\gamma^{x_2;x_1,x_3} }  };
{\ar@{~>}|-{\gamma^{x_1 } } "5";"1" };
{\ar@{=>}|-{\scriptscriptstyle \Sigma_{x_2,x_3 ; x_1}  } (20,16)*{};(20,-16)*{}}
 \endxy,\qquad\qquad  \xy 0;/r.17pc/:
(0,0 )*+{\bullet}="1";
(40, 0)*+{\bullet}="2";
( 20,20)*+{\bullet}="3";
(20,-20)*+{\bullet}="4";
(60,  0)*+{\bullet}="5";{\ar@{=>}|-{\scriptscriptstyle \Sigma_{x_1,x_2 }  } (20,16)*{}; (20,-16)*{}} ;
 {\ar@{~>}|-{ } "1";"3" };
{\ar@{->}|-{ \gamma^{x_2 }} "1";"4" };
{\ar@{~>}|-{ } "3";"2" };
{\ar@{->}|-{ } "4";"2" };
{\ar@{~>}|-{}"2" ;"5"^{\gamma^{x_3;x_1,x_2} }};
 \endxy, \\&
  \xy 0;/r.17pc/:
(0,0 )*+{\bullet}="1";
(40, 0)*+{\bullet}="2";
( 20,20)*+{\bullet}="3";
(20,-20)*+{\bullet}="4";
(60,  0)*+{\bullet}="5"; {\ar@{=>}|-{\scriptscriptstyle   \Sigma_{x_1,x_3  } } (20,16)*{}; (20,-16)*{}};
 {\ar@{~>}|-{\gamma^{x_1 }  } "1";"3" };
{\ar@{-->}|-{\gamma^{x_3 }   } "1";"4" };
{\ar@{->}|-{ } "3";"2" };
{\ar@{->}|-{ } "4";"2" };
{\ar@{->}|-{  } "2";"5"^{\gamma^{x_2;x_1,x_3} } };
 \endxy,\qquad\qquad  \xy 0;/r.17pc/:
(0,0 )*+{\bullet}="1";
(40, 0)*+{\bullet}="2";
( 20,20)*+{\bullet}="3";
(20,-20)*+{\bullet}="4";
(-20,  0)*+{\bullet}="5"; {\ar@{=>}|-{\scriptscriptstyle   \Sigma_{x_1,x_3 ; x_2} } (20,16)*{};(20,-16)*{}};
 {\ar@{->}|-{ } "1";"3" };
{\ar@{->}|-{ } "1";"4" };
{\ar@{~>}|-{ } "3";"2"^{\gamma^{x_3;x_1,x_2} } };
{\ar@{-->}|-{ } "4";"2"_{ \gamma^{x_1;x_2,x_3}} };
{\ar@{->}|-{\gamma^{x_2 } } "5";"1" };
 \endxy, \\&
 \xy 0;/r.17pc/:
(0,0 )*+{\bullet}="1";
(40, 0)*+{\bullet}="2";
( 20,20)*+{\bullet}="3";
(20,-20)*+{\bullet}="4";
(-20,  0)*+{\bullet}="5"; {\ar@{=>}|-{\scriptscriptstyle \Sigma_{x_1 ,x_2;x_3 }  } (20,16)*{}; (20,-16)*{}};
 {\ar@{->}|-{ } "1";"3" };
{\ar@{-->}|-{ } "1";"4" };
{\ar@{->}|-{ } "3";"2"^{\gamma^{x_2;x_1,x_3} } };
{\ar@{-->}|-{ } "4";"2" };
{\ar@{-->}|-{  } "5";"1"^{\gamma^{x_3 } } };
 \endxy, \qquad\qquad  \xy 0;/r.17pc/:
(0,0 )*+{\bullet}="1";
(40, 0)*+{\bullet}="2";
( 20,20)*+{\bullet}="3";
(20,-20)*+{\bullet}="4";
(60,  0)*+{\bullet}="5"; {\ar@{=>}|-{\scriptscriptstyle   \Sigma_{ x_2 , x_3} } (20,16)*{}; (20,-16)*{}};
 {\ar@{->}|-{\gamma^{x_2 } } "1";"3" };
{\ar@{-->}|-{ } "1";"4" };
{\ar@{->}|-{ } "3";"2" };
{\ar@{-->}|-{ } "4";"2" };
{\ar@{-->}|-{ } "2" ;"5"^{\gamma^{x_1;x_2,x_3}} };
 \endxy.
 \end{split}   \end{equation}

Then $ \hat F_2( \Sigma_+)=\delta(\hat F_3(\Omega_{x_1,x_2,x_3}))^{-1} \hat F_2(\Sigma_-)$, i.e.,
 \begin{equation}\label{eq:F+-}\begin{split}
 &  \hat F_2( \Sigma_{x_1,x_2 })\cdot F_1(\gamma^{x_2 })\rhd
 \hat F_2(  \Sigma_{x_1,x_3; x_2})\cdot  \hat F_2(\Sigma_{x_2,x_3  })\\=& \delta(\hat F_3(\Omega_{x_1,x_2,x_3}))^{-1} \cdot F_1(\gamma^{x_1 })\rhd  \hat F_2( \Sigma_{x_2,x_3 ; x_1 })\cdot
  \hat F_2(  \Sigma_{x_1,x_3 })\cdot F_1(\gamma^{x_3 })\rhd \hat F_2(\Sigma_{x_1,x_2;x_3   }),
\end{split} \end{equation} where $ \hat F_2=\pi_H \circ F_2$, $ \hat F_3=\pi_L \circ F_3$ and $\pi_L:G\times {H}\times L\rightarrow L$ is the projection. Here we use Remark \ref{rem:whiskering} that in the Gray $3$-groupoid $ \mathcal{ G}^{\mathscr L}$,    whiskering from right by a $ 1$-arrow   is trivial.
Set
\begin{equation}\label{eq:connect-C}
     C(v_1,v_2,v_3):= \left.\frac {\partial^3 \hat F_3(\Omega_{x_1,x_2,x_3})}{\partial x_1\partial x_2\partial x_3}\right|_{x_1=x_2=x_3=0}.
\end{equation}
$C$ is a $3$-form, in the same way as $B$ is a $2$-form.
 Take derivatives $\frac {\partial^3}{\partial x_1\partial x_2\partial x_3}$ at $(0,0,0)$ on both sides of (\ref{eq:F+-}), noting (\ref{eq:der-F=0}), to get
\begin{equation*}\begin{split}&A(v_2)\rhd B(v_1,v_3) +v_2  B(v_1,v_3)  \\
     =& -\delta(C(v_1,v_2,v_3))+
     A(v_1)\rhd B(v_2,v_3)+v_1  B(v_2,v_3) +A(v_3)\rhd B(v_1,v_2)+v_3 B(v_1,v_2).
\end{split}\end{equation*}
Thus,
 $
       \mathcal{F}_2=  dB+A \wedge^\rhd B-\delta(C)=0.
  $

\subsection{The  gauge transformations}
 \begin{prop} Suppose $(A,B,C)$ and $(\widetilde{A},\widetilde{B},\widetilde{C})$ are $3$-connections constructed   from smooth $\mathbf{Gray}$-functors $ {F}$ and $\widetilde{F} :\mathcal{P}_3(X)\rightarrow \mathcal{G}^{\mathscr L}$, respectively, in the above subsection, and there exists a lax-natural transformations $\Psi:{F} \rightarrow\widetilde{F}$. Then there exist some $g\in \Lambda^0(X,G)$, $\varphi\in   \Lambda^1(X,\mathfrak h), \psi\in\Lambda^1(X,\mathfrak l)$ such that \begin{equation}\label{eq:C}\begin{split}
       &  \widetilde{A} =Ad_g A+gdg^{-1}+\alpha(\varphi );
    \\&
         \widetilde{B} = g \rhd B+d\varphi+\widetilde{A} \wedge^{\rhd}\varphi-\varphi\wedge\varphi-\delta(\psi);
     \\&
         \widetilde{C }= g \rhd C - d\psi-\widetilde{A }\wedge^{\rhd }\psi +\varphi\wedge^{\rhd'}\psi-\widetilde{B}\wedge^{\{,\}}\varphi-\varphi\wedge^{\{,\}}( g \rhd{B}).
         \end{split} \end{equation}
\end{prop}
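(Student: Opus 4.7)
The plan is to extract the three fields $g$, $\varphi$, $\psi$ from the three ``layers'' of the lax-natural transformation $\Psi$ and to derive the three formulas in (\ref{eq:C}) by specializing the lax-naturality/functoriality axioms to the model $n$-paths of the previous subsection and then taking derivatives at the origin.

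First, since $\mathcal{G}^{\mathscr L}$ has a single object, $\Psi_x$ is an element of $G$ depending smoothly on $x\in X$, which (with the appropriate sign convention) defines $g\in\Lambda^0(X,G)$. For a $1$-path $\gamma^{x_1}$ dilated in the direction of $v\in T_xX$, the $2$-arrow $\Psi_{\gamma^{x_1}}\in G\times H$ has a smoothly varying $H$-component, and we set $\varphi(v):=\partial_{x_1}\pi_H(\Psi_{\gamma^{x_1}})\big|_{x_1=0}$. Similarly, for a $2$-path $\Sigma_{x_1,x_2}$ as in (\ref{eq:Sigma}), $\psi(v_1,v_2)$ is defined as the mixed partial derivative at $(0,0)$ of the $L$-component of $\Psi_{\Sigma_{x_1,x_2}}$. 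That $\varphi$ and $\psi$ are well-defined differential forms (with $\psi$ alternating) is established by the same reversal-of-parameters argument that makes $B$ a $2$-form in (\ref{eq:A-A-B}), together with condition (6) of lax-naturality.

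For the $A$-equation, the source/target data of $\Psi_f$, expressed in $\mathcal{G}^{\mathscr L}$ using the formulas for $s_1,t_1$ of a $2$-arrow, give an identity of the form $F_1(f)\Psi_x=\alpha(\pi_H(\Psi_f))^{-1}\Psi_{x'}\widetilde F_1(f)$. Setting $f=\gamma^{x_1}$ and differentiating at $x_1=0$, using $A(v)=\partial_{x_1}F_1(\gamma^{x_1})|_0$ and the analogous formula for $\widetilde A$, yields the first formula of (\ref{eq:C}). For the $B$-equation, the functoriality conditions (2), (3), (4), (5) of $\Psi$ decompose $\Psi_{\Sigma_{x_1,x_2}}$ as a composition of $2$-arrows built from the $\Psi_{\gamma^{x_j}}$'s, $\Psi$ applied to horizontal/vertical subpieces, and a whiskered interchanging $3$-arrow; extracting the $H$-component, applying $\partial_{x_1}\partial_{x_2}$ at the origin, and repackaging via Proposition~\ref{prop:2cross-form-2} and Corollary~\ref{cor:2cross-form} gives the displayed expression for $\widetilde B$.

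For the $C$-equation, I would apply the naturality axiom (\ref{eq:naturality1}) to a $3$-track $\Omega_{x_1,x_2,x_3}$ of the form (\ref{eq:Omega}); both sides then become compositions of several $3$-arrows in $\mathcal{G}^{\mathscr L}$ assembled from $\Psi_{\Sigma_{\cdot,\cdot}}$, $F_3(\Omega)$, $\widetilde F_3(\Omega)$, and whiskered interchanging $3$-arrows (\ref{eq:interchanging}). Projecting to the $L$-factor and taking $\partial_{x_1}\partial_{x_2}\partial_{x_3}$ at the origin, the vanishing of the first-order derivatives (\ref{eq:der-F=0}) eliminates most cross terms, and Proposition~\ref{prop:2cross-form}(2)--(5) together with the identity $v\rhd' x=-\{\delta(x),v\}$ from Proposition~\ref{prop:rhd'} collects the remaining contributions into (\ref{eq:C}). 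The main obstacle is the combinatorial bookkeeping: each interchanging $3$-arrow (\ref{eq:interchanging}) contributes a Peiffer-lifting term $e\rhd'\{e^{-1},X\rhd f\}^{-1}$, and several such terms must be rearranged (using (\ref{eq:brackt-1}) to discard the trivial ones) before $-\widetilde B\wedge^{\{,\}}\varphi$, $-\varphi\wedge^{\{,\}}(g\rhd B)$, and $\varphi\wedge^{\rhd'}\psi$ appear in the stated combination.
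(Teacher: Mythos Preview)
Your proposal is correct and follows essentially the same route as the paper: extract $g,\varphi,\psi$ from the three layers of $\Psi$, get the $A$- and $B$-formulas from the source/target relations of $\Psi_f$ and $\Psi_{\Sigma_{x_1,x_2}}$ (together with functoriality for composite $1$-paths), and get the $C$-formula by applying naturality (\ref{eq:naturality1}) to $\Omega_{x_1,x_2,x_3}$, decomposing $\Psi_{\Sigma_\pm}$ explicitly as products of whiskered and interchanging $3$-arrows in $\mathcal{G}^{\mathscr L}$, and differentiating.

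One small over-complication: for the $B$-equation the paper needs only the definition of the $3$-arrow $\Psi_\gamma$ (its $2$-source versus $2$-target) and functoriality condition (2), which already yield the $H$-level identity (\ref{eq:f-g-delta}); the differentiation is then direct, with no recourse to Proposition~\ref{prop:2cross-form-2} or Corollary~\ref{cor:2cross-form}. Interchanging $3$-arrows contribute only at the $L$-level and therefore first enter in the $C$-derivation, not the $B$-derivation. Likewise, in the $C$-case the paper does not invoke Proposition~\ref{prop:2cross-form} to repackage the answer: once the derivatives of the individual $K_j$ and $\hat K_j$ are computed (using (\ref{eq:brackt-1}) and (\ref{eq:cancel}) to strip away the trivial whiskerings), the resulting equality (\ref{eq:derivative-R}) $=$ (\ref{eq:derivative-L}) is already the componentwise form of the $C$-equation; the wedge-product identities are only needed in Section~3, not here.
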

This is exactly the $3$-gauge transformation in (\ref{eq:gauge-transformations}) with $g$ replaced by $g^{-1}$.
The transformation  formula for the $A$ field  is easy.
Let $\Omega_{x_1,x_2,x_3}$ be the  $3$-path   in $ \mathcal{S}_3(X)$ in (\ref{eq:Omega}) and (\ref{eq:Omega-boundary}).
Set
\begin{equation*}\begin{split}&
     f^{x_j;*}:= F_1( \gamma^{x_j;*}),\qquad\qquad\qquad\quad
    \widetilde{ f }^{x_j;*}:= \widetilde{F}_1 (  \gamma^{x_j;*}) ,\\&
     F^{x_i,x_j;*}:=\pi_H\circ F_2\left(\Sigma_{x_i,x_j;*}\right),\qquad \widetilde{F}^{x_i,x_j;*}:=\pi_H\circ \widetilde{F}_2\left(\Sigma_{x_i,x_j;*}\right),
     \\&
     h^{x_j;*}:=\pi_H\circ \Psi_{ \gamma^{x_j;*}},\qquad\qquad\qquad
    k^{x_i,x_j;*}:=\pi_L\circ \Psi_{\Sigma_{x_i,x_j;*}},
 \end{split}\end{equation*}
 for $*$= empty or $x_k$, etc.. For a $G$-valued function $g$ on $X$, we denote
 \begin{equation*}
  g
^{x_j}=g(\Gamma(x_j,0,0)) ,\qquad  g^{x_j, x_k}=g(\Gamma(x_j,x_k,0))\qquad{\rm  and }\qquad g^{x_j, x_k,x_l}=g(\Gamma(x_j,x_k,x_l)),\end{equation*}  up to the order of coordinates. Note that
 \begin{equation}\label{eq:value-0}
   f^{x_j ;*}|_{x_j=0}=1_G,\qquad  h^{x_j ;*} |_{x_j=0}=1_L,\qquad\left. \frac {\partial k^{x_i,x_j;*}}{\partial x_s}\right|_{x_i=x_j=0} =\left.\frac {\partial F^{x_i,x_j;*}}{\partial x_s}\right|_{x_i=x_j=0} =0,
 \end{equation} for $s=i$ or $j$. The last identity comes from $F^{x_i,x_j;*}=1_H$ if $x_i=0$  or $x_j=0$.

 By definition, the lax-natural transformation  $\Psi$ defines a $3$-arrow $\Psi_{\Sigma_{x_1,x_2}}$,
\begin{equation*}
     \xy 0;/r.17pc/:
(0,0 )*+{\bullet}="1";
(40, 0)*+{\bullet}="2";
( 15,15)*+{\bullet}="3";
(55,15)*+{\bullet}="4";
(0,40 )*+{\bullet}="5";
(40, 40)*+{\bullet}="6";
(  15,55)*+{\bullet}="7";
(55,55)*+{\bullet}="8";
{\ar@{-->}|-{f^{x_1;x_2 }} "1";"2" };
{\ar@{-->} |-{f^{x_2}}"3";"1" };
{\ar@{->}|-{g^{x_2}} "5";"1" };
{\ar@{->} "4";"2"^{f^{x_2;x_1 }} };
{\ar@{~>}|-{g^{ x_1,x_2}} "6";"2" };
{\ar@{-->}|-{g^{ 0}} "7";"3" };
{\ar@{->}|-{f^{x_1 }} "3";"4" };
{\ar@{->}|-{g^{x_1}} "8";"4" };
{\ar@{->}|-{\widetilde{f}^{x_1;x_2 }} "5";"6" };
{\ar@{->}|-{\widetilde{f}^{x_2 }} "7";"5" };
{\ar@{~>}|-{\widetilde{f}^{x_2;x_1}} "8";"6" };
{\ar@{~>}|-{\widetilde{f}^{x_1 }} "7";"8" };
 \endxy
\end{equation*}
such that
 \begin{equation}\label{eq:f-g-delta}\delta(k^{x_1,x_2})^{-1} \cdot
   \widetilde{f}^{x_1}\rhd h^{x_2 ;x_1}\cdot h^{x_1 }\cdot   g^{0}\rhd F^{x_1,x_2}
            =\widetilde{F}^{x_1,x_2} \cdot\widetilde{f}^{x_2}\rhd h^{x_1;x_2}\cdot   h^{x_2 } .
 \end{equation}
Let $A(v)$ and $B(v_1,v_2)$ as before and let
\begin{equation*}\begin{split}&
     \varphi(v_1):=\left.\frac {\partial h^{x_1}}{\partial x_1  }\right|_{x_1=0},\qquad\varphi(v_2):=\left.\frac {\partial h^{x_2}}{\partial x_2  }\right|_{x_2=0},\qquad \psi(v_1,v_2):=\left.\frac {\partial^2 k^{x_1,x_2}}{\partial x_1\partial x_2 }\right|_{x_1=x_2=0} .
\end{split}\end{equation*} We can
take derivatives $\frac {\partial^2}{\partial x_1  \partial x_2  }$ at $(0,0)$ on both sides of (\ref{eq:f-g-delta}) and use (\ref{eq:value-0}) to get
 \begin{equation*}\begin{split}&
   -  \delta(\psi(v_1,v_2 ))+ \widetilde{A}(v_1)\rhd \varphi(v_2 )+v_1  \varphi(v_2 ) +\varphi(v_2)\varphi(v_1)+g(x)\rhd B (v_1,v_2) \\
     =& \widetilde{B}(v_1,v_2)+\widetilde{A}(v_2)\rhd \varphi(v_1 )+v_2  \varphi(v_1 ) +  \varphi(v_1)     \varphi(v_2).
\end{split}\end{equation*}This is exactly the transformations  formula for the $B$ field in (\ref{eq:C}).
 \subsection{The  gauge transformations of the $C$ field: the $\Psi_{\Sigma_-}$ part}

 $F(\Omega_{x_1,x_2,x_3})$ is a $3$-arrow in $\mathcal{G}^{\mathscr L}$, whose $2$-source $F_2(\Sigma_-)$ and $2$-target $F_2(\Sigma_+)$ (cf. (\ref{eq:Omega-boundary}) for $ \Sigma_- $ and  $ \Sigma_+ $ ) are as follows:
\begin{equation*}
  \xy 0;/r.17pc/:
(0,0 )*+{\bullet}="1";
(60, 0)*+{\bullet}="2";
( 70,15)*+{\bullet}="3";
(-20,30)*+{\bullet}="4";
(40,30 )*+{\bullet}="5";
(-10, 50)*+{\bullet}="6";
(  50,50)*+{\bullet}="7";
 {\ar@{-->}|-{\scriptscriptstyle f^{x_1;x_2,x_3 }} "1";"2" };
{\ar@{-->}|-{\scriptscriptstyle f^{ x_2; x_3}} "4";"1" };
{\ar@{~>}|- {\scriptscriptstyle f^{ x_3;x_1,x_2}} "3";"2" };
{\ar@{->}|-{\scriptscriptstyle f^{x_2; x_1 ,x_3}} "5";"2" };
{\ar@{->}|-{\scriptscriptstyle f^{ x_1;x_3}} "4";"5" };
{\ar@{-->}|-{\scriptscriptstyle f^{x_3 }} "6";"4" };
{\ar@{~>}|-{\scriptscriptstyle f^{x_1}} "6";"7" };
{\ar@{~>}|-{\scriptscriptstyle f^{x_2;x_1  }} "7";"3" };
{\ar@{->}|-{\scriptscriptstyle f^{x_3;x_1 }} "7";"5" };
  \endxy \xy
(0,0)*+{}="A";
(15,0)*+{}="B";
{\ar@3{->}"A" +(0,20); "B"+(0,20)^{F(\Omega_{x_1,x_2,x_3})}};
\endxy   \xy 0;/r.17pc/:
(0,0 )*+{\bullet}="1";
(60, 0)*+{\bullet}="2";
(-10,15)*+{\bullet}="3";
(20,30)*+{\bullet}="4";
(80,30 )*+{\bullet}="5";
( 10, 50)*+{\bullet}="6";
(  70,50)*+{\bullet}="7";
 {\ar@{-->}|-{\scriptscriptstyle f^{x_1;x_2, x_3 }} "1";"2" };
{\ar@{-->}|-{\scriptscriptstyle f^{ x_2; x_3}} "3";"1" };
{\ar@{<--} "3";"6"^{\scriptscriptstyle f^{ x_3 }} };
{\ar@{~>}|-{\scriptscriptstyle f^{x_3; x_1 ,x_2}}"5";"2"};
{\ar@{->}|-{\scriptscriptstyle f^{ x_1;x_2}} "4";"5" };
{\ar@{->}|-{\scriptscriptstyle f^{x_2 }} "6";"4" };
{\ar@{->}|-{\scriptscriptstyle f^{x_3;x_2}} "4";"1" };
{\ar@{~>}|-{\scriptscriptstyle f^{ x_1  }}"6"; "7"};
{\ar@{~>}|-{\scriptscriptstyle f^{x_2;x_1 }}"7";"5"};  \endxy
\end{equation*} Similarly for $3$-arrow $\widetilde{F}(\Omega_{x_1,x_2,x_3})$ in $\mathcal{G}^{\mathscr L}$.
The naturality (\ref{eq:naturality0})-(\ref{eq:naturality1}) of the lax-natural transformation $\Psi:F\rightarrow\widetilde{F}$ implies
\begin{equation}\label{eq:naturality}F^*_3(\Omega_{x_1,x_2,x_3})\cdot \Psi_{\Sigma_+}  =\Psi_{\Sigma_-}\cdot  \widetilde{F}_3(\Omega_{x_1,x_2,x_3})
.
\end{equation} where $ F^* (\Omega_{x_1,x_2,x_3})$ is a suitable whiskering of $ F (\Omega_{x_1,x_2,x_3})$.
$\Psi_{\Sigma_-}$ is the following $3$-arrow.
\begin{equation*}
     \xy 0;/r.17pc/:
(0,0 )*+{\bullet}="1";
(60, 0)*+{\bullet}="2";
( 70,15)*+{\bullet}="3";
(-20,30)*+{\bullet}="4";
(40,30 )*+{\bullet}="5";
(-10, 50)*+{\bullet}="6";
(  50,50)*+{\bullet}="7";
(0,80 )*+{\bullet}="8";
(60,80)*+{\bullet}="9";
( 70,95)*+{\bullet}="10";
(-20,110)*+{\bullet}="11";
(40,110 )*+{\bullet}="12";
(-10, 130)*+{\bullet}="13";
(  50,130)*+{\bullet}="14";
 {\ar@{-->}|-{\scriptscriptstyle f^{x_1;x_2, x_3}} "1";"2" };
{\ar@{-->}|-{\scriptscriptstyle f^{ x_2; x_3}} "4";"1" };
{\ar@{->} "3";"2"^{\scriptscriptstyle f^{ x_3;x_1,x_2}} };
{\ar@{->}|-{\scriptscriptstyle f^{x_2; x_1 ,x_3}} "5";"2" };
{\ar@{->}|-{\scriptscriptstyle f^{ x_1;x_3}} "4";"5" };
{\ar@{-->}|-{\scriptscriptstyle f^{x_3 }} "6";"4" };
{\ar@{->}|-{\scriptscriptstyle f^{x_1}} "6";"7" };
{\ar@{->}|-{\scriptscriptstyle f^{x_2;x_1  }} "7";"3" };
{\ar@{->}|-{\scriptscriptstyle f^{x_3;x_1 }} "7";"5" };
{\ar@{->}|-{\scriptscriptstyle\widetilde{f}^{x_1;x_2, x_3}} "8";"9" };
{\ar@{->}|-{\scriptscriptstyle\widetilde{f}^{x_2;x_3}} "11";"8" };
{\ar@{~>} "10";"9"^{\scriptscriptstyle\widetilde{f}^{ x_3;x_1,x_2}} };
{\ar@{->}|-{\scriptscriptstyle\widetilde{f}^{x_2; x_1 ,x_3}} "12";"9" };
{\ar@{->}|-{\scriptscriptstyle\widetilde{f}^{x_1;x_3}} "11";"12" };
{\ar@{->}|-{\scriptscriptstyle\widetilde{f}^{x_3 }} "13";"11" };
{\ar@{~>}|-{\scriptscriptstyle\widetilde{f}^{x_1}} "13";"14" };
{\ar@{~>}|-{\scriptscriptstyle\widetilde{f}^{x_2;x_1  }} "14";"10" };
{\ar@{->}|-{\scriptscriptstyle\widetilde{f}^{x_3;x_1 }} "14";"12" };
 {\ar@{<- }|-{\scriptscriptstyle g^{x_2, x_3}} "1";"8" };
 {\ar@{<~ }|-{\scriptscriptstyle g^{ x_1,x_2,x_3 }} "2";"9" };
 {\ar@{<- }|-{\scriptscriptstyle g^{ x_1,x_2 }} "3";"10" };
 {\ar@{<-}|-{\scriptscriptstyle g^{ x_3  }} "4";"11" };{\ar@{<- }|-{\scriptscriptstyle g^{ x_1,x_3  }} "5";"12" };
 {\ar@{<--}|-{\scriptscriptstyle g^{0  }} "6";"13" };{\ar@{<- }|-{\scriptscriptstyle g^{ x_1 }} "7";"14" };( 35,-8)*+{ \scriptstyle  {\rm figure}:\;\Psi_{\Sigma_-} }="30";
  \endxy \qquad  \xy 0;/r.17pc/:
(0,0 )*+{\bullet}="1";
(60, 0)*+{\bullet}="2";
( 70,15)*+{\bullet}="3";
(-20,30)*+{\bullet}="4";
(40,30 )*+{\bullet}="5";
(-10, 50)*+{\bullet}="6";
(  50,50)*+{\bullet}="7";
(60,80)*+{\bullet}="9";
( 70,95)*+{\bullet}="10";
(-10, 130)*+{\bullet}="13";
(  50,130)*+{\bullet}="14";
( 65,50)*+{(1)}="21";
( 60,95)*+{(2)}="22";
( 20,85)*+{(3)}="23";
( 50,30)*+{(4)}="24";
( 10,40)*+{(5)}="24";
( 25,15)*+{(6)}="26";
  {\ar@{-->}|-{\scriptscriptstyle f^{x_1;x_2, x_3}} "1";"2" };
{\ar@{-->}|-{\scriptscriptstyle f^{ x_2; x_3}} "4";"1" };
{\ar@{->} "3";"2"^{\scriptscriptstyle f^{ x_3;x_1,x_2}} };
{\ar@{->}|-{\scriptscriptstyle f^{x_2; x_1 ,x_3}} "5";"2" };
{\ar@{->}|-{\scriptscriptstyle f^{ x_1;x_3}} "4";"5" };
{\ar@{-->}|-{\scriptscriptstyle f^{x_3 }} "6";"4" };
{\ar@{->}|-{\scriptscriptstyle f^{x_1}} "6";"7" };
{\ar@{->}|-{\scriptscriptstyle f^{x_2;x_1  }} "7";"3" };
{\ar@{->}|-{\scriptscriptstyle f^{x_3;x_1 }} "7";"5" };
{\ar@{~>} "10";"9"^{\scriptscriptstyle\widetilde{f}^{ x_3;x_1,x_2}} };
{\ar@{~>}|-{\scriptscriptstyle\widetilde{f}^{x_1}} "13";"14" };
{\ar@{~>}|-{\scriptscriptstyle\widetilde{f}^{x_2;x_1  }} "14";"10" };
 {\ar@{<~ }|-{\scriptscriptstyle g^{ x_1,x_2,x_3 }} "2";"9" };
 {\ar@{<- }|-{\scriptscriptstyle g^{ x_1,x_2 }} "3";"10" };
 {\ar@{<--}|-{\scriptscriptstyle g^{0  }} "6";"13" };{\ar@{<- }|-{\scriptscriptstyle g^{ x_1 }} "7";"14" };( 35,-8)*+{ \scriptstyle  {\rm figure}:\; \Sigma_1  }="30";
  \endxy
\end{equation*} See also the figure $\Psi_{\Sigma_+}$ later.
Write $3$-arrow
$\Psi_{\Sigma_{x_i,x_j;x_k}}=
 (*,*  , k^{x_i,x_j;x_k} )
$
  in $\mathcal{G}^{\mathscr L}$.

Since the $2$-arrow $\Sigma_-$ is the composition  of  three $2$-arrows in (\ref{eq:Sigma+-}),
by the functoriality (4) of    the lax-natural transformation, the $3$-arrow $\Psi_{\Sigma_-}$ is the whiskered
composition of $\Psi_{\gamma^{x_1}\#_0\Sigma_{x_2,x_3 ; x_1}}$, $
  \Psi_{\Sigma_{x_1,x_3  } \#_0 \gamma^{x_2;x_1,x_3}}$ and $
  \Psi_{\gamma^{x_3}\#_0\Sigma_{x_1,x_2 ; x_3 }  }
$. Let us write it down explicitly as composition of $3$-arrows $(g_*,\Sigma_1,K_1),\ldots, (g_*,\Sigma_6,K_6)$ in $\mathcal{G}^{\mathscr L}$, where
\begin{equation}\label{eq:g}
     g_*=\widetilde{f}^{x_1} \widetilde{f}^{x_2;x_1  } \widetilde{f}^{ x_3;x_1,x_2} g^{ x_1,x_2 ,x_3},
\end{equation}
corresponding to the wavy path in the above figures.

The first $2$-arrow is $(g_*,\Sigma_1)$ is the composition of the whiskered $2$-arrows  (1), (2), (3), (4), (5) and (6) in the figure $\Sigma_1$, where $(1)=( *,h^{ x_3; x_1 ,x_2} ),(2)= ( *,h^{x_2; x_1 }), (3)=( *,  h^{ x_1 }),(4)= (g_*,F^{x_2 ,x_3;x_1 }) , (5)=( *, F^{x_1 ,x_3 }), (6)=( *, F^{x_1 , x_2;x_3 }) $:
  \begin{equation*}
     \Sigma_1:=\widetilde{f}^{x_1}\rhd\left[  \widetilde{f}^{x_2;x_1  } \rhd h^{x_3; x_1 ,x_2 } \cdot   h^{x_2; x_1 }\right]\cdot   h^{ x_1 }\cdot g^0   \rhd \left[  {f}^{x_1}  \rhd F^{ x_2,x_3;x_1 } \cdot F^{x_1 ,x_3 }\cdot {f}^{x_3}  \rhd F^{x_1 , x_2;x_3 } \right ].\end{equation*}

     The first   $3$-arrow is $(g_*,\Sigma_1,K_1)$,  interchanging $2$-arrows $(3)$ and $ (4)$ in the figure $\Sigma_1$, with
     \begin{equation}\label{eq:K-1}
K_1=K_1^0\rhd'\left[ h^{ x_1}\rhd' \left\{ (h^{ x_1})^{-1},\left(\widetilde{f}^{x_1} g^{x_1}\right )\rhd F^{x_2 ,x_3;x_1 }\right\}^{-1} \right] ,\end{equation} whose $2$-target is the $2$-arrow $(g_*,\Sigma_2)$, where    $K_1^0=\widetilde{f}^{x_1}\rhd \left[  \widetilde{f}^{x_2;x_1  } \rhd h^{x_3; x_1 ,x_2 } \cdot   h^{x_2; x_1 }\right]$ is the whiskering corresponding to  the composition of the whiskered (1) and (2) by definition (\ref{eq:whiskering}) of the $2$-whiskering.
       The   interchanging   $3$-arrow has this form by the definition (\ref{eq:interchanging}). We will not write $ \Sigma_j$'s down explicitly, since we will not use them.

The   second  $3$-arrow is $(g_*,\Sigma_2,K_2)$ with  $K_2 =\widetilde{f}^{x_1}\rhd k^{x_2 ,x_3 ;x_1}$, whose $2$-target is $(g_*,\Sigma_3)$.

\begin{equation*}
      \xy 0;/r.17pc/:
(0,0 )*+{\bullet}="1";
(60, 0)*+{\bullet}="2";
(-20,30)*+{\bullet}="4";
(40,30 )*+{\bullet}="5";
(-10, 50)*+{\bullet}="6";
(  50,50)*+{\bullet}="7";
(60,80)*+{\bullet}="9";
( 70,95)*+{\bullet}="10";
(40,110 )*+{\bullet}="12";
(-10, 130)*+{\bullet}="13";
(  50,130)*+{\bullet}="14";
( 45,85)*+{(3)}="21";
( 60,100)*+{(1)}="22";
( 20,85)*+{(4)}="23";
( 55,30)*+{(2)}="24";
( 10,40)*+{(5)}="24";
( 25,15)*+{(6)}="26";{\ar@{-->}|-{f^{x_3 }} "6";"4" };{\ar@{->}|-{\scriptscriptstyle f^{x_1}} "6";"7" };
  {\ar@{-->}|-{\scriptscriptstyle f^{x_1;x_2,  x_3}} "1";"2" };
{\ar@{-->}|-{\scriptscriptstyle f^{ x_2; x_3}} "4";"1" };
{\ar@{->}|-{\scriptscriptstyle f^{x_2; x_1 ,x_3}} "5";"2" };
{\ar@{->}|-{\scriptscriptstyle f^{ x_1;x_3}} "4";"5" };
{\ar@{--}|-{\scriptscriptstyle f^{x_3;x_1 }} "7";"5" };
{\ar@{~>} "10";"9"^{\scriptscriptstyle\widetilde{f}^{ x_3;x_1,x_2}} };
{\ar@{->}|-{\scriptscriptstyle\widetilde{f}^{x_2; x_1 ,x_3}} "12";"9" };
{\ar@{~>}|-{\scriptscriptstyle\widetilde{f}^{x_1}} "13";"14" };
{\ar@{~>}|-{\scriptscriptstyle\widetilde{f}^{x_2;x_1  }} "14";"10" };
{\ar@{->}|-{\scriptscriptstyle\widetilde{f}^{x_3;x_1 }} "14";"12" };
 {\ar@{<~ }|-{\scriptscriptstyle g^{ x_1,x_2,x_3 }} "2";"9" };
{\ar@{<- }|-{\scriptscriptstyle g^{ x_1,x_3  }} "5";"12" };
 {\ar@{<--}|-{\scriptscriptstyle g^{0  }} "6";"13" };
 {\ar@{<- }|-{\scriptscriptstyle g^{ x_1 }} "7";"14" };( 35,-8)*+{ \scriptstyle  {\rm figure}:\; \Sigma_3  }="30";
  \endxy
\qquad
      \xy 0;/r.17pc/:
(0,0 )*+{\bullet}="1";
(60, 0)*+{\bullet}="2";
(-20,30)*+{\bullet}="4";
(40,30 )*+{\bullet}="5";
(-10, 50)*+{\bullet}="6";
 (60,80)*+{\bullet}="9";
( 70,95)*+{\bullet}="10";
(-20,110)*+{\bullet}="11";
(40,110 )*+{\bullet}="12";
(-10, 130)*+{\bullet}="13";
(  50,130)*+{\bullet}="14";
( 20,120)*+{(3)}="21";
( 60,100)*+{(1)}="22";
( 20,75)*+{(4)}="23";
( 55,70)*+{(2)}="24";
( 10,15)*+{(6)}="24";
( -15,95)*+{(5)}="26";
 {\ar@{-->}|-{f^{x_1;x_2, x_3}} "1";"2" };
{\ar@{-->}|-{f^{ x_2; x_3}} "4";"1" };
{\ar@{->}|-{f^{x_2; x_1 ,x_3}} "5";"2" };
{\ar@{->}|-{f^{ x_1;x_3}} "4";"5" };
{\ar@{-->}|-{f^{x_3 }} "6";"4" };
 {\ar@{~>} "10";"9"^{\scriptscriptstyle\widetilde{f}^{ x_3;x_1,x_2}} };
{\ar@{->}|-{\scriptscriptstyle\widetilde{f}^{x_2; x_1 ,x_3}} "12";"9" };
{\ar@{->}|-{\scriptscriptstyle\widetilde{f}^{x_1;x_3}} "11";"12" };
{\ar@{->}|-{\scriptscriptstyle\widetilde{f}^{x_3 }} "13";"11" };
{\ar@{~>}|-{\scriptscriptstyle\widetilde{f}^{x_1}} "13";"14" };
{\ar@{~>}|-{\scriptscriptstyle\widetilde{f}^{x_2;x_1  }} "14";"10" };
{\ar@{->}|-{\scriptscriptstyle\widetilde{f}^{x_3;x_1 }} "14";"12" };
 {\ar@{<~}|-{\scriptscriptstyle g^{ x_1,x_2,x_3 }} "2";"9" };
 {\ar@{<-}|-{\scriptscriptstyle g^{ x_3  }} "4";"11" };{\ar@{<- }|-{\scriptscriptstyle g^{ x_1,x_3  }} "5";"12" };
 {\ar@{<--}|-{\scriptscriptstyle g^{0  }} "6";"13" };( 35,-8)*+{ \scriptstyle  {\rm figure}:\; \Sigma_4  }="30";
  \endxy
\end{equation*} The  third  $3$-arrow is $(g_*,\Sigma_3,K_3)$ with
\begin{equation*}
    K_3= \left[ \widetilde{f}^{x_1} \rhd \left(\widetilde{F}^{x_2 ,x_3;x_1 } \cdot  \widetilde{f}^{x_3;x_1 }  \rhd h^{x_2; x_1 ,x_3}\right)\right]\rhd' k^{x_1,x_3}.
\end{equation*}
 whose $2$-target is $(g_*,\Sigma_4)$, where the part before $ \rhd'$ is the whiskering corresponding to  the composition of $2$-arrows (1) and (2) in the figure $\Sigma_3$.
The  fourth  $3$-arrow is $(g_*,\Sigma_4,K_4)$, interchanging $2$-arrows $(3)$ and $ (2)$ in the figure $\Sigma_4$, with
 \begin{equation*}
    K_4=\left( \widetilde{f}^{x_1} \rhd \widetilde{F}^{x_2 ,x_3;x_1 }\right) \rhd' \left[\widetilde{F}^{  x_1,x_3 }\rhd'\left\{\left(\widetilde{F}^{  x_1,x_3 }\right)^{-1} , \left(\widetilde{f}^{x_1}\widetilde{f}^{x_3;x_1 }\right) \rhd h^{x_2; x_1 ,x_3}\right\} \right],
\end{equation*}  whose $2$-target is $(g_*,\Sigma_5)$.

The  fifth  $3$-arrow is $(g_*,\Sigma_5,K_5)$, interchanging $2$-arrows $(5)$ and $ (6)$ in the figure $\Sigma_4$, with
 \begin{equation*}
    K_5=K_5^0
    \rhd'\left[h^{  x_3 }\rhd' \left\{(h^{  x_3 }  )^{-1} ,\left(\widetilde{f}^{x_3}  g^{x_3}  \right)\rhd F^{x_1 , x_2;x_3 }  \right\}^{-1}\right] .
  \end{equation*} whose $2$-target is $(g_*,\Sigma_6)$, where  $ K_5^0=\widetilde{f}^{x_1} \rhd \widetilde{F}^{x_2 ,x_3;x_1 }\cdot \widetilde{F}^{x_1 , x_3 } \cdot \widetilde{f}^{x_3}\rhd \left[\widetilde{f}^{x_1;x_3 }  \rhd  h^{x_2; x_1 ,x_3}\cdot   h^{x_1;  x_3} \right]$ is the whiskering corresponding to   the composition of $2$-arrows (1), (3), (2) and (4) in the figure $\Sigma_4$.

The sixth    $3$-arrow is $(g_*,\Sigma_6,K_6)$ with
\begin{equation*}
    K_6=\left [\widetilde{f}^{x_1} \rhd \widetilde{F}^{x_2 ,x_3;x_1 } \cdot \widetilde{F}^{x_1 , x_3 }\right ]\rhd' \left(\widetilde{f}^{x_3}  \rhd k^{x_1,x_2;x_3}\right).
\end{equation*}
whose $2$-target is $(g_*,\Sigma_7)$.
The last $3$-arrow is $(g_*,\Sigma_7,K_7)=\widetilde{F}_3(\Omega_{x_1,x_2,x_3})$ with
$
  K_7=\widetilde{k}^{x_1,x_2,x_3}
$, whose $2$-target is $(g_*,\Sigma_8)$.
Now $\Psi_{\Sigma_-}$ in (\ref{eq:naturality}) is $(*,*,K_1K_2\cdots K_6)$ by the functoriality of $\Psi$ in the definition in \S 4.3, while the RHS of (\ref{eq:naturality}) is $(*,*,K_1K_2\cdots K_7)$.

\begin{equation*}
     \xy 0;/r.17pc/:
(0,0 )*+{\bullet}="1";
(60, 0)*+{\bullet}="2";
(-20,30)*+{\bullet}="4";
(-10, 50)*+{\bullet}="6";
(0,80 )*+{\bullet}="8";
(60,80)*+{\bullet}="9";
( 70,95)*+{\bullet}="10";
(-20,110)*+{\bullet}="11";
(40,110 )*+{\bullet}="12";
(-10, 130)*+{\bullet}="13";
(  50,130)*+{\bullet}="14";
( 20,120)*+{(2)}="21";
( 60,100)*+{(1)}="22";
( -10,35)*+{(5)}="23";
( 25,90)*+{(3)}="24";
( 30,45)*+{(4)}="24";
( -15,55)*+{(6)}="26";
 {\ar@{-->}|-{\scriptscriptstyle f^{x_1; x_2 ,  x_3}} "1";"2" };
{\ar@{-->}|-{\scriptscriptstyle f^{ x_2; x_3}} "4";"1" };
{\ar@{-->}|-{\scriptscriptstyle f^{x_3 }} "6";"4" };
{\ar@{->}|-{\scriptscriptstyle \widetilde{f}^{x_1;x_2,  x_3}} "8";"9" };
{\ar@{->}|-{\scriptscriptstyle \widetilde{f}^{x_2;x_3}} "11";"8" };
{\ar@{-->} "10";"9"^{\scriptscriptstyle \widetilde{f}^{ x_3;x_1,x_2}} };
{\ar@{->}|-{\scriptscriptstyle \widetilde{f}^{x_2; x_1 ,x_3}} "12";"9" };
{\ar@{->}|-{\scriptscriptstyle \widetilde{f}^{x_1;x_3}} "11";"12" };
{\ar@{->}|-{\scriptscriptstyle \widetilde{f}^{x_3 }} "13";"11" };
{\ar@{~>}|-{\scriptscriptstyle \widetilde{f}^{x_1}} "13";"14" };
{\ar@{~>}|-{\scriptscriptstyle \widetilde{f}^{x_2;x_1  }} "14";"10" };
{\ar@{->}|-{\scriptscriptstyle \widetilde{f}^{x_3;x_1 }} "14";"12" };
 {\ar@{<- }|-{\scriptscriptstyle g^{x_2 ,x_3}} "1";"8" };
 {\ar@{<~ }|-{\scriptscriptstyle g^{ x_1,x_2,x_3 }} "2";"9" };
 {\ar@{<-}|-{\scriptscriptstyle g^{ x_3  }} "4";"11" };
 {\ar@{<--}|-{\scriptscriptstyle g^{0  }} "6";"13" };( 35,-8)*+{ \scriptstyle  {\rm figure}:\; \Sigma_7  }="30";
  \endxy
\qquad
     \xy 0;/r.17pc/:
(0,0 )*+{\bullet}="1";
(60, 0)*+{\bullet}="2";
(-10,20)*+{\bullet}="3";
( 10, 50)*+{\bullet}="6";
(0,80 )*+{\bullet}="8";
(60, 80)*+{\bullet}="9";
(-10,100)*+{\bullet}="10";
(20,110)*+{\bullet}="11";
(80,110 )*+{\bullet}="12";
( 10, 130)*+{\bullet}="13";
(  70,130)*+{\bullet}="14";
( 40,90)*+{(2)}="21";( 45,120)*+{(1)}="22";
(  5,60)*+{(6)}="23";
(  5,100)*+{(3)}="24";
( 30,45)*+{(4)}="24";
(  -5,70)*+{(5)}="26";
 {\ar@{-->}|-{ {f}^{x_1;x_2,  x_3}} "1";"2" };
{\ar@{-->}|-{ {f}^{ x_2; x_3}} "3";"1" };
{\ar@{<--}|-{ {f}^{ x_3 }} "3";"6" };
 {\ar@{->}|-{\widetilde f^{x_1;x_2, x_3}} "8";"9" };
{\ar@{->}|-{\widetilde f^{ x_2; x_3}} "10";"8" };
{\ar@{<-} "10";"13"^{\widetilde f^{ x_3 }} };
{\ar@{~>}|-{\widetilde f^{x_3; x_1 ,x_2}}"12";"9"};
{\ar@{->}|-{\widetilde f^{ x_1;x_2}} "11";"12" };
{\ar@{->}|-{\widetilde f^{x_2 }} "13";"11" };
{\ar@{->}|-{\widetilde f^{x_3;x_2}} "11";"8" };
{\ar@{~>}|-{\widetilde f^{ x_1  }}"13"; "14"};
{\ar@{~>}|-{\widetilde f^{x_2;x_1 }}"14";"12"};
{\ar@{<- }|-{g^{x_2, x_3 }} "1";"8" };
 {\ar@{<~ }|-{g^{ x_1,x_2,x_3 }} "2";"9" };
 {\ar@{<- }|-{g^{ x_3  }} "3";"10" };
 {\ar@{<--}|-{g^{0  }} "6";"13" };( 35,-8)*+{ \scriptstyle  {\rm figure}:\;  \Sigma_8 }="30";
\endxy
\end{equation*}

To calculate $\frac {\partial^3K_1K_2\cdots K_7}{\partial x_1\partial x_2\partial x_3}$ at $(0,0,0)$, note that if $\frac {\partial^2 m }{\partial x_i\partial x_j }
 (0,0,0) =\frac {\partial  m }{\partial x_i  }
 (0,0,0)=0,m(0,0,0)=1_L$, we have
\begin{equation}\label{eq:cancel}
    \left. \frac {\partial^3 }{\partial x_1\partial x_2\partial x_3}\left\{ n(x_1, x_2, x_3)\rhd'
     m(x_1, x_2, x_3)\right\}\right|_{(0,0,0)}= n(0,0,0)\rhd'\frac {\partial^3 m }{\partial x_1\partial x_2\partial x_3}
 (0,0,0),
\end{equation} by using Taylor's expansion for  $m(x_1, x_2, x_3)$,     since $n \rhd' 1_L=1_L$ for any $n\in H$ by  (\ref{eq:brackt-1}), where $1_L$ is the identity of $L$. A similar identity holds for $\rhd$. It is direct to check that  $\frac {\partial    K_j }{\partial x_i  }(0,0,0)=0$  and $     K_j  (0,0,0)=1_L$ for $j=1,\ldots,7$, $i=1,2,3$. So we have
 \begin{equation*}
     \frac {\partial^3K_1K_2\cdots K_7}{\partial x_1\partial x_2\partial x_3}(0,0,0)=\sum_{j=1}^7\frac {\partial^3K_j}{\partial x_1\partial x_2\partial x_3}(0,0,0).
 \end{equation*}
 By  (\ref{eq:brackt-1}), (\ref{eq:value-0}) and (\ref{eq:cancel}), we have
 \begin{equation*}
      \frac {\partial^3K_1 }{\partial x_1\partial x_2\partial x_3}(0,0,0)= \left\{\frac {\partial   h^{ x_1} }{\partial x_1 } , g^0\rhd \frac {\partial^2  F^{x_2 ,x_3;x_1 }}{ \partial x_2\partial x_3}\right\} (0,0,0)= \{\varphi(v_1),g^0\rhd B(v_2,v_3)\},
 \end{equation*}
 for $K_1$ in  (\ref{eq:K-1}), and similarly for the other $K_j$.
 Consequently, the derivative  $\frac {\partial^3K_1K_2\cdots K_7}{\partial x_1\partial x_2\partial x_3}$ at $(0,0,0)$ gives
\begin{equation}\label{eq:derivative-R}\begin{split}&
 \{\varphi(v_1),g^0\rhd B(v_2,v_3)\}+    v_1\psi(v_2,v_3)+\widetilde{A}(v_1)\rhd \psi(v_2,v_3)+\varphi(v_2)\rhd'\psi(v_1,v_3)\\&-\{ \widetilde{B}(v_1,v_3), \varphi(v_2)\}+\{\varphi(v_3), g^0\rhd B(v_1,v_2)\}+ v_3\psi(v_1,v_2)+\widetilde{A}(v_3)\rhd \psi(v_1,v_2)+ \widetilde{ C}(v_1,v_2,v_3).
\end{split}\end{equation}
 \subsection{The  gauge transformations of the $C$ field: the $\Psi_{\Sigma_+}$ part}
Now consider $\Psi_{\Sigma_+}$ in (\ref{eq:naturality}).
\begin{equation*}
     \xy 0;/r.17pc/:
(0,0 )*+{\bullet}="1";
(60, 0)*+{\bullet}="2";
(-10,20)*+{\bullet}="3";
(20,30)*+{\bullet}="4";
(80,30 )*+{\bullet}="5";
( 10, 50)*+{\bullet}="6";
(  70,50)*+{\bullet}="7";
(0,80 )*+{\bullet}="8";
(60, 80)*+{\bullet}="9";
(-10,100)*+{\bullet}="10";
(20,110)*+{\bullet}="11";
(80,110 )*+{\bullet}="12";
( 10, 130)*+{\bullet}="13";
(  70,130)*+{\bullet}="14";
 {\ar@{-->}|-{\scriptscriptstyle  {f}^{x_1;x_2 , x_3 }} "1";"2" };
{\ar@{-->} "3";"1"_{\scriptscriptstyle  {f}^{ x_2; x_3}} };
{\ar@{<--}|-{\scriptscriptstyle  {f}^{ x_3 }} "3";"6" };
{\ar@{->}"5";"2"^{\scriptscriptstyle f^{x_3; x_1 ,x_2}}};
{\ar@{->}|-{\scriptscriptstyle f^{ x_1;x_2}} "4";"5" };
{\ar@{->}|-{\scriptscriptstyle f^{x_2 }} "6";"4" };
{\ar@{->}|-{\scriptscriptstyle f^{x_3;x_2}} "4";"1" };
{\ar@{->}|-{\scriptscriptstyle f^{ x_1  }}"6"; "7"};
{\ar@{->}|-{\scriptscriptstyle f^{x_2;x_1 }}"7";"5"};
 {\ar@{->}|-{\scriptscriptstyle \widetilde f^{x_1;x_2, x_3 }} "8";"9" };
{\ar@{->}|-{\scriptscriptstyle \widetilde f^{ x_2; x_3}} "10";"8" };
{\ar@{<-} "10";"13"^{\scriptscriptstyle \widetilde f^{ x_3 }} };
{\ar@{~>}|-{\scriptscriptstyle \widetilde f^{x_3; x_1 ,x_2}}"12";"9"};
{\ar@{->}|-{\scriptscriptstyle \widetilde f^{ x_1;x_2}} "11";"12" };
{\ar@{->}|-{\scriptscriptstyle \widetilde f^{x_2 }} "13";"11" };
{\ar@{->}|-{\scriptscriptstyle \widetilde f^{x_3;x_2}} "11";"8" };
{\ar@{~>}|-{\scriptscriptstyle \widetilde f^{ x_1  }}"13"; "14"};
{\ar@{~>}"14";"12"^{\scriptscriptstyle \widetilde f^{x_2;x_1 }}};
{\ar@{<- }|-{\scriptscriptstyle g^{x_2, x_3 }} "1";"8" };
 {\ar@{<~ }|-{\scriptscriptstyle g^{ x_1,x_2,x_3 }} "2";"9" };
 {\ar@{<- }|-{\scriptscriptstyle g^{ x_3  }} "3";"10" };
 {\ar@{<-}|-{\scriptscriptstyle g^{ x_2  }} "4";"11" };{\ar@{<- }"5";"12"_{\scriptscriptstyle g^{ x_1,x_2  }} };
 {\ar@{<--}|-{\scriptscriptstyle g^{0  }} "6";"13" };{\ar@{<- }|-{\scriptscriptstyle g^{ x_1 }} "7";"14" };( 35,-8)*+{ \scriptstyle  {\rm figure}:\; \Psi_{\Sigma_+} }="30";
\endxy \qquad   \xy 0;/r.17pc/:
(0,0 )*+{\bullet}="1";
(60, 0)*+{\bullet}="2";
(-10,20)*+{\bullet}="3";
(20,30)*+{\bullet}="4";
(80,30 )*+{\bullet}="5";
( 10, 50)*+{\bullet}="6";
(  70,50)*+{\bullet}="7";
(60, 80)*+{\bullet}="9";
(80,110 )*+{\bullet}="12";
( 10, 130)*+{\bullet}="13";
(  70,130)*+{\bullet}="14";
( 75,110)*+{(2)}="21";
( 65,70)*+{(1)}="22";
( 30,15)*+{(5)}="23";
( 25,90)*+{(3)}="24";
( 30,35)*+{(4)}="24";
(  5,25)*+{(6)}="26";
 {\ar@{-->}|-{\scriptscriptstyle  {f}^{x_1;x_2, x_3 }} "1";"2" };
{\ar@{-->} "3";"1"_{\scriptscriptstyle  {f}^{ x_2; x_3}} };
{\ar@{<--} "3";"6"^{\scriptscriptstyle  {f}^{ x_3 }} };
{\ar@{->}"5";"2"^{\scriptscriptstyle f^{x_3; x_1 ,x_2}}};
{\ar@{->}|-{\scriptscriptstyle f^{ x_1;x_2}} "4";"5" };
{\ar@{->}|-{\scriptscriptstyle f^{x_2 }} "6";"4" };
{\ar@{->}|-{\scriptscriptstyle f^{x_3;x_2}} "4";"1" };
{\ar@{->}|-{\scriptscriptstyle f^{ x_1  }}"6"; "7"};
{\ar@{->}|-{\scriptscriptstyle f^{x_2;x_1 }}"7";"5"};
 {\ar@{~>}|-{\scriptscriptstyle \widetilde f^{x_3; x_1 ,x_2}}"12";"9"};
{\ar@{~>}|-{\scriptscriptstyle \widetilde f^{ x_1  }}"13"; "14"};
{\ar@{~>}"14";"12"^{\scriptscriptstyle \widetilde f^{x_2;x_1 }}};
 {\ar@{<~ }|-{\scriptscriptstyle g^{ x_1,x_2,x_3 }} "2";"9" };
 {\ar@{<- }"5";"12"_{\scriptscriptstyle g^{ x_1,x_2  }} };
 {\ar@{<--}|-{\scriptscriptstyle g^{0  }} "6";"13" };{\ar@{<- }|-{\scriptscriptstyle g^{ x_1 }} "7";"14" };( 35,-8)*+{ \scriptstyle  {\rm figure}:\;\hat \Sigma_1 }="30";
\endxy
\end{equation*}

The first $3$-arrow is $(g_*,\hat\Sigma_1,\hat K_1)$ with
 \begin{equation*}\hat
K_1=\left[\left(\widetilde{f}^{x_1} \widetilde{f}^{x_2;x_1  }\right)\rhd h^{ x_3; x_1 ,x_2}\right]\rhd' k^{x_1,x_2}.\end{equation*}
whose   $2$-target is $(g_*,\hat \Sigma_2)$.  The second $3$-arrow is $(g_*,\hat\Sigma_2,\hat K_2)$, interchanging $2$-arrows  $(2)$ and $ (1)$ in the figure $ \hat\Sigma_2$, with
  \begin{equation*}\hat
K_2=  \widetilde{F}^{x_1,x_2 }\rhd'\left\{ \left(\widetilde{F}^{x_1,x_2 }\right)^{-1},(\widetilde{f}^{x_1} \widetilde{f}^{x_2;x_1  })\rhd h^{ x_3; x_1 ,x_2} \right\}  ,\end{equation*}
 whose   $2$-target is $(g_*,\hat \Sigma_3)$.
     The third $3$-arrow is $(g_*,\hat\Sigma_3,\hat K_3)$, interchanging $2$-arrows  $(4)$ and $ (5)$ in the figure $ \hat\Sigma_2$, with
  \begin{equation*}\begin{split}\hat
K_3=&\left[ \widetilde{F}^{x_1,x_2 }\cdot   \widetilde{f}^{x_2}\rhd\left( \widetilde{f}^{ x_1  ;x_2} \rhd h^{ x_3; x_1 ,x_2} \cdot     h^{  x_1; x_2}\right)\right]  \rhd'\left[h^{  x_2 }\rhd'  \left\{ (h^{  x_2 })^{-1}, \left ( \widetilde{{f}}^{x_2} g^{x_2}\right) \rhd F^{x_1 ,x_3;x_2 }\right \}^{-1}\right] ,\end{split}\end{equation*}
 whose   $2$-target is $(g_*,\hat \Sigma_4)$, where the first bracket is the whiskering corresponding to the composition of $2$-arrows (2), (1) and (3) in the figure $\hat\Sigma_2$.

\begin{equation*}
    \xy 0;/r.17pc/:
(0,0 )*+{\bullet}="1";
(60, 0)*+{\bullet}="2";
(-10,20)*+{\bullet}="3";
(20,30)*+{\bullet}="4";
(80,30 )*+{\bullet}="5";
( 10, 50)*+{\bullet}="6";
(60, 80)*+{\bullet}="9";
(20,100)*+{\bullet}="11";
(80,100 )*+{\bullet}="12";
( 10, 120)*+{\bullet}="13";
(  70,120)*+{\bullet}="14";
( 45,110)*+{(2)}="21";
( 69,75)*+{(1)}="22";
( 30,15)*+{(5)}="23";
( 15,80)*+{(4)}="24";
( 50,65)*+{(3)}="24";
(  5,25)*+{(6)}="26";
 {\ar@{-->}|-{\scriptscriptstyle  {f}^{x_1;x_2, x_3 }} "1";"2" };
{\ar@{-->} "3";"1"_{\scriptscriptstyle  {f}^{ x_2; x_3}} };
{\ar@{<--} "3";"6"^{\scriptscriptstyle  {f}^{ x_3 }} };
{\ar@{->}"5";"2"^{\scriptscriptstyle f^{x_3; x_1 ,x_2}}};
{\ar@{->}|-{\scriptscriptstyle f^{ x_1;x_2}} "4";"5" };
{\ar@{->}|-{\scriptscriptstyle f^{x_2 }} "6";"4" };
{\ar@{->}|-{\scriptscriptstyle f^{x_3;x_2}} "4";"1" };
{\ar@{~>}|-{\scriptscriptstyle \widetilde f^{x_3; x_1 ,x_2}}"12";"9"};
{\ar@{->}|-{\scriptscriptstyle \widetilde f^{ x_1;x_2}} "11";"12" };
{\ar@{->}|-{\scriptscriptstyle \widetilde f^{x_2 }} "13";"11" };
{\ar@{~>}|-{\scriptscriptstyle \widetilde f^{ x_1  }}"13"; "14"};
{\ar@{~>}"14";"12"^{\scriptscriptstyle \widetilde f^{x_2;x_1 }}};
 {\ar@{<~ }|-{\scriptscriptstyle g^{ x_1,x_2,x_3 }} "2";"9" };
 {\ar@{<-}|-{\scriptscriptstyle g^{ x_2  }} "4";"11" };{\ar@{<- }"5";"12"_{\scriptscriptstyle g^{ x_1,x_2  }} };
 {\ar@{<--}|-{\scriptscriptstyle g^{0  }} "6";"13" };( 35,-8)*+{ \scriptstyle  {\rm figure}:\;\hat \Sigma_2 }="30";
\endxy \qquad \xy 0;/r.17pc/:
(0,0 )*+{\bullet}="1";
(60, 0)*+{\bullet}="2";
(-10,20)*+{\bullet}="3";
(20,30)*+{\bullet}="4";
( 10, 50)*+{\bullet}="6";
(0,80 )*+{\bullet}="8";
(60, 80)*+{\bullet}="9";
(20,100)*+{\bullet}="11";
(80,100 )*+{\bullet}="12";
( 10, 120)*+{\bullet}="13";
(  70,120)*+{\bullet}="14";
( 40,90)*+{(2)}="21";( 45,110)*+{(1)}="22";
( 15,100)*+{(5)}="23";
( 15,85)*+{(4)}="24";
( 50,65)*+{(3)}="24";
(  -5,20)*+{(6)}="26";
 {\ar@{-->}|-{\scriptscriptstyle  {f}^{x_1;x_2, x_3}} "1";"2" };
{\ar@{-->} "3";"1"_{\scriptscriptstyle  {f}^{ x_2; x_3}} };
{\ar@{<--}|-{\scriptscriptstyle  {f}^{ x_3 }} "3";"6" };
{\ar@{->}|-{\scriptscriptstyle f^{x_2 }} "6";"4" };
{\ar@{->}|-{\scriptscriptstyle f^{x_3;x_2}} "4";"1" };
 {\ar@{->}|-{\scriptscriptstyle \widetilde f^{x_1;x_2, x_3}} "8";"9" };
{\ar@{~>}|-{\scriptscriptstyle \widetilde f^{x_3; x_1 ,x_2}}"12";"9"};
{\ar@{->}|-{\scriptscriptstyle \widetilde f^{ x_1;x_2}} "11";"12" };
{\ar@{->}|-{\scriptscriptstyle \widetilde f^{x_2 }} "13";"11" };
{\ar@{->}|-{\scriptscriptstyle \widetilde f^{x_3;x_2}} "11";"8" };
{\ar@{~>}|-{\scriptscriptstyle \widetilde f^{ x_1  }}"13"; "14"};
{\ar@{~>}"14";"12"^{\scriptscriptstyle \widetilde f^{x_2;x_1 }}};
{\ar@{<- }|-{\scriptscriptstyle g^{x_2, x_3 }} "1";"8" };
 {\ar@{<~ }|-{\scriptscriptstyle g^{ x_1,x_2,x_3 }} "2";"9" };
  {\ar@{<-}|-{\scriptscriptstyle g^{ x_2  }} "4";"11" };
 {\ar@{<--}|-{\scriptscriptstyle g^{0  }} "6";"13" };( 35,-8)*+{ \scriptstyle  {\rm figure}:\;\hat \Sigma_5 }="30";
\endxy
\end{equation*}

\begin{equation*}
    \xy 0;/r.17pc/:
(0,0 )*+{\bullet}="1";
(60, 0)*+{\bullet}="2";
(-10,20)*+{\bullet}="3";
( 10, 50)*+{\bullet}="6";
(0,80 )*+{\bullet}="8";
(60, 80)*+{\bullet}="9";
(-10,100)*+{\bullet}="10";
(20,110)*+{\bullet}="11";
(80,110 )*+{\bullet}="12";
( 10, 130)*+{\bullet}="13";
(  70,130)*+{\bullet}="14";
( 40,90)*+{(2)}="21";( 45,120)*+{(1)}="22";
(  5,60)*+{(6)}="23";
(  5,100)*+{(4)}="24";
( 30,45)*+{(3)}="24";
(  -5,70)*+{(5)}="26";
 {\ar@{-->}|-{ {f}^{x_1;x_2,x_3 }} "1";"2" };
{\ar@{-->}|-{ {f}^{ x_2; x_3}} "3";"1" };
{\ar@{<--}|-{ {f}^{ x_3 }} "3";"6"};
 {\ar@{->}|-{\widetilde f^{x_1;x_2, x_3}} "8";"9" };
{\ar@{->}|-{\widetilde f^{ x_2; x_3}} "10";"8" };
{\ar@{<-} "10";"13"^{\widetilde f^{ x_3 }} };
{\ar@{~>}|-{\widetilde f^{x_3; x_1 ,x_2}}"12";"9"};
{\ar@{->}|-{\widetilde f^{ x_1;x_2}} "11";"12" };
{\ar@{->}|-{\widetilde f^{x_2 }} "13";"11" };
{\ar@{->}|-{\widetilde f^{x_3;x_2}} "11";"8" };
{\ar@{~>}|-{\widetilde f^{ x_1  }}"13"; "14"};
{\ar@{~>}|-{\widetilde f^{x_2;x_1 }}"14";"12"};
{\ar@{<- }|-{g^{x_2, x_3 }} "1";"8" };
 {\ar@{<~ }|-{g^{ x_1,x_2,x_3 }} "2";"9" };
 {\ar@{<- }|-{g^{ x_3  }} "3";"10" };
 {\ar@{<--}|-{g^{0  }} "6";"13" };( 35,-8)*+{ \scriptstyle  {\rm figure}:\; \hat \Sigma_6 }="30";
\endxy
\end{equation*}
 The fourth $3$-arrow is $(g_*,\hat\Sigma_4,\hat K_4)$   with
$\hat
K_4= \widetilde{F}^{x_1,x_2 } \rhd' \left[\widetilde{f}^{x_2}\rhd k^{x_1,x_3;x_2}\right] ,$
 whose   $2$-target is $(g_*,\hat \Sigma_5)$.
 The fifth $3$-arrow is $(g_*,\hat\Sigma_5,\hat K_5)$  with
\begin{equation*}
\hat K_5= \left[\widetilde{F}^{x_1,x_2 }\cdot    \widetilde{f}^{x_2}\rhd\left( \widetilde{F}^{ x_1, x_3 ;x_2}\cdot\widetilde{f}^{ x_3  ;x_2} \rhd h^{x_1;x_2, x_3} \right)\right] \rhd'  k^{  x_2,x_3 }  ,\end{equation*}
 whose   $2$-target is $(g_*,\hat \Sigma_6)$, where the part before $ \rhd'$ is the whiskering corresponding to  the composition of $2$-arrows (1), (2) and (3) in the figure $\hat\Sigma_5$.
 The last $3$-arrow    is $(g_*,\hat\Sigma_6,\hat K_6)$,  interchanging $2$-arrows  $(4)$ and $ (3)$ in the figure $ \hat\Sigma_6$, with
    \begin{equation*}\begin{split}\hat
K_6=  &\left[\widetilde{F}^{x_1,x_2 }\cdot   \widetilde{f}^{x_2} \rhd \widetilde{F}^{ x_1, x_3 ;x_2} \right] \rhd'\left[ \widetilde{F}^{x_2 ,x_3  } \rhd'\left\{ \left(\widetilde{F}^{x_2 ,x_3  }\right)^{-1},  (\widetilde{f}^{x_2}\widetilde{f}^{  x_3; x_2})\rhd  h^{  x_1,; x_2, x_3}\right \}\right ] ,\end{split}\end{equation*}
  whose $2$-target is   $ (g_*,\hat \Sigma_7)= (g_*,  \Sigma_8)$ in the above subsection.

The $0$-th $3$-arrow is $(g_*,\hat\Sigma_0,\hat K_0)$ with
\begin{equation*}
 \hat K_0= {F}_3^*(\Omega_{x_1,x_2,x_3})=\left[ \widetilde{f}^{x_1} \rhd\left(\widetilde{f}^{x_2;x_1  } \rhd h^{ x_3; x_1 ,x_2}\cdot h^{ x_2;  x_1  }\right)\cdot h^{   x_1  }\right]\rhd' [g^0\rhd {k}^{x_1,x_2,x_3}],
\end{equation*}where the part before $ \rhd'$ is the whiskering corresponding to   the composition of $2$-arrows (1), (2) and (3) in the figure $\hat\Sigma_1$.
Now $\Psi_{\Sigma_+}$ in (\ref{eq:naturality}) is $ \hat K_1 \hat K_2\cdots \hat  K_6$ and the LHS of (\ref{eq:naturality}) gives $ \hat K_0 \hat K_1\cdots \hat  K_6$.
The derivative  $\frac {\partial^3 \hat K_0\hat K_1\cdots \hat  K_6}{\partial x_1\partial x_2\partial x_3}$ at $(0,0,0)$ gives
\begin{equation}\label{eq:derivative-L}\begin{split}& g^0\rhd C(v_1,v_2,v_3)+\varphi(v_3)\rhd'\psi(v_1,v_2)-
 \{ \widetilde{B}(v_1,v_2),\varphi(v_3) \}+\{\varphi(v_2),g^0\rhd B(v_1,v_3)\}\\+& v_2 \psi(v_1,v_3)+ \widetilde{A}(v_2)\rhd \psi(v_1,v_3)+    \varphi(v_1)\rhd'\psi(v_2,v_3)-
 \{ \widetilde{B}(v_2,v_3),\varphi(v_1) \} .
\end{split}\end{equation}
The derivatives of both sides of (\ref{eq:naturality}) give   (\ref{eq:derivative-R})= (\ref{eq:derivative-L}), which is exactly the gauge transformation formula (\ref{eq:C}) for the $C$-field.
\section{The $3$-holonomies  and the $3$-curvatures }
For  $(x_1,x_2,x_3,x_4)\in \mathbb{R}^4$, consider a smooth family of  $4$-paths $\dot{\Theta}_{x_1,x_2,x_3,x_4}: [0,1]^4\rightarrow [0,x_1]\times\cdots\times[0,x_4]\subset\mathbb{R}^4$,  and $ \Theta_{x_1,x_2,x_3,x_4}:=\Gamma \circ\dot{\Theta}_{x_1,x_2,x_3,x_4}$, where $\Gamma$ is a mapping $\mathbb{R}^4\rightarrow U\subset \mathbb{R}^n$ such that
$\frac {\partial\Gamma}{\partial x_j}(0,0,0,0)=v_j$ for fixed $v_j\in T_xU$, $j=1,\ldots,4$.

As before, we use notations $\gamma^{x_i;*}$ for $1$-paths,    $\Sigma_{x_i, x_j;*}$ for $2$-paths and  $\Omega_{x_i, x_j,x_k;*}$ for $3 $-paths. Under the action of a $\mathbf{Gray}$-functor $F$, we get $1$-arrows $f^{x_i;*}$, $2$-arrows $(*,F^{x_i, x_j;*})$  and  $3$-arrows $(*,*,k^{x_i, x_j,x_k;*})$.

The boundary $ \partial{\Theta}_{x_1,x_2,x_3,x_4}$ of the $4$-path  $  {\Theta}_{x_1,x_2,x_3,x_4}$ is a closed $3$-path, which is the composition of two $3$-paths corresponding to $\Sigma_-\times[0,x_4]\cup\Omega_{x_1, x_2,x_3;x_4}$ and  $\Omega_{x_1, x_2,x_3}\cup\Sigma_+\times[0,x_4]$,   respectively, where $\Sigma_\mp$ are the $2$-source and $2$-target in (\ref{eq:Omega-boundary}) of the $3$-path $\Omega_{x_1, x_2,x_3 }$. Each of these two $3$-paths is a composition  of several $3$-paths.   Then $F(\partial\Theta_{x_1,x_2,x_3,x_4})$ is the $3$-dimensional holonomy, the lattice version of $3$-curvature. Let us write it down explicitly.

\subsection{$3$-arrows corresponding to $\Sigma_-\times[0,x_4]\cup\Omega_{x_1, x_2,x_3;x_4}$ }
The first $3$-arrow is $(g_*, \Sigma_1,  K_1)$, interchanging $2$-arrows  $(3)$ and $ (4)$ in the figure $  \Sigma_1$, with
  \begin{equation*}\begin{split}
K_1=&  K_1^0
\rhd'\left [F^{ x_1 ,x_4 }\rhd'\left \{ (F^{ x_1 ,x_4 })^{-1}, (  {f}^{x_1} {f}^{x_4;x_1}  )\rhd F^{x_2,x_3;x_1 ,x_4  }\right\}^{-1}\right ] ,\end{split}\end{equation*}
whose   $2$-target is $(g_*,\Sigma_2)$, where  $ K_1^0={f}^{x_1}\rhd\left(  {f}^{x_2;x_1  } \rhd F^{x_3,x_4;x_1 ,x_2} \cdot    F^{ x_2,x_4;x_1 }\right)$ is the whiskering corresponding to   the composition of $2$-arrows (1) and (2) in the figure $\Sigma_1$.
     The second $3$-arrow is $(g_*, \Sigma_2,  K_2)$ with    $K_2 ={f}^{x_1}\rhd k^{x_2,x_3,x_4;x_1}$, whose $2$-target is $(g_*,\Sigma_3)$.
\begin{equation*}
     \xy 0;/r.17pc/:
(0,0 )*+{\bullet}="1";
(60, 0)*+{\bullet}="2";
( 70,15)*+{\bullet}="3";
(-20,30)*+{\bullet}="4";
(40,30 )*+{\bullet}="5";
(-10, 50)*+{\bullet}="6";
(  50,50)*+{\bullet}="7";
(0,80 )*+{\bullet}="8";
(60,80)*+{\bullet}="9";
( 70,95)*+{\bullet}="10";
(-20,110)*+{\bullet}="11";
(40,110 )*+{\bullet}="12";
(-10, 130)*+{\bullet}="13";
(  50,130)*+{\bullet}="14";
 {\ar@{-->}|-{\scriptscriptstyle f^{x_1;x_2 ,x_3,x_4}} "1";"2" };
{\ar@{-->}|-{\scriptscriptstyle f^{ x_2; x_3,x_4}} "4";"1" };
{\ar@{->} "3";"2"^{\scriptscriptstyle f^{ x_3;x_1,x_2,x_4}} };
{\ar@{->}|-{\scriptscriptstyle f^{x_2; x_1 ,x_3,x_4}} "5";"2" };
{\ar@{->}|-{\scriptscriptstyle f^{ x_1;x_3,x_4}} "4";"5" };
{\ar@{-->}|-{\scriptscriptstyle f^{x_3 ;x_4}} "6";"4" };
{\ar@{->}|-{\scriptscriptstyle f^{x_1;x_4}} "6";"7" };
{\ar@{->}|-{\scriptscriptstyle f^{x_2;x_1,x_4  }} "7";"3" };
{\ar@{->} "7";"5"_{\scriptscriptstyle f^{x_3;x_1,x_4 }} };
{\ar@{->}|-{\scriptscriptstyle { f}^{x_1;x_2,x_3}} "8";"9" };
{\ar@{->}|-{\scriptscriptstyle { f}^{x_2;x_3}} "11";"8" };
{\ar@{~>} "10";"9"^{\scriptscriptstyle {f}^{ x_3;x_1,x_2}} };
{\ar@{->}|-{\scriptscriptstyle { f}^{x_2; x_1 ,x_3}} "12";"9" };
{\ar@{->}|-{ \scriptscriptstyle{ f}^{x_1;x_3}} "11";"12" };
{\ar@{->}|-{\scriptscriptstyle { f}^{x_3 }} "13";"11" };
{\ar@{~>}|-{\scriptscriptstyle { f}^{x_1}} "13";"14" };
{\ar@{~>}|-{\scriptscriptstyle {f}^{x_2;x_1  }} "14";"10" };
{\ar@{->}|-{\scriptscriptstyle {f}^{x_3;x_1 }} "14";"12" };
 {\ar@{<- }|-{\scriptscriptstyle f^{x_4;x_2, x_3 }} "1";"8" };
 {\ar@{<~ }|-{\scriptscriptstyle f^{x_4; x_1,x_2,x_3 }} "2";"9" };
 {\ar@{<- }|-{\scriptscriptstyle f^{x_4; x_1,x_2 }} "3";"10" };
 {\ar@{<-}|-{\scriptscriptstyle f^{x_4; x_3  }} "4";"11" };{\ar@{<- }|-{\scriptscriptstyle f^{x_4; x_1,x_3  }} "5";"12" };
 {\ar@{<--}|-{\scriptscriptstyle f^{x_4   }} "6";"13" };{\ar@{<- }|-{\scriptscriptstyle f^{x_4; x_1 }} "7";"14" };( 27,-8)*+{ \scriptstyle  {\rm figure}:\;  \Sigma_-\times[0,x_4] }="30";
  \endxy
\qquad
     \xy 0;/r.17pc/:
(0,0 )*+{\bullet}="1";
(60, 0)*+{\bullet}="2";
( 70,15)*+{\bullet}="3";
(-20,30)*+{\bullet}="4";
(40,30 )*+{\bullet}="5";
(-10, 50)*+{\bullet}="6";
(  50,50)*+{\bullet}="7";
(60,80)*+{\bullet}="9";
( 70,95)*+{\bullet}="10";
(-10, 130)*+{\bullet}="13";
(  50,130)*+{\bullet}="14";
( 65,50)*+{(1)}="21";
( 60,95)*+{(2)}="22";
( 20,85)*+{(3)}="23";
( 50,30)*+{(4)}="24";
( 10,40)*+{(5)}="24";
( 25,15)*+{(6)}="26";
 {\ar@{-->}|-{\scriptscriptstyle f^{x_1;x_2 ,x_3,x_4}} "1";"2" };
{\ar@{-->}|-{\scriptscriptstyle f^{ x_2; x_3,x_4}} "4";"1" };
{\ar@{->} "3";"2"^{\scriptscriptstyle f^{ x_3;x_1,x_2,x_4}} };
{\ar@{->}|-{\scriptscriptstyle f^{x_2; x_1 ,x_3,x_4}} "5";"2" };
{\ar@{->}|-{\scriptscriptstyle f^{ x_1;x_3,x_4}} "4";"5" };
{\ar@{-->}|-{\scriptscriptstyle f^{x_3 ;x_4}} "6";"4" };
{\ar@{->}|-{\scriptscriptstyle f^{x_1;x_4}} "6";"7" };
{\ar@{->}|-{\scriptscriptstyle f^{x_2;x_1,x_4  }} "7";"3" };
{\ar@{->} "7";"5"_{\scriptscriptstyle f^{x_3;x_1,x_4 }} };
{\ar@{~>} "10";"9"^{\scriptscriptstyle {f}^{ x_3;x_1,x_2}} };
{\ar@{~>}|-{\scriptscriptstyle { f}^{x_1}} "13";"14" };
{\ar@{~>}|-{\scriptscriptstyle {f}^{x_2;x_1  }} "14";"10" };
 {\ar@{<~ }|-{\scriptscriptstyle f^{x_4; x_1,x_2,x_3 }} "2";"9" };
 {\ar@{<- }|-{\scriptscriptstyle f^{x_4; x_1,x_2 }} "3";"10" };
  {\ar@{<--}|-{\scriptscriptstyle f^{x_4   }} "6";"13" };{\ar@{<- }|-{\scriptscriptstyle f^{x_4; x_1 }} "7";"14" };( 35,-8)*+{ \scriptstyle  {\rm figure}:\;  \Sigma_1 }="30";
  \endxy
\end{equation*}

The third $3$-arrow is $(g_*, \Sigma_3,  K_3)$ with
\begin{equation*}
    K_3=\left [ {f}^{x_1}\rhd\left(  F^{x_2 ,x_3 ;x_1 } \cdot   {f}^{ x_3;x_1}   \rhd F^{ x_2,x_4;x_1 ,x_3} \right) \right]\rhd' k^{x_1,x_3,x_4},
\end{equation*}
 whose  $2$-target is $(g_*,\Sigma_4)$, where the part before $ \rhd'$ is the whiskering corresponding to   the composition of $2$-arrows (1) and (2) in the figure $\Sigma_3$.
 The fourth  $3$-arrow     is  $(g_*, \Sigma_4,  K_4)$, interchanging $2$-arrows    $(3)$ and $ (2)$ in the figure $\Sigma_4$, with
 \begin{equation*}
    K_4=(  {f}^{x_1} \rhd  {F}^{x_2 ,x_3;x_1 }) \rhd'\left[{F}^{x_1 ,x_3  } \rhd' \left\{({F}^{x_1 ,x_3  })^{-1} , ( {f}^{x_1} {f}^{ x_3;x_1})   \rhd F^{ x_2,x_4;x_1 ,x_3}\right\} \right],
\end{equation*}whose $2$-target is  $(g_*,\Sigma_5)$.
 The fifth  $3$-arrow  is $(g_*, \Sigma_5,  K_5)$,  interchanging $2$-arrows    $(5)$ and $ (6)$ in the figure $\Sigma_4$,  with
 \begin{equation*}\begin{split}
    K_5=&K_5^0\rhd'\left[ F^{x_3 , x_4 }\rhd'\left \{(F^{x_3 , x_4 } )^{-1} ,({f}^{x_3}{f}^{x_4;x_3}  )\rhd F^{x_1 , x_2;x_3,x_4  }\right \}^{-1}\right],
\end{split}\end{equation*}whose $2$-target is $(g_*,\Sigma_6)$, where   $ K_5^0= {f}^{x_1}\rhd   F^{x_2 ,x_3 ;x_1 } \cdot   F^{x_1 ,x_3  } \cdot {f}^{x_3} \rhd \left( {f}^{ x_1;x_3} \rhd F^{ x_2,x_4;x_1 ,x_3} \cdot    F^{ x_1 ,x_4; x_3}\right )$ is the whiskering corresponding to   the composition of $2$-arrows (1), (3), (2) and (4) in the figure $\Sigma_4$.

The sixth $3$-arrow is    $(g_*, \Sigma_6,  K_6)$   with
\begin{equation*}
    K_6=\left [ {f}^{x_1}\rhd   F^{x_2 ,x_3 ;x_1 } \cdot   F^{x_1 ,x_3  }\right ]\rhd' \left[ {f}^{x_3}  \rhd k^{x_1,x_2,x_4;x_3}\right].
\end{equation*}
whose   $2$-target is $(g_*,\Sigma_7)$.
\begin{equation*}
     \xy 0;/r.17pc/:
(0,0 )*+{\bullet}="1";
(60, 0)*+{\bullet}="2";
(-20,30)*+{\bullet}="4";
(40,30 )*+{\bullet}="5";
(-10, 50)*+{\bullet}="6";
(  50,50)*+{\bullet}="7";
(60,80)*+{\bullet}="9";
( 70,95)*+{\bullet}="10";
(40,110 )*+{\bullet}="12";
(-10, 130)*+{\bullet}="13";
(  50,130)*+{\bullet}="14";
( 45,80)*+{(3)}="21";
( 60,100)*+{(1)}="22";
( 20,85)*+{(4)}="23";
( 55,30)*+{(2)}="24";
( 10,40)*+{(5)}="24";
( 25,15)*+{(6)}="26";
 {\ar@{-->}|-{\scriptscriptstyle  f^{x_1;x_2 ,x_3,x_4}} "1";"2" };
{\ar@{-->}|-{\scriptscriptstyle f^{ x_2; x_3,x_4}} "4";"1" };
{\ar@{->}|-{\scriptscriptstyle f^{x_2; x_1 ,x_3,x_4}} "5";"2" };
{\ar@{->}|-{\scriptscriptstyle f^{ x_1;x_3,x_4}} "4";"5" };
{\ar@{-->}|-{\scriptscriptstyle f^{x_3 ;x_4}} "6";"4" };
{\ar@{->}|-{\scriptscriptstyle f^{x_1;x_4}} "6";"7" };
{\ar@{->}|-{\scriptscriptstyle f^{x_3;x_1,x_4 }} "7";"5" };
{\ar@{~>}|-{\scriptscriptstyle  {f}^{ x_3;x_1,x_2}} "10";"9" };
{\ar@{->}|-{\scriptscriptstyle  {f}^{x_2; x_1 ,x_3}} "12";"9" };
{\ar@{~>}|-{\scriptscriptstyle  {f}^{x_1}} "13";"14" };
{\ar@{~>}|-{ \scriptscriptstyle {f}^{x_2;x_1  }} "14";"10" };
{\ar@{->}|-{ \scriptscriptstyle {f}^{x_3;x_1 }} "14";"12" };
  {\ar@{<~ } "2";"9"_{\scriptscriptstyle f^{x_4; x_1,x_2,x_3 }} };
{\ar@{<- }"5";"12"^{\scriptscriptstyle f^{x_4; x_1,x_3  }} };
 {\ar@{<--}|-{\scriptscriptstyle f^{x_4   }} "6";"13" };{\ar@{<- }|-{\scriptscriptstyle f^{x_4; x_1 }} "7";"14" };( 35,-8)*+{ \scriptstyle  {\rm figure}:\,  \Sigma_3}="30";
  \endxy
\qquad
     \xy 0;/r.17pc/:
(0,0 )*+{\bullet}="1";
(60, 0)*+{\bullet}="2";
(-20,30)*+{\bullet}="4";
(40,30 )*+{\bullet}="5";
(-10, 50)*+{\bullet}="6";
(60,80)*+{\bullet}="9";
( 70,95)*+{\bullet}="10";
(-20,110)*+{\bullet}="11";
(40,110 )*+{\bullet}="12";
(-10, 130)*+{\bullet}="13";
(  50,130)*+{\bullet}="14";
( 20,120)*+{(3)}="21";
( 60,100)*+{(1)}="22";
( 20,75)*+{(4)}="23";
( 55,70)*+{(2)}="24";
( 10,15)*+{(6)}="24";
( -15,55)*+{(5)}="26";
 {\ar@{-->}|-{\scriptscriptstyle f^{x_1;x_2 ,x_3,x_4}} "1";"2" };
{\ar@{-->}|-{\scriptscriptstyle f^{ x_2; x_3,x_4}} "4";"1" };
{\ar@{->}|-{\scriptscriptstyle f^{x_2; x_1 ,x_3,x_4}} "5";"2" };
{\ar@{->}|-{\scriptscriptstyle f^{ x_1;x_3,x_4}} "4";"5" };
{\ar@{-->}|-{\scriptscriptstyle f^{x_3 ;x_4}} "6";"4" };
{\ar@{~>} "10";"9"^{ {f}^{ x_3;x_1,x_2}} };
{\ar@{->}|-{\scriptscriptstyle  {f}^{x_2; x_1 ,x_3}} "12";"9" };
{\ar@{->}|-{ \scriptscriptstyle  {f}^{x_1;x_3}} "11";"12" };
{\ar@{->}|-{\scriptscriptstyle  {f}^{x_3 }} "13";"11" };
{\ar@{~>}|-{\scriptscriptstyle  {f}^{x_1}} "13";"14" };
{\ar@{~>}|-{\scriptscriptstyle  {f}^{x_2;x_1  }} "14";"10" };
{\ar@{->}|-{\scriptscriptstyle  {f}^{x_3;x_1 }} "14";"12" };
 {\ar@{<~ }|-{\scriptscriptstyle f^{x_4; x_1,x_2,x_3 }} "2";"9" };
  {\ar@{<-}|-{\scriptscriptstyle f^{x_4; x_3  }} "4";"11" };{\ar@{<- }|-{\scriptscriptstyle f^{x_4; x_1,x_3  }} "5";"12" };
 {\ar@{<--}|-{\scriptscriptstyle f^{x_4   }} "6";"13" };( 35,-8)*+{ \scriptstyle  {\rm figure}:\,  \Sigma_4 }="30";
  \endxy
\end{equation*}

  \begin{equation*}
     \xy 0;/r.17pc/:
(0,0 )*+{\bullet}="1";
(60, 0)*+{\bullet}="2";
(-20,30)*+{\bullet}="4";
(-10, 50)*+{\bullet}="6";
(0,80 )*+{\bullet}="8";
(60,80)*+{\bullet}="9";
( 70,95)*+{\bullet}="10";
(-20,110)*+{\bullet}="11";
(40,110 )*+{\bullet}="12";
(-10, 130)*+{\bullet}="13";
(  50,130)*+{\bullet}="14";
( 20,120)*+{(2)}="21";
( 60,100)*+{(1)}="22";
( -10,30)*+{(5)}="23";
( 25,90)*+{(3)}="24";
( 30,45)*+{(4)}="24";
( -15,55)*+{(6)}="26";
 {\ar@{-->}|-{\scriptscriptstyle f^{x_1;x_2 ,x_3,x_4}} "1";"2" };
{\ar@{-->}|-{\scriptscriptstyle f^{ x_2; x_3,x_4}} "4";"1" };
{\ar@{-->}|-{\scriptscriptstyle f^{x_3 ;x_4}} "6";"4" };
{\ar@{->}|-{\scriptscriptstyle  {f}^{x_1;x_2,x_3}} "8";"9" };
{\ar@{->}|-{ \scriptscriptstyle {f}^{x_2;x_3}} "11";"8" };
{\ar@{~>} "10";"9"^{ {f}^{ x_3;x_1,x_2}} };
{\ar@{->}|-{ \scriptscriptstyle {f}^{x_2; x_1 ,x_3}} "12";"9" };
{\ar@{->}|-{\scriptscriptstyle  {f}^{x_1;x_3}} "11";"12" };
{\ar@{->}|-{ \scriptscriptstyle {f}^{x_3 }} "13";"11" };
{\ar@{~>}|-{\scriptscriptstyle  {f}^{x_1}} "13";"14" };
{\ar@{~>}|-{\scriptscriptstyle  {f}^{x_2;x_1  }} "14";"10" };
{\ar@{->}|-{ \scriptscriptstyle  {f}^{x_3;x_1 }} "14";"12" };
 {\ar@{<- }|-{\scriptscriptstyle  f^{x_4;x_2, x_3 }} "1";"8" };
 {\ar@{<~ }|-{\scriptscriptstyle f^{x_4; x_1,x_2,x_3 }} "2";"9" };
 {\ar@{<-}|-{\scriptscriptstyle f^{x_4; x_3  }} "4";"11" };
 {\ar@{<--}|-{\scriptscriptstyle f^{x_4   }} "6";"13" };( 35,-8)*+{ \scriptstyle  {\rm figure}:\,  \Sigma_7 }="30";
   \endxy \qquad
     \xy 0;/r.17pc/:
(0,0 )*+{\bullet}="1";
(60, 0)*+{\bullet}="2";
(-10,20)*+{\bullet}="3";
( 10, 50)*+{\bullet}="6";
(0,80 )*+{\bullet}="8";
(60, 80)*+{\bullet}="9";
(-10,100)*+{\bullet}="10";
(20,110)*+{\bullet}="11";
(80,110 )*+{\bullet}="12";
( 10, 130)*+{\bullet}="13";
(  70,130)*+{\bullet}="14";
( 40,90)*+{(2)}="21";( 45,120)*+{(1)}="22";
(  5,60)*+{(6)}="23";
(  5,100)*+{(3)}="24";
( 30,45)*+{(4)}="24";
(  -5,70)*+{(5)}="26";
 {\ar@{-->}|-{\scriptscriptstyle  {f}^{x_1;x_2,x_3,x_4 }} "1";"2" };
{\ar@{-->} "3";"1"_{\scriptscriptstyle  {f}^{ x_2; x_3,x_4}} };
{\ar@{<--} "3";"6"^{\scriptscriptstyle  {f}^{ x_3;x_4 }} };
 {\ar@{->}|-{\scriptscriptstyle   f^{x_1;x_2, x_3}} "8";"9" };
{\ar@{->}|-{\scriptscriptstyle  f^{ x_2; x_3}} "10";"8" };
{\ar@{<-} "10";"13"^{\scriptscriptstyle   f^{ x_3 }} };
{\ar@{~>}|-{\scriptscriptstyle  f^{x_3; x_1 ,x_2}}"12";"9"};
{\ar@{->}|-{\scriptscriptstyle   f^{ x_1;x_2}} "11";"12" };
{\ar@{->}|-{\scriptscriptstyle  f^{x_2 }} "13";"11" };
{\ar@{->}|-{\scriptscriptstyle    f^{x_3;x_2}} "11";"8" };
{\ar@{~>}|-{\scriptscriptstyle   f^{ x_1  }}"13"; "14"};
{\ar@{~>}"14";"12"^{\scriptscriptstyle   f^{x_2;x_1 }}};
{\ar@{<- }|-{\scriptscriptstyle f^{x_4 ;x_2, x_3 }} "1";"8" };
 {\ar@{<~ }|-{\scriptscriptstyle f^{x_4 ;x_1,x_2,x_3 }} "2";"9" };
 {\ar@{<- }|-{\scriptscriptstyle f^{x_4 ; x_3  }} "3";"10" };
  {\ar@{<--}|-{\scriptscriptstyle f^{x_4  }} "6";"13" };( 35,-8)*+{ \scriptstyle  {\rm figure}:\,  \Sigma_8 }="30";
\endxy
\end{equation*}

The last $3$-arrow is  $(g_*, \Sigma_7,  K_7)$   with
$
  K_7= \pi_L \circ{F}_3(\Omega_{x_1,x_2,x_3 })= {k}^{x_1,x_2,x_3 },
$ whose   $2$-target is $(g_*,\Sigma_8)$.

The derivative  $\frac {\partial^4K_1K_2\cdots K_7}{\partial x_1\partial x_2\partial x_3\partial x_4}$ at $(0,0,0,0)$ gives
\begin{equation}\label{eq:C-}\begin{split}&
  \{B(v_1,v_4), B(v_2,v_3)\}+    v_1C(v_2,v_3,v_4)+A( v_1 )\rhd C(v_2,v_3,v_4)-\{B(v_1,v_3), B(v_2,v_4)\}\\&+\{B(v_3,v_4), B(v_1,v_2)\}+ v_3C(v_1,v_2,v_4) + A(v_3)\rhd  C(v_1,v_2,v_4).
\end{split}\end{equation}
\subsection{$3$-arrows corresponding to $\Omega_{x_1, x_2,x_3;x_4}\cup\Sigma_+\times[0,x_4]$ }
 \begin{equation*}
     \xy 0;/r.17pc/:
(0,0 )*+{\bullet}="1";
(60, 0)*+{\bullet}="2";
(-10,20)*+{\bullet}="3";
(20,30)*+{\bullet}="4";
(80,30 )*+{\bullet}="5";
( 10, 50)*+{\bullet}="6";
(  70,50)*+{\bullet}="7";
(0,80 )*+{\bullet}="8";
(60, 80)*+{\bullet}="9";
(-10,100)*+{\bullet}="10";
(20,110)*+{\bullet}="11";
(80,110 )*+{\bullet}="12";
( 10, 130)*+{\bullet}="13";
(  70,130)*+{\bullet}="14";
 {\ar@{-->}|-{\scriptscriptstyle  {f}^{x_1;x_2,x_3,x_4 }} "1";"2" };
{\ar@{-->} "3";"1"_{\scriptscriptstyle  {f}^{ x_2; x_3,x_4}} };
{\ar@{<--}|-{\scriptscriptstyle  {f}^{ x_3;x_4 }} "3";"6" };
{\ar@{->}"5";"2"^{\scriptscriptstyle f^{x_3; x_1 ,x_2,x_4}}};
{\ar@{->}|-{\scriptscriptstyle f^{ x_1;x_2,x_4}} "4";"5" };
{\ar@{->}|-{\scriptscriptstyle f^{x_2 ; x_4}} "6";"4" };
{\ar@{->}|-{\scriptscriptstyle f^{x_3;x_2,x_4}} "4";"1" };
{\ar@{->}|-{\scriptscriptstyle f^{ x_1;x_4  }}"6"; "7"};
{\ar@{->}"7";"5"^{\scriptscriptstyle f^{x_2;x_1,x_4 }}};
 {\ar@{->}|-{\scriptscriptstyle   f^{x_1;x_2, x_3}} "8";"9" };
{\ar@{->}|-{\scriptscriptstyle  f^{ x_2; x_3}} "10";"8" };
{\ar@{<-} "10";"13"^{\scriptscriptstyle   f^{ x_3 }} };
{\ar@{~>}|-{\scriptscriptstyle  f^{x_3; x_1 ,x_2}}"12";"9"};
{\ar@{->}|-{\scriptscriptstyle   f^{ x_1;x_2}} "11";"12" };
{\ar@{->}|-{\scriptscriptstyle  f^{x_2 }} "13";"11" };
{\ar@{->}|-{\scriptscriptstyle    f^{x_3;x_2}} "11";"8" };
{\ar@{~>}|-{\scriptscriptstyle   f^{ x_1  }}"13"; "14"};
{\ar@{~>}"14";"12"^{\scriptscriptstyle   f^{x_2;x_1 }}};
{\ar@{<- }|-{\scriptscriptstyle f^{x_4 ;x_2, x_3 }} "1";"8" };
 {\ar@{<~ }|-{\scriptscriptstyle f^{x_4 ;x_1,x_2,x_3 }} "2";"9" };
 {\ar@{<- }|-{\scriptscriptstyle f^{x_4 ; x_3  }} "3";"10" };
 {\ar@{<-}|-{\scriptscriptstyle f^{x_4 ;x_2  }} "4";"11" };{\ar@{<- }"5";"12"_{\scriptscriptstyle f^{x_4 ; x_1,x_2  }} };
 {\ar@{<--}|-{\scriptscriptstyle f^{x_4  }} "6";"13" };{\ar@{<- }"7";"14"_{\scriptscriptstyle f^{x_4 ; x_1 }}  };( 27,-8)*+{ \scriptstyle  {\rm figure}:\;  \Sigma_+\times[0,x_4] }="30";
\endxy     \xy 0;/r.17pc/:
(0,0 )*+{\bullet}="1";
(60, 0)*+{\bullet}="2";
(-10,20)*+{\bullet}="3";
(20,30)*+{\bullet}="4";
(80,30 )*+{\bullet}="5";
( 10, 50)*+{\bullet}="6";
(  70,50)*+{\bullet}="7";
(60, 80)*+{\bullet}="9";
(80,110 )*+{\bullet}="12";
( 10, 130)*+{\bullet}="13";
(  70,130)*+{\bullet}="14";
( 75,110)*+{(2)}="21";
( 65,75)*+{(1)}="22";
( 30,15)*+{(5)}="23";
( 25,95)*+{(3)}="24";
( 30,35)*+{(4)}="24";
(  5,25)*+{(6)}="26";
 {\ar@{-->}|-{\scriptscriptstyle  {f}^{x_1;x_2,x_3,x_4 }} "1";"2" };
{\ar@{-->} "3";"1"_{\scriptscriptstyle  {f}^{ x_2; x_3,x_4}} };
{\ar@{<--} "3";"6"^{\scriptscriptstyle  {f}^{ x_3;x_4 }} };
{\ar@{->}"5";"2"^{\scriptscriptstyle f^{x_3; x_1 ,x_2,x_4}}};
{\ar@{->}|-{\scriptscriptstyle f^{ x_1;x_2,x_4}} "4";"5" };
{\ar@{->}|-{\scriptscriptstyle f^{x_2 ; x_4}} "6";"4" };
{\ar@{->}|-{\scriptscriptstyle f^{x_3;x_2,x_4}} "4";"1" };
{\ar@{->}|-{\scriptscriptstyle f^{ x_1;x_4  }}"6"; "7"};
{\ar@{->}"7";"5"^{\scriptscriptstyle f^{x_2;x_1,x_4 }}};
{\ar@{~>}|-{\scriptscriptstyle  f^{x_3; x_1 ,x_2}}"12";"9"};
{\ar@{~>}|-{\scriptscriptstyle   f^{ x_1  }}"13"; "14"};
{\ar@{~>}"14";"12"^{\scriptscriptstyle   f^{x_2;x_1 }}};
 {\ar@{<~ }|-{\scriptscriptstyle f^{x_4 ;x_1,x_2,x_3 }} "2";"9" };
{\ar@{<- }"5";"12"_{\scriptscriptstyle f^{x_4 ; x_1,x_2  }} };
 {\ar@{<--}|-{\scriptscriptstyle f^{x_4  }} "6";"13" };
 {\ar@{<- } "7";"14"_{\scriptscriptstyle f^{x_4 ; x_1 }} };( 35,-8)*+{ \scriptstyle  {\rm figure}:\; \hat \Sigma_1 }="30";
\endxy
\end{equation*}The first $3$-arrow is  $(g_*, \hat \Sigma_1, \hat K_1)$   with
 \begin{equation*}\hat
K_1=[( {f}^{x_1}  {f}^{x_2;x_1  })\rhd F^{ x_3, x_4; x_1 ,x_2} ]\rhd' k^{x_1,x_2, x_4},\end{equation*}
whose  $2$-target is $(g_*,\hat \Sigma_2)$.
 The second $3$-arrow is    $(g_*, \hat \Sigma_2, \hat K_2)$, interchanging $2$-arrows      $(2)$ and $ (1)$ in the figure $\hat\Sigma_2$, with
  \begin{equation*}\hat
K_2= {F}^{x_1,x_2 }\rhd'\left \{ ( {F}^{x_1,x_2 })^{-1},( {f}^{x_1}  {f}^{x_2;x_1  })\rhd F^{ x_3 ,x_4 ; x_1 ,x_2} \right\}  ,\end{equation*}
whose $2$-target is $(g_*,\hat\Sigma_3)$.
    The third $3$-arrow is   $(g_*, \hat \Sigma_3, \hat K_3)$, interchanging $2$-arrows      $(4)$ and $ (5)$ in the figure $\hat\Sigma_2$, with
  \begin{equation*}\begin{split}\hat
K_3=& \hat
K_3^0
\rhd' \left[F^{x_2 ,x_4  }\rhd'\left\{ (F^{x_2 ,x_4  })^{-1},   ({f}^{x_2} {f}^{x_4;x_2}  ) \rhd F^{x_1 ,x_3;x_2, x_4 }\right \}^{-1}\right] , \end{split} \end{equation*}
whose  $2$-target is $(g_*,\hat\Sigma_4)$, where  $ \hat
K_3^0={F}^{x_1,x_2 }\cdot  ( {f}^{x_2}  {f}^{ x_1  ;x_2})\rhd F^{x_3 ,x_4  ; x_1 ,x_2} \cdot  {f}^{x_2  } \rhd F^{ x_1 ,x_4  ; x_2}$ is the whiskering corresponding to   the composition of $2$-arrows (2), (1) and (3)  in the figure $\hat\Sigma_2$.
\begin{equation*}
    \xy 0;/r.17pc/:
(0,0 )*+{\bullet}="1";
(60, 0)*+{\bullet}="2";
(-10,20)*+{\bullet}="3";
(20,30)*+{\bullet}="4";
(80,30 )*+{\bullet}="5";
( 10, 50)*+{\bullet}="6";
(60, 80)*+{\bullet}="9";
(20,110)*+{\bullet}="11";
(80,110 )*+{\bullet}="12";
( 10, 130)*+{\bullet}="13";
(  70,130)*+{\bullet}="14";
( 45,119)*+{(2)}="21";
( 69,75)*+{(1)}="22";
( 30,15)*+{(5)}="23";
( 15,100)*+{(4)}="24";
( 50,70)*+{(3)}="24";
(  5,25)*+{(6)}="26";
{\ar@{-->}|-{\scriptscriptstyle  {f}^{x_1;x_2,x_3,x_4 }} "1";"2" };
{\ar@{-->} "3";"1"_{\scriptscriptstyle  {f}^{ x_2; x_3,x_4}} };
{\ar@{<--} "3";"6"^{\scriptscriptstyle  {f}^{ x_3;x_4 }} };
{\ar@{->}"5";"2"^{\scriptscriptstyle f^{x_3; x_1 ,x_2,x_4}}};
{\ar@{->}|-{\scriptscriptstyle f^{ x_1;x_2,x_4}} "4";"5" };
{\ar@{->}|-{\scriptscriptstyle f^{x_2 ; x_4}} "6";"4" };
{\ar@{->}|-{\scriptscriptstyle f^{x_3;x_2,x_4}} "4";"1" };
{\ar@{~>}|-{\scriptscriptstyle  f^{x_3; x_1 ,x_2}}"12";"9"};
{\ar@{->}|-{\scriptscriptstyle   f^{ x_1;x_2}} "11";"12" };
{\ar@{->}|-{\scriptscriptstyle  f^{x_2 }} "13";"11" };
{\ar@{~>}|-{\scriptscriptstyle   f^{ x_1  }}"13"; "14"};
{\ar@{~>}"14";"12"^{\scriptscriptstyle   f^{x_2;x_1 }}};
 {\ar@{<~ }|-{\scriptscriptstyle f^{x_4 ;x_1,x_2,x_3 }} "2";"9" };
 {\ar@{<-}|-{\scriptscriptstyle f^{x_4 ;x_2  }} "4";"11" };{\ar@{<- }"5";"12"_{\scriptscriptstyle f^{x_4 ; x_1,x_2  }} };
 {\ar@{<--}|-{\scriptscriptstyle f^{x_4  }} "6";"13" };( 35,-8)*+{ \scriptstyle  {\rm figure}:\; \hat \Sigma_2 }="30";
\endxy \qquad \xy 0;/r.17pc/:
(0,0 )*+{\bullet}="1";
(60, 0)*+{\bullet}="2";
(-10,20)*+{\bullet}="3";
(20,30)*+{\bullet}="4";
( 10, 50)*+{\bullet}="6";
(0,80 )*+{\bullet}="8";
(60, 80)*+{\bullet}="9";
(20,110)*+{\bullet}="11";
(80,110 )*+{\bullet}="12";
( 10, 130)*+{\bullet}="13";
(  70,130)*+{\bullet}="14";
( 40,90)*+{(2)}="21";( 45,120)*+{(1)}="22";
( 15,110)*+{(5)}="23";
( 15,85)*+{(4)}="24";
( 50,65)*+{(3)}="24";
(  -5,20)*+{(6)}="26";
 {\ar@{-->}|-{\scriptscriptstyle  {f}^{x_1;x_2,x_3,x_4 }} "1";"2" };
{\ar@{-->} "3";"1"_{\scriptscriptstyle  {f}^{ x_2; x_3,x_4}} };
{\ar@{<--}|-{\scriptscriptstyle  {f}^{ x_3;x_4 }} "3";"6"};
{\ar@{->}|-{\scriptscriptstyle f^{x_3;x_2,x_4}} "4";"1" };
 {\ar@{->}|-{\scriptscriptstyle   f^{x_1;x_2,x_3 }} "8";"9" };{\ar@{->}|-{\scriptscriptstyle f^{x_2 ; x_4}} "6";"4" };
{\ar@{~>}|-{\scriptscriptstyle  f^{x_3; x_1 ,x_2}}"12";"9"};
{\ar@{->}|-{\scriptscriptstyle   f^{ x_1;x_2}} "11";"12" };
{\ar@{->}|-{\scriptscriptstyle  f^{x_2 }} "13";"11" };
{\ar@{->}|-{\scriptscriptstyle    f^{x_3;x_2}} "11";"8" };
{\ar@{~>}|-{\scriptscriptstyle   f^{ x_1  }}"13"; "14"};
{\ar@{~>}"14";"12"^{\scriptscriptstyle   f^{x_2;x_1 }}};
{\ar@{<- }|-{\scriptscriptstyle f^{x_4 ;x_2, x_3 }} "1";"8" };
 {\ar@{<~ }|-{\scriptscriptstyle f^{x_4 ;x_1,x_2,x_3 }} "2";"9" };
  {\ar@{<-}|-{\scriptscriptstyle f^{x_4 ;x_2  }} "4";"11" };
 {\ar@{<--}|-{\scriptscriptstyle f^{x_4  }} "6";"13" };( 35,-8)*+{ \scriptstyle  {\rm figure}:\; \hat \Sigma_5 }="30";
\endxy
\end{equation*}

The fourth $3$-arrow is $(g_*, \hat \Sigma_4, \hat K_4)$ with
\begin{equation*}\hat
K_4=  {F}^{x_1,x_2 } \rhd' [ {f}^{x_2}\rhd k^{x_1,x_3,x_4  ;x_2}] ,\end{equation*}
 whose   $2$-target is $(g_*,\hat\Sigma_5)$.
  The fifth  $3$-arrow   is $(g_*, \hat \Sigma_5, \hat K_5 )$ with
\begin{equation*}\hat
K_5= [ F^{  x_1, x_2}\cdot   {f}^{x_2} \rhd(  F^{x_1,x_3;x_2  }   \cdot    {f}^{x_3;x_2 }  \rhd  F^{ x_1 ,x_4;x_2,x_3  } )] \rhd'  k^{  x_2,x_3 ,x_4}  ,\end{equation*}
whose  $2$-target  is $(g_*,\hat\Sigma_6)$, where the part before $ \rhd'$ is the whiskering corresponding to  the composition of $2$-arrows (1), (2) and (3)  in the figure $\hat\Sigma_5$.

The last $3$-arrow is $(g_*, \hat \Sigma_6, \hat K_6 )$, interchanging $2$-arrows         $(4)$ and $ (3)$ in the figure $\hat\Sigma_6$, with
  \begin{equation*}\hat
K_6=  [  F^{  x_1, x_2}\cdot   {f}^{x_2} \rhd F^{x_1,x_3;x_2  } ] \rhd'\left[{F}^{x_2 ,x_3  }  \rhd'\left\{ ( {F}^{x_2 ,x_3  })^{-1}, (  {f}^{x_2} {f}^{  x_3; x_2})\rhd  F^{x_1 ,x_4 ; x_2, x_3}\right \}\right ] , ,\end{equation*}
 whose   $2$-target  is $(g_*,\Sigma_8)$ in the last subsection.
 The $0$-th $3$-arrow is
\begin{equation*}
 \hat K_0=\pi_L \circ {F}_3(\Omega_{x_1,x_2,x_3;x_4})=\left[{f}^{x_1}\rhd\left(  {f}^{x_2;x_1  } \rhd F^{x_3,x_4;x_1 ,x_2}\cdot    F^{ x_2,x_4;x_1 }\right )\cdot F^{ x_1 ,x_4 }\right]\rhd' ({f}^{x_4} \rhd{k}^{x_1,x_2,x_3;x_4}),
\end{equation*}
 where the part before $ \rhd'$ is the whiskering corresponding to  the composition of $2$-arrows (1), (2) and (3) in the figure $\hat \Sigma_1$.

The derivative  $\frac {\partial^3\hat K_0 \hat  K_1 \cdots\hat
 K_6}{\partial x_1\partial x_2\partial x_3\partial x_4}$ at $(0,0,0,0)$ gives
\begin{equation}\label{eq:C+}\begin{split}&v_4C(v_1,v_2,v_3)+A( v_4 ) \rhd C(v_1,v_2,v_3) -
  \{B(v_1,v_2), B(v_3,v_4 )\}+\{B(v_2,v_4), B(v_1,v_3)\}\\&+    v_2C(v_1,v_3,v_4)+A( v_2 ) \rhd C(v_1,v_3,v_4)-\{B(v_2,v_3), B(v_1,v_4)\}.
\end{split}\end{equation}
 \begin{equation*}
\xy 0;/r.17pc/:
(0,0 )*+{\bullet}="1";
(60, 0)*+{\bullet}="2";
(-10,20)*+{\bullet}="3";
( 10, 50)*+{\bullet}="6";
(0,80 )*+{\bullet}="8";
(60, 80)*+{\bullet}="9";
(-10,100)*+{\bullet}="10";
(20,110)*+{\bullet}="11";
(80,110 )*+{\bullet}="12";
( 10, 130)*+{\bullet}="13";
(  70,130)*+{\bullet}="14";
( 40,90)*+{(2)}="21";( 45,120)*+{(1)}="22";
(  5,60)*+{(6)}="23";
(  5,105)*+{(4)}="24";
( 30,40)*+{(3)}="24";
(  -5,65)*+{(5)}="26";
 {\ar@{-->}|-{\scriptscriptstyle  {f}^{x_1;x_2,x_3,x_4 }} "1";"2" };
{\ar@{-->} "3";"1"_{\scriptscriptstyle  {f}^{ x_2; x_3,x_4}} };
{\ar@{<--} "3";"6"^{\scriptscriptstyle  {f}^{ x_3;x_4 }} };
 {\ar@{->}|-{\scriptscriptstyle   f^{x_1;x_2, x_3}} "8";"9" };
{\ar@{->}|-{\scriptscriptstyle  f^{ x_2; x_3}} "10";"8" };
{\ar@{<-} "10";"13"^{\scriptscriptstyle   f^{ x_3 }} };
{\ar@{~>}|-{\scriptscriptstyle  f^{x_3; x_1 ,x_2}}"12";"9"};
{\ar@{->}|-{\scriptscriptstyle   f^{ x_1;x_2}} "11";"12" };
{\ar@{->}|-{\scriptscriptstyle  f^{x_2 }} "13";"11" };
{\ar@{->}|-{\scriptscriptstyle    f^{x_3;x_2}} "11";"8" };
{\ar@{~>}|-{\scriptscriptstyle   f^{ x_1  }}"13"; "14"};
{\ar@{~>}"14";"12"^{\scriptscriptstyle   f^{x_2;x_1 }}};
{\ar@{<- }|-{\scriptscriptstyle f^{x_4 ;x_2, x_3 }} "1";"8" };
 {\ar@{<~ }|-{\scriptscriptstyle f^{x_4 ;x_1,x_2,x_3 }} "2";"9" };
 {\ar@{<- }|-{\scriptscriptstyle f^{x_4 ; x_3  }} "3";"10" };
  {\ar@{<--}|-{\scriptscriptstyle f^{x_4  }} "6";"13" };( 35,-8)*+{ \scriptstyle  {\rm figure}:\; \hat \Sigma_6 }="30";
\endxy
\end{equation*}
\subsection{The covariance of the $3$-dimensional  holonomies }
 Denote $K_8=\hat K_6^{-1},\ldots, K_{14}=\hat K_0^{-1}$.
The {\it  $3$-dimensional  holonomy} is defined as $\mathscr H_F=
K_1K_2\cdots K_{14}$. Then
\begin{equation*}
  \frac {\partial^4\mathscr H_F}{\partial x_1\partial x_2\partial x_3\partial x_4} (0,0,0,0) =(\ref{eq:C-})  -  (\ref{eq:C+}) =\Omega_3.
\end{equation*}

In our construction above, $\partial\Theta$ is the composition of $14$  $3$-arrows, say $\vartheta_1,\ldots,\vartheta_{14}$, with the $2$-target of $\vartheta_j$  coinciding with the $2$-source of $\vartheta_{j+1}$'s , i.e.,
\begin{equation}\label{eq:s=t}
     t_2(\vartheta_1)=s_2(\vartheta_2),\ldots,t_2(\vartheta_{14})=s_2(\vartheta_1).
\end{equation} Denote
 $\widetilde{F}(\vartheta_j)=(*,*,\widetilde{K}_j)$ and $F(\vartheta_j)=(*,*,K_j)$.
The $0$- and $1$-source of $\vartheta_j$ are independent of $j$. Denote
$f:=s_1(\vartheta_j)$, $c:=s_0(\vartheta_j)$.
The naturality (\ref{eq:naturality0})-(\ref{eq:naturality1}) of the lax-natural transformation $\Psi:F\rightarrow\widetilde{F}$ implies
\begin{equation}\label{eq:naturality*}\Psi_{s_2(\vartheta_j)}\cdot  \widetilde{F}_3(\vartheta_j )
   = \Psi_f \rhd'\left[\Psi_c\rhd F_3( \vartheta_j)\right]\cdot \Psi_{t_2(\vartheta_j)}.
\end{equation}
Thus, $\Psi_f \rhd'\left[\Psi_c \rhd K_j\right]=\Psi_{s_2(\vartheta_j)}\cdot \widetilde{K}_j\cdot \Psi_{t_2(\vartheta_j)}^{-1}$, which implies that
\begin{equation*}
 \Psi_f \rhd'\left[\Psi_c\rhd (K_1K_2\cdots K_{14}) \right] =\Psi_{s_2(\vartheta_1)}\widetilde{K}_1\widetilde{K}_2\cdots \widetilde{K}_{14}\Psi_{s_2(\vartheta_1)}^{-1}.
\end{equation*} by (\ref{eq:s=t})  and both $\rhd$ and $\rhd'$ being automorphisms.
This is the covariance   of the $3$-dimensional  holonomy under lattice $3$-gauge transformations.

\section{Discussion}
The $3$-dimensional  holonomy is $3$-gauge invariant. We can use the construction in section 5 and 6 to give the construction of a
non-Abelian $3$-form lattice gauge theory.
 It is interesting to give a
lattice $3$-BF theory (cf. \cite{P03} \cite{GPP} for the $2$-gauge   case),
a combinatorial construction of  the  topological higher
gauge theory as a state sum model. These models are expected to be trivially renormalizable, i.e.,
 independent of the chosen triangulation. Then they will give topological invariants of manifolds.

In the standard lattice gauge theory, the most general  gauge invariant expressions  are
{\it spin networks}, the generalization  of Wilson loops that includes branchings of the
lines with intertwiners of the gauge group at the branching points. The most general $2$-gauge invariant expressions will be given by coloured branched surfaces, i.e., by some sort of {\it spin foams}. It is quite interesting to consider the most general $3$-gauge invariant expressions.

Barrett-Crane-Yetter state sum model of quantum gravity \cite{BC}  is equivalent to a lattice $2$-gauge theory. How about  the $3$-version counterpart of  quantum gravity?


\begin{thebibliography}{20}
\bibitem  {SSS09} {\sc
Sati, H., Schreiber, U. and Stasheff, J.},
$L_\infty$-algebra connections and applications to string- and Chern-Simons $n$-transport, in {\it Quantum field theory}, 303-424, Birkh\"auser, Basel, 2009.



\bibitem  {SSS} {\sc
Sati, H., Schreiber, U. and Stasheff, J.},  Twisted differential string and fivebrane structures, {\sl Comm. Math. Phys.\/} {\bf 315} (2012), no. 1, 169-213.



 \bibitem  {ACJ} {\sc   Aschieri, P. Cantini, L. and Jurco, B.}, Nonabelian bundle gerbes, their differential geometry and gauge theory, {\sl  Comm. Math. Phys.\/} {\bf 254}   (2005) 367-400.

\bibitem  {W} {\sc Wockel, C.}, Principal $ 2$-bundles and their gauge $2$-groups, {\sl  Forum Math.\/} {\bf 23} (2011), no. 3, 565-610.


\bibitem  {BS} {\sc  Baez, J. and  Schreiber, U.},
Higher gauge theory, {\it  Categories in algebra, geometry and mathematical physics},
Contemp. Math. {\bf 431}, 7-30, Amer. Math. Soc., Providence, RI, 2007.

\bibitem  {BH11} {\sc  Baez, J. and  Huerta, J.},  An invitation to higher gauge theory, {\sl  Gen. Relativity Gravitation\/} {\bf 43} (2011), no. 9, 2335-2392.

\bibitem  {GP04} {\sc Girelli, F. and  Pfeiffer, H.},
  Higher gauge theory: differential versus integral formulation, {\sl J. Math. Phys.\/} {\bf 45} (2004), no. 10, 3949-3971.

\bibitem  {P03} {\sc    Pfeiffer, H.},
  Higher gauge theory and a non-abelian generalization of $2$-form electrodynamics, {\sl Ann. Physics\/} {\bf 308} (2003), no. 2, 447-477.

 \bibitem  {PS} {\sc
  Palmer, S. and S\"amann, C.},
$M$-brane models from non-abelian gerbes, {\sl J. H. E. P.\/} {\bf 07} (2012) 010, 16 pp.


\bibitem  {GPP} {\sc Girelli, F. Pfeiffer, H. and Popescu, E. M.}, Topological higher gauge theory: from BF to BFCG theory, {\sl  J. Math. Phys.\/} {\bf 49}(3) (2008), 032503, 17 pp.

 \bibitem  {MM11} {\sc  Martins, J. F. and  Mikovic, A.},
Lie crossed modules and gauge-invariant actions for $2$-BF theories, {\sl
Adv. Theor. Math. Phys.\/} {\bf 15} (2011), no. 4, 1059-1084.

 \bibitem  {SW12} {\sc   S\"amann, C. and   Wolf, M.},
Non-Abelian tensor multiplet equations from twistor space,
arXiv:1205.3108 [hep-th].

\bibitem  {Br94} {\sc Breen, L.},
On the classification of $2$-gerbes and $2$-stacks, {\sl
Ast\'erisque\/} {\bf 225} (1994), 160 pp.

\bibitem  {Br10} {\sc Breen, L.}, Notes on $1$- and $2$-gerbes, in {\it Towards higher categories}, IMA Vol. Math. Appl., {\bf 152}, 193-235, Springer, New York, 2010.

    \bibitem  {St} {\sc Stevenson, D.}, Bundle $2$-gerbes, {\sl  Proc. London Math. Soc. (3)\/} {\bf 88} (2004), no. 2, 405-435.

         \bibitem{Ju}{\sc Jurco
 , B.},  Nonabelian bundle $ 2$-gerbes, {\sl Int. J. Geom. Methods Mod. Phys.\/} {\bf 8} (2011), no. 1, 49-78.

\bibitem  {FSS} {\sc Fiorenza, D.,  Sati, H. and    Schreiber, U.},
The $E_8$ moduli $3$-stack of the $C$-field in $M$-theory, arXiv:1202.2455v1 [hep-th].

\bibitem  {SW13} {\sc   S\"amann, C. and   Wolf, M.},
Six-dimensional superconformal field theories
from principal $3$-bundles over twistor space, arXiv:1305.4870v2 [hep-th].

\bibitem  {SW08} {\sc
Schreiber U. and Waldorf, K.}, Parallel transport and functors, {\sl  J. Homotopy Relat.
Struct.\/} {\bf 4} (2009), 187-244.

\bibitem  {SW11} {\sc
Schreiber U. and Waldorf, K.}, Smooth  functors vs.  differential forms, {\sl
Homology, Homotopy and Applications\/} {\bf  13} (1)  (2011),   143-203.

\bibitem  {MP02} {\sc Mackaay, M. and  Picken, R.},   Holonomy and parallel transport for abelian gerbes, {\sl  Adv. Math.\/} {\bf 170} (2002), 287-219.



\bibitem  {MP10} {\sc  Martins, J. F. and  Picken, R.},  On two-dimensional holonomy, {\sl  Trans. Amer. Math. Soc.\/} {\bf 362} (2010), no. 11, 5657-5695.

\bibitem    {MP11-2}
{Martins, J. F. and  Picken, R.},  Surface holonomy for non-abelian $2$-bundles via double groupoids, {\sl Adv. Math. \/} {\bf 226} (2011), no. 4, 3309-3366.


\bibitem  {MP11} {\sc  Martins, J. F. and  Picken, R.},  The fundamental Gray $3$-groupoid of a smooth manifold and local $3$-dimensional holonomy based on a $2$-crossed module, {\sl  Differential Geom. Appl.\/} {\bf 29} (2011), no. 2, 179-206.

\bibitem  {KV} {\sc
Kapranov, M. and Voevodsky, V.}, $ 2$-categories and Zamolodchikov tetrahedra equations, in: {\it Algebraic Groups and
Their Generalizations: Quantum and Infinite-dimensional Methods}, University Park, PA, 1991, in: Proc. Sympos.
Pure Math. {\bf 56}, Amer. Math. Soc., Providence, RI, 1994, pp. 177-259.


\bibitem  {GPS} {\sc
Gordon, R. Power, A. and Street, R.}, {\it Coherence for tricategories}, {\sl Mem. Amer. Math. Soc.\/} {\bf 117} (1995) 558.




\bibitem  {G06} {\sc Gurski, N.}, {\it An algebraic theory of tricategories}, PhD thesis, University of Chicago, 2006, see http://www.math.yale.edu/~mg622/tricats.pdf.



\bibitem  {Ber} {\sc
Berger, C.}, Double loop spaces, braided monoidal categories and algebraic $3$-type of space, in  {\it  Higher Homotopy Structures in Topology and Mathematical
Physics},   in  {\sl Contemp. Math.\/} {\bf  227}, Amer. Math. Soc., Providence, RI, 1999, pp. 49-66.

\bibitem  {Gr} {\sc Gray, J.}, {\it Formal category theory: adjointness in $2$-categories}, Lecture Notes in Math. {\bf  391}, Springer-Verlag, Berlin, 1974.

\bibitem  {C99} {\sc Crans, S.},
A tensor product for Gray-categories, {\sl Theory Appl. Categ.\/} {\bf 5} (2) (1999) 12-69.



  \bibitem  {RS08} {\sc  Roberts, D.   and  Schreiber, U.},
  The inner automorphism $3$-group of a strict $2$-group, {\sl J. Homotopy Relat. Struct.\/} {\bf 3} (2008), no. 1, 193-244.


\bibitem  {Bra} {\sc Batanin,  M.}, Monoidal globular categories as a natural environment
for the theory of weak $n$-categories, {\sl
Advances in Mathematics\/} {\bf 136} (1998), 39-103.




\bibitem  {K} {\sc Kamps, K. and Porter, T.}, $ 2$-groupoid enrichments in homotopy theory and algebra,  {\sl K-Theory\/} {\bf  25} (4) (2002) 373-409.


\bibitem  {BG} {\sc
Brown, R. and Gilbert, N.},  Algebraic models of $3$-types and automorphism structures for crossed modules, {\sl Proc. Lond. Math. Soc.(3)\/} {\bf  59} (1) (1989) 51-73.




\bibitem  {CCG} {\sc  Carrasco, P.,  Cegarra, A. and  Garz\'on, A.},
The classifying space of a categorical crossed module, {\sl
Math. Nachr.\/} {\bf 283} (2010), no. 4, 544-567.




\bibitem  {CP} {\sc Caetano, A. and Picken, R.},  An axiomatic definition of holonomy, {\sl  Internat. J. Math.\/} {\bf  5} (1994), no. 6, 835-848.
















\bibitem  {BC} {\sc
Barrett, J,  and Crane, L.}, Relativistic spin networks and quantum gravity, {\sl
J. Math. Phys.\/} {\bf 39} (1998), no. 6, 3296-3302.








\end{thebibliography}
\end{document}